\newtheoremstyle{localthm}
	{5pt} 
	{5pt} 
	{\sl} 
	{} 
	{\bf} 
	{{\rm.}} 
	{.7em} 
	{} 
\theoremstyle{localthm}
\newtheorem{Theorem}{Theorem}[section]
\newtheorem{Corollary}[Theorem]{Corollary}
\newtheorem{Proposition}[Theorem]{Proposition}
\newtheorem{Lemma}[Theorem]{Lemma}
\newtheoremstyle{localrem}
	{5pt} 
	{5pt} 
	{\rm} 
	{} 
	{\bf} 
	{{\rm.}} 
	{.7em} 
	{} 
\theoremstyle{localrem}
\newtheorem{Definition}[Theorem]{Definition}
\newtheorem{Remark}[Theorem]{Remark}
\newtheorem{Example}[Theorem]{Example}
\def\Ex{\mathop{\mathrm{I\!E}}\nolimits}
\newcommand{\M}{\mathbb{M}}
\newcommand{\R}{\mathbb{R}}
\newcommand{\V}{\mathbb{V}}
\newcommand{\W}{\mathbb{W}}
\newcommand{\LL}{\mathcal{L}}
\newcommand{\NN}{\mathcal{N}}
\newcommand{\TT}{\mathcal{T}}
\newcommand{\argmin}{\mathop{\mathrm{arg\,min}}}
\newcommand{\tr}{\mathop{\mathrm{tr}}\nolimits}
\def\hat{\widehat}
\newcommand{\Rqq}{\R_{}^{q\times q}}
\newcommand{\Rqqns}{\R_{\rm ns}^{q\times q}}
\newcommand{\Rqqorth}{\R_{\rm orth}^{q\times q}}
\newcommand{\Rqqsym}{\R_{\rm sym}^{q\times q}}
\newcommand{\Rqqsympd}{\R_{{\rm sym},+}^{q\times q}}
\begin{document}

\title{Geodesic Convexity and Regularized Scatter Estimators}
\author{Lutz D\"umbgen\thanks{The University of Bern, Switzerland} ~and  
David E. Tyler\thanks{Rutgers, The State University of New Jersey, USA. Research partially supported by the National Science Foundation Grant No.\  DMS-1407751. Any opinions, findings and conclusions or recommendations expressed in this material are those of the author(s) and do not necessarily reflect those of the National Science Foundation.}}

\date{July 2016}

\maketitle

\begin{abstract}
As observed by \nocite{Auderset_etal_2005}{Auderset et al.\ (2005)} and \nocite{Wiesel_2012}{Wiesel (2012)}, viewing covariance matrices as elements of a Riemannian manifold and using the concept of geodesic convexity provide useful tools for studying $M$-estimators of multivariate scatter. In this paper, we begin with a mathematically rigorous self-contained overview of Riemannian geometry on the space of symmetric positive definite matrices and of the notion of geodesic convexity. The overview contains both a review as well as new results. In particular, we introduce and utilize first and second order Taylor expansions with respect to geodesic parametrizations. This enables us to give sufficient conditions for a function to be geodesically convex. In addition, we introduce the concept of geodesic coercivity, which is important in establishing the existence of a minimum to a geodesic convex function. We also develop a general partial Newton algorithm for minimizing smooth and strictly geodesically convex functions. We then use these results to generate a fairly complete picture of the existence, uniqueness and computation of regularized $M$-estimators of scatter defined using additive geodescially convex penalty terms. Various such penalties are demonstrated which shrink an estimator towards the identity matrix or multiples of the identity matrix. Finally, we propose a cross-validation method for choosing the scaling parameter for the penalty function, and illustrate our results using a numerical example.
\end{abstract}

\paragraph{AMS subject classifications:}
62H12, 65C60, 90C53.

\paragraph{Key words:}
Matrix exponential function, matrix logarithm, Newton-Raphson algorithm, penalization, Riemannian geometry, scale invariance, Taylor expansion.

\newpage

\tableofcontents

\newpage

\section{Introduction}
\label{sec:Introduction}

High dimensional multivariate data is becoming increasingly prevalent, with the estimation of the covariance matrix for such data sets being an important fundamental problem. The classical estimator, i.e.\ the sample covariance matrix, though, is known to be highly non-robust under longer tailed alternatives to the multivariate normal distribution, as well as being highly non-resistant to outliers in the data. Consequently, there have been numerous proposals for robust alternatives to the sample covariance matrix, with one of the earliest alternatives being the $M$-estimators of multivariate scatter \cite{Maronna_1976, Huber_1981}.
As with the multivariate $M$-estimators of scatter, most of the subsequent proposals for robust estimators of multivariate scatter are affine equivariant. However, for sparse multivariate data, that is when the sample size $n$ is less than or not much larger than the dimension of the data $q$, such estimators of scatter do not differ greatly from the sample covariance matrix, and for the case $q \le n$, they are simply proportional to the sample covariance, see \cite{Tyler_2010}.

Even when the distribution is normal and there are no outliers in the data set, the sample covariance matrix can still be unreliable for sparse data sets due to the large number of parameters being estimated, namely $q(q+1)/2$. Consequently, one may wish to model the covariance matrix using less parameters, or one may wish to give preference to certain covariance structures and pull the estimator towards such structures via penalization or regularization techniques. Traditionally, research on robust estimators of multivariate scatter have not taken these concerns into account, and the statistics literature has focused primarily on the unrestricted robust estimation of the scatter matrix. Within the signal processing community, though, there has been an increasing interest in the $M$-estimators of multivariate scatter \cite{Abramovich_etal_2013, Besson_etal_2013, Conte_etal_2002, Gini_Greco_2002, Ollila_Koivunen_2003, Ollila_Koivunen_2009,  Ollila_Tyler_2012, Ollila_etal_2003, Ollila_etal_2012, Pascal_etal_2008, Soloveychik_Wiesel_2013, Wiesel_2012, Zhang_etal_2013} and more recently an interest in developing regularized versions of them \cite{Chen_etal_2011, Couillet_McKay_2014, Ollila_Tyler_2014, Pascal_etal_2014, Wiesel_2012, Wiesel_2012b}. An important mathematical contribution arising from the area of signal processing is the realization in \cite{Wiesel_2012} that treating the multivariate scatter matrices as elements in a Riemannian manifold and using the notion of geodesic convexity can be very useful, leading to elegant theory as well as new results. These concepts had been applied previously within the statistics literature \cite{Auderset_etal_2005}, but only for the specific case of the distribution free $M$-estimator of multivariate scatter. More recently they have been used in \cite{Sra_Hosseini_2013} and implicitly in the survey paper \cite{Duembgen_etal_2015} on $M$-functionals of multivariate scatter.

The purpose of the present paper is threefold. We first review the standard Riemannian geometry on the space of symmetric positive definite matrices and the notion of geodesic convexity in Section~\ref{sec:G-Convexity}. In particular we introduce and utilize first and second order Taylor expansions of such functions with respect to geodesic parametrizations. Such expansions allow us to introduce sufficient conditions for a function to be geodesically convex. In addition we introduce the concept of geodesic coercivity, which is important in establishing the existence of both the $M$-estimators of scatter and their regularized versions. As in classical convex analysis, a real valued function on the space of symmetric positive definite matrices which is continuous, strictly geodesically convex and coercive has a unique minimizer.

Our second contribution is a general analysis of regularized $M$-estimators of multivariate scatter with respect to geodesic convexity and coercivity in Section~\ref{sec:Regularized.scatter}. Our starting point are results of \cite{Wiesel_2012, Zhang_etal_2013} and \cite{Duembgen_etal_2015} which show that the log-likelihood type functions underlying $M$-estimators of multivariate scatter are geodesically convex under rather general conditions. We show that various penalty functions favoring matrices which are close to the identity matrix or to multiples of the identity matrix are geodesically convex. This leads to a rather complete picture concerning existence and uniqueness of regularized $M$-functionals of scatter. It also provides new results on regularized sample covariance matrices when using penalty functions which are geodesically convex but not convex in the inverse of the covariance matrix. Furthermore, we propose a cross-validation method for choosing a scaling parameter for the penalty function.

Finally, we present a general partial Newton algorithm to minimize a smooth and strictly geodesically convex function in Section~\ref{sec:Algorithm}. This algorithm is a generalization of the partial Newton method of \cite{Duembgen_etal_2016} with guaranteed convergence. We illustrate this method with a numerical example in Section~\ref{sec:Example}.

All proofs and some auxiliary results are deferred to Section~\ref{sec:Proofs} and to a supplement \ref{sec:Auxiliary}. We begin with some notation and a brief background review.

\section{Background and Notation}
\label{sec:Background}

Let the space of symmetric matrices in $\R^{q\times q}$ be denoted by $\Rqqsym$, and let $\Rqqsympd$ stand for its subset of positive definite matrices, i.e.\ symmetric matrices with eigenvalues in $\R_+ := (0,\infty)$. For a distribution $Q$ on $\R^q$ with given center $0$ and a function $\rho : [0,\infty) \to \R$, an $M$-functional of multivariate scatter can be defined as a matrix which minimizes the objective function
\begin{equation}
\label{eq:Lrho}
	L_\rho(\Sigma,Q)
	\ := \ \int \bigl[ \rho(x^\top\Sigma^{-1}x) - \rho(\|x\|^2) \bigr] \, Q(dx)
		+ \log \det(\Sigma)
\end{equation}
over $\Sigma \in \Rqqsympd$. When $Q = Q_n$ represents an empirical distribution, then the minimizer defines an $M$-estimator of scatter, and the objective function can be viewed as a generalization of the negative log-likelihood function arising from an elliptical distribution \cite{Maronna_1976}. The term $\rho(\|x\|^2)$ is not needed when working with empirical distributions. In general, though, this term allows us to be able to consider distributions $Q$ for which $\int | \rho(\|x\|^2) | \, Q(dx) = \infty$.

For continuous $\rho$ with sill $a_o > q$, defined below, a minimizer $\Sigma \in \Rqqsympd$ to $L_\rho(\Sigma,Q_n)$ is known to exist, provided no subspace contains too may data points, or specifically if the following condition holds for $Q=Q_n$ \cite{Kent_Tyler_1991}.

\paragraph{Condition~1.}
For all linear subspaces $\V \subset \R^q$ with $0 \le \dim(\V) < q$,
\[
	Q(\V) < 1 - \frac{\{q - \dim(\V)\}}{a_o} ,
\]
where $a_o = \sup\{a : s^a\exp\{-\rho(s)\} \to 0 \ \text{as} \ s \to \infty \}$. (Note that the function $\rho$ in the present paper corresponds to $2\rho$ in \cite{Kent_Tyler_1991} and other publications.)

If $\rho$ is differentiable, then the critical points, and hence any minimizer, of \eqref{eq:Lrho} satisfy the $M$-estimating equations
\begin{equation}
\label{eq:Mee}
	\Sigma \ = \ \int u(x^\top\Sigma^{-1}x) xx^\top \, Q_n(dx)
\end{equation}
where $u(s) := \rho'(s)$. Furthermore, if we define $\psi(s) := su(s)$, then the sill $a_o$ equals the limit $\psi(\infty) = \lim_{s \to \infty} \psi(s)$ whenever the latter exists.

To assure the uniqueness of a minimizer to $L_\rho(\Sigma,Q_n)$ or a unique solution to the $M$-estimating equations \eqref{eq:Mee}, further conditions on the function $\rho$ are needed. It has been know since the introduction of the $M$-estimators of scatter \cite{Maronna_1976, Huber_1981} that one such sufficient condition is the following.

\paragraph{Condition~2.} The function $\rho$ is differentiable, with $u(s)$ being non-increasing and $\psi(s)$ being non-decreasing and strictly increasing for $\psi(s) < \psi(\infty)$.
\smallskip

\noindent
The proof of uniqueness given in \cite{Maronna_1976, Huber_1981} assumes more restrictive conditions on the distribution $Q$ than that given by Condition~1, although it is shown in \cite{Kent_Tyler_1991} that Conditions~1 and 2 are sufficient for the existence of a unique solution to \eqref{eq:Mee}, i.e.\ for the existence and uniqueness of the $M$-estimator of scatter. Some common examples of $M$-estimators satisfying Condition~2 are Huber's $M$-estimator for which $\psi(s) = K\min(s/c,1)$ with tuning constants $c > 0$ and $K > p$, and the maximum likelihood estimators derived from an elliptical t-distribution on $\nu > 0$ degrees of freedom, for which $\psi(s) = (\nu+q)s/(\nu+s)$. 

The above conditions lack some intuition as to why \eqref{eq:Lrho} has a unique minimum. The proofs of uniqueness given in \cite{Maronna_1976, Huber_1981, Kent_Tyler_1991} are based on a study of the $M$-estimating equations \eqref{eq:Mee}. Recall that for the classical case when $L_\rho(\Sigma,Q_n)$ corresponds to the negative log-likelihood under a $q$-dimensional normal distribution with mean zero and covariance $\Sigma$, i.e.\ when $\rho(s) = s$, then $L_\rho(\Sigma,Q_n)$ is strictly convex in $\Sigma^{-1}$ and hence has a unique minimizer, namely the sample covariance matrix.  For general $\rho$, however, $L_\rho(\Sigma,Q_n)$ tends not to be convex in $\Sigma^{-1}$.

Important insight into the function $L_\rho(\Sigma,Q_n)$ has recently been given within the area of signal processing. In particular, it is shown in \cite{Zhang_etal_2013} that if the function $\rho(e^x)$ is convex in $x \in \R$, then $L_\rho(\Sigma,Q_n)$ is geodesically convex in $\Sigma \in \Rqqsympd$, and that if the function $\rho(e^x)$ is strictly convex in $x \in \R$, then $L_\rho(\Sigma,Q_n)$ is strictly geodesically convex in $\Sigma \in \Rqqsympd$ provided the data span $\R^q$. Consequently, when Condition~1 holds, then the minimizer set for $L_\rho(\Sigma,Q_n)$ is a geodesically convex set when $\rho(e^x)$ is convex, and the minimizer is unique when $\rho(e^x)$ is strictly convex. The results on geodesic convexity, or g-convexity, not only give a mathematically elegant insight into uniqueness, but they also yield more general results. For example, $\rho(s)$ need not be differentiable. Also, when $\rho(s)$ is differentiable, then $\rho(e^x)$ is (strictly) convex in $x \in \R$ if and only if $\psi(s)$ is (strictly) increasing, with no additional conditions on $u(s)$ being needed, i.e. $u(s)$ need not be non-increasing.

The notion of g-convexity also allows for the development of new results regarding minimizing $L_\rho(\Sigma,Q)$ over a g-convex subset of $\Rqqsympd$, as well as minimizing a penalized objective function when the penalty function is also g-convex. Before addressing these problems, though, we provide a thorough review and present some new results on the notion of geodesic convexity.

\begin{Remark}
Note that our objective function \eqref{eq:Lrho} assumes $0$ to be the center of the distribution $Q$. In various applications in signal processing the center of $Q$ is often known or
hypothesized, and consequently all the aforementioned signal processing references presume a known center. In more traditional location-scatter problems, one could embed the location-scatter problem in dimension $q$ into a scatter-only problem in dimension $q+1$ as explained in \cite{Kent_Tyler_1991, Duembgen_etal_2015}. But regularization in this setting is less clear. If the location parameter is merely a nuisance parameter, then one can first center the data using an auxiliary estimate of location. Alternatively, the location parameter can be removed by symmetrization, i.e.\ instead of $Q$ one considers the symmetrized distribution $\LL(X - X')$ with independent random vectors $X, X' \sim Q$; see \cite{Duembgen_1998, Duembgen_etal_2015} for further details.
\end{Remark}

\section{Geodesic Convexity}
\label{sec:G-Convexity}

\subsection{A Riemannian geometry for scatter matrices}

We collect a few basic ideas about positive definite matrices and their geometry. For a full treatment we refer to \cite{Bhatia_2007}. The Euclidean norm of a vector $v \in \R^p$ is denoted by $\|v\| = \sqrt{v^\top v}$. For matrices $A, B$ with identical dimensions we write
\[
	\langle A, B\rangle \ := \ \tr(A^\top B)
	\quad\text{and}\quad
	\|A\| \ := \ \sqrt{\langle A, A\rangle} ,
\]
so $\|A\|$ is the Frobenius norm of $M$.

Equipped with this inner product $\langle\cdot,\cdot\rangle$ and norm $\|\cdot\|$, the matrix space $\Rqqsym$ is a Euclidean space of dimension $q(q+1)/2$, and $\Rqqsympd$ is an open subset thereof. But in the context of scatter estimation an alternative geometry turns out to be useful. Let $\hat{\Sigma}_n$ be the sample covariance matrix of independent random vectors $X_1, X_2, \ldots, X_n$ with distribution $\mathcal{N}_q(\mu,\Sigma)$ with $\mu \in \R^q$ and $\Sigma \in \Rqqsympd$. It is well known that
\[
	\hat{\Sigma}_n \ =_{\LL}^{} \ \Sigma^{1/2} (I_q + A_n) \Sigma^{1/2}
\]
with the identity matrix $I_q \in \Rqq$ and a random matrix $A_n \in \Rqqsym$. The distribution of $A_n$ depends only on $n$ and is invariant under transformations $A_n \mapsto U A_n U^\top$ with $U \in \Rqqorth$, the set of orthogonal matrices in $\Rqq$. Moreover, $A_n \to_p 0$ as $n \to \infty$. Thus one could measure the distance between $\hat{\Sigma}_n$ and $\Sigma$ by
\[
	\|A_n\| \ = \ \|\hat{\Sigma}_n - \Sigma\|_{\Sigma}
\]
with the local norm
\[
	\|\Delta\|_\Sigma \ := \ \|\Sigma^{-1/2} \Delta \Sigma^{-1/2}\|
	\ = \ \sqrt{ \tr(\Delta \Sigma^{-1} \Delta \Sigma^{-1})}
\]
corresponding to the local inner product
\[
	\langle \Delta, \tilde{\Delta}\rangle_\Sigma
	\ := \ \langle \Sigma^{-1/2}\Delta\Sigma^{-1/2},
		\Sigma^{-1/2}\tilde{\Delta}\Sigma^{-1/2}\rangle
	= \tr(\Delta\Sigma^{-1}\tilde{\Delta}\Sigma^{-1})
\]
of matrices $\Delta, \tilde{\Delta} \in \Rqqsym$.

To define a distance between two arbitrary matrices $\Sigma_0, \Sigma_1 \in \Rqqsympd$, we consider a smooth path $M$ connecting them. That means, $M : [0,1] \to \Rqqsympd$ is piecewise continuously differentiable with $M(0) = \Sigma_0$ and $M(1) = \Sigma_1$. Then we define the length of $M$ to be
\[
	L(M) \ := \ \int_0^1 \|\dot{M}(t)\|_{M(t)} \, dt .
\]
Denoting with $\Rqqns$ the set of nonsingular matrices in $\Rqq$, one can easily verify that for any $B \in \Rqqns$, the new path
\[
	M_B(t) \ := \ B M(t) B^\top
\]
connects the matrices $B\Sigma_0B^\top$ and $B\Sigma_1B^\top$ and has length
\[
	L(M_B) \ = \ L(M) .
\]
Here is a well-known key result about shortest paths in $\Rqqsympd$. For the reader's convenience we provide a self-contained proof in Supplement~\ref{sec:Auxiliary}.

\begin{Theorem}
\label{thm:geodesics}
Let $M : [0,1] \to \Rqqsympd$ be a path connecting $M(0) = \Sigma_0$ and $M(1) = \Sigma_1$. Then
\[
	L(M) \ \ge \ \bigl\| \log(\Sigma_0^{-1/2} \Sigma_1^{} \Sigma_0^{-1/2}) \bigr\|
\]
with equality if, and only if,
\[
	M(t) \ = \ \Sigma_0^{1/2} \,
		(\Sigma_0^{-1/2} \Sigma_1^{} \Sigma_0^{-1/2})_{}^{u(t)}
		\, \Sigma_0^{1/2}
\]
for some non-decreasing, piecewise continuously differentiable function $u : [0,1] \to \R$ with $u(0) = 0$ and $u(1) = 1$.
\end{Theorem}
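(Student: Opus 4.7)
The plan is to exploit the congruence invariance $L(M_B) = L(M)$ noted just above the theorem to reduce to a canonical diagonal setting, compute $\|\dot M(t)\|_{M(t)}$ explicitly via a time-varying eigendecomposition of $M(t)$, discard a non-negative cross term, and compare the remainder to the Euclidean length of a curve in $\R^q$.

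For the reduction, I would pick $U \in \Rqqorth$ diagonalizing $\Sigma_0^{-1/2}\Sigma_1^{}\Sigma_0^{-1/2}$ and set $B := U^\top \Sigma_0^{-1/2}$. The congruence $\tilde M(t) := B M(t) B^\top$ has the same length as $M$, so I may assume $\Sigma_0 = I_q$ and $\Sigma_1 = D = \mathrm{diag}(d_1, \ldots, d_q)$ with $d_i > 0$. The inequality to prove becomes $L(M) \ge \|\log D\| = \bigl(\sum_i (\log d_i)^2\bigr)^{1/2}$, and the asserted extremal path becomes $M(t) = D^{u(t)}$.

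Next I would use Rellich--Kato theory to write $M(t) = U(t)\Lambda(t) U(t)^\top$ piecewise $C^1$, with $U(t) \in \Rqqorth$, $\Lambda(t) = \mathrm{diag}(\lambda_1(t), \ldots, \lambda_q(t))$, $\lambda_i(0) = 1$, and $\lambda_i(1) = d_{\pi(i)}$ for some permutation $\pi$. Writing $S := U^\top \dot U$ (skew-symmetric) and $N := U^\top \dot M U = \dot\Lambda + [S, \Lambda]$, a short direct calculation yields
\[
\|\dot M(t)\|_{M(t)}^2 \;=\; \tr\bigl(N\Lambda^{-1} N\Lambda^{-1}\bigr) \;=\; \sum_{i=1}^q \Bigl(\frac{\dot\lambda_i(t)}{\lambda_i(t)}\Bigr)^{\!2} \;+\; \sum_{i \ne j} S_{ij}(t)^2 \,\frac{(\lambda_i(t)-\lambda_j(t))^2}{\lambda_i(t)\lambda_j(t)} .
\]
Dropping the non-negative off-diagonal sum gives $\|\dot M(t)\|_{M(t)} \ge \|\dot\ell(t)\|$ with $\ell(t) := (\log\lambda_1(t), \ldots, \log\lambda_q(t))$. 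Integrating, $L(M) \ge \int_0^1 \|\dot\ell(t)\|\,dt \ge \|\ell(1) - \ell(0)\| = \|\log D\|$, since $\ell(0) = 0$ and $\ell(1)$ is merely a permutation of $(\log d_1, \ldots, \log d_q)$, whose Euclidean norm equals $\|\log D\|$ irrespective of $\pi$.

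Finally I would analyze equality. It forces (i) $S_{ij}(t) = 0$ whenever $\lambda_i(t) \ne \lambda_j(t)$, so $U(t)$ only rotates within current degenerate eigenspaces of $\Lambda(t)$; and (ii) the curve $\ell$ traverses the straight segment from $0$ to $\ell(1)$ with non-decreasing speed, i.e.\ $\log\lambda_i(t) = u(t)\log d_{\pi(i)}$ for a non-decreasing piecewise $C^1$ function $u$ with $u(0) = 0$ and $u(1) = 1$. Combining (i) and (ii), the residual rotational freedom of $U(t)$ lies in the commutant of $\Lambda(t)$, so $U(t)\Lambda(t) U(t)^\top$ collapses to $D^{u(t)}$, and undoing the congruence $B$ recovers the stated closed form. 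The main obstacle I foresee is precisely this last bookkeeping step: handling eigenvalue crossings in the piecewise smooth diagonalization and reconciling the permutation $\pi$ with the freedom of $U(t)$ on degenerate eigenspaces, so that the theorem's parametrization emerges unambiguously rather than up to a redundant factor.
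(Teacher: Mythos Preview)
Your strategy differs genuinely from the paper's. The paper never diagonalizes the moving matrix $M(t)$ as a family; after reducing to $\Sigma_0 = I_q$ it fixes the unit direction $A := \|\log \Sigma_1\|^{-1}\log\Sigma_1$, writes
\[
\|\log\Sigma_1\| \ = \ \bigl\langle A,\, \log M(1) - \log M(0)\bigr\rangle \ = \ \int_0^1 \Bigl\langle A,\, \frac{d}{dt}\log M(t)\Bigr\rangle\,dt \ \le \ \int_0^1 \Bigl\|\frac{d}{dt}\log M(t)\Bigr\|\,dt
\]
by Cauchy--Schwarz, and then proves the \emph{pointwise} inequality $\bigl\|\frac{d}{dt}\log M(t)\bigr\| \le \|\dot M(t)\|_{M(t)}$ via the Daleckii--Krein-type formula for the differential of the matrix logarithm (Lemma~\ref{lem:derivatives.exp.log}) together with $J(x,y)\ge e^{(x+y)/2}$. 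Equality in Cauchy--Schwarz forces $\frac{d}{dt}\log M(t)$ to be a non-negative multiple of $\log\Sigma_1$ for every $t$, whence $\log M(t) = u(t)\log\Sigma_1$ directly, with no eigenvalue-crossing bookkeeping whatsoever.

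Your route has a gap that sits \emph{earlier} than the one you flag. Rellich's theorem delivers a smooth eigendecomposition only for real-analytic one-parameter families; for a path that is merely piecewise $C^1$ (or even $C^\infty$) a $C^1$ choice of the frame $U(t)$ can fail. The standard example is $M(t) = I_2 + e^{-1/t^2}\bigl(\begin{smallmatrix}\cos(2/t)&\sin(2/t)\\ \sin(2/t)&-\cos(2/t)\end{smallmatrix}\bigr)$ for $t\ne 0$, $M(0)=I_2$: the two eigenvalues cross only at $t=0$, yet the eigenspaces rotate infinitely often as $t\to 0$, so no piecewise $C^1$ orthogonal $U(t)$ exists through that point. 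Consequently your key identity $\|\dot M\|_M^2 = \sum_i (\dot\lambda_i/\lambda_i)^2 + \sum_{i\ne j}S_{ij}^2(\lambda_i-\lambda_j)^2/(\lambda_i\lambda_j)$, although correct wherever a differentiable frame is available, is not at your disposal on the whole interval for a generic admissible path, and the permutation/degeneracy bookkeeping you anticipate in the equality case is already downstream of this missing ingredient. The paper's argument sidesteps the problem entirely because $\log:\Rqqsympd\to\Rqqsym$ is globally smooth, so $t\mapsto\log M(t)$ inherits piecewise $C^1$ regularity automatically from $M$, and only a \emph{static} spectral decomposition of $M(t)$ at each fixed $t$ is ever invoked.
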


Note that for a shortest path $M$, its track $\{M(t) : t \in [0,1]\}$ does not depend on the function $u$ but is equal to $\{N(u) : u \in [0,1]\}$ with the special path $N : [0,1] \to \Rqqsympd$ given by $N(u) := \Sigma_0^{1/2} \, (\Sigma_0^{-1/2} \Sigma_1^{} \Sigma_0^{-1/2})_{}^{u} \, \Sigma_0^{1/2}$. Indeed $M(t) = N(u(t))$, and the path $N$ has constant geodesic speed in the sense that for all $u \in [0,1]$,
\[
	\|\dot{N}(u)\|_{N(u)} \ = \ L(N) = L(M) .
\]
 
The preceding considerations involve matrix powers and logarithms. In general, a real valued function $h : \R \to \R$ can be extended to a matrix-valued function $h: \Rqqsym \rightarrow \Rqqsym$ in the following manner: Let $A \in \Rqqsym$ have spectral decomposition $A = U D(\lambda) U^\top$ with a matrix $U \in \Rqqorth$ of orthonormal eigenvectors of $A$ and a diagonal matrix $D(\lambda)$ with diagonal elements given by $\lambda = (\lambda_i)_{i=1}^q \in \R^q$ , then
\[
	h(A) \ := \ U D(h(\lambda)) U^\top
	\quad\text{for} \ s \in \R,
\]
using the convention $h(\lambda) := (h(\lambda_i))_{i=1}^q$. If $h$ is defined only on $\R_+$, then we restrict $A$ to $\Rqqsympd$ and obtain a matrix-valued function $h: \Rqqsympd \rightarrow \Rqqsym$. So, for $\lambda \in \R_+^q$, 
\[
	A^s \ := \ U D(\lambda^s) U^\top
	\quad\text{for} \ s \in \R
\]
and
\[
	\log(A) \ := \ U D(\log \lambda) U^\top .
\]
Also, for $A \in \Rqqsym$, 
\[
	\exp(A) \ := \ U D(e^\lambda) U^\top .
\]
This is consistent with the more general definition of a matrix exponential
\[
	\exp(A) \ := \ \sum_{k=0}^\infty \frac{A^k}{k!}
\]
which is defined for any arbitrary matrix $A \in \Rqq$.

Analogous to the real setting,  $\exp : \Rqqsym \to \Rqqsympd$ is a bijection with inverse mapping $\log : \Rqqsympd \to \Rqqsym$. For $A \in \Rqqsympd$,
\[
	A^s \ = \ \exp(s \log(A)) .
\]
Hence Theorem~\ref{thm:geodesics} shows that a shortest path between two matrices $\Sigma_0, \Sigma_1 \in \Rqqsympd$ is given by
\[
	M(t) \ := \ \Sigma_0^{1/2}
		\exp \bigl( t \log(\Sigma_0^{-1/2} \Sigma_1^{} \Sigma_0^{-1/2}) \bigr)
		\Sigma_0^{1/2} , \quad t \in [0,1] .
\]
Sometimes it is convenient to consider other factorizations of $\Sigma_0$, i.e.\ other square roots. If we write $\Sigma_0 = BB^\top$ for some $B \in \Rqqns$, then
\[
	M(t) \ = \ B \exp(tA) B^\top
	\quad\text{with}\quad
	A \ := \ \log(B^{-1} \Sigma_1 B^{-\top})
\]
and $B^{-\top} := (B^\top)^{-1} = (B^{-1})^\top$. The function $M(t)$ does not depend on the particular choice for $B$ since $B = \Sigma_0^{1/2} V$ for some $V \in \Rqqorth$. In particular, let $\Sigma_0^{-1/2} \Sigma_1^{} \Sigma_0^{-1/2} = V D(\eta) V^\top$ with $V \in \Rqqorth$ and $\eta \in \R_+^q$ containing the eigenvalues of $\Sigma_1 \Sigma_0^{-1}$. Then $\Sigma_0 = BB^\top$ and $\Sigma_1 = B D(\eta) B^\top$ with $B = \Sigma_0^{1/2} V$. For this choice and $\gamma := \log\eta$ we obtain the expression
\begin{equation}
\label{eq:BDB}
	M(t) \ = \ B D(\eta)^t B^\top \ = \ B \exp(t D(\gamma) ) B^\top ,
\end{equation}
which leads to a simple interpretation of the geodesic path from $\Sigma_0$ to $\Sigma_1$. Namely, after jointly diagonalizing $\Sigma_0$ and $\Sigma_1$, the geodesic path corresponds to the linear path connecting the logs of the diagonal elements.

\begin{Lemma}[Geodesic curves and $q$-dimensional surfaces]
\label{lem:geodesic.curves}
Let $B$ be an arbitrary matrix in $\Rqqns$. For $A \in \Rqqsym$ and $t \in \R$ let
\[
	\Sigma(t) \ := \ B \exp(tA) B^\top .
\]
This defines a geodesic curve in the following sense: For arbitrary different numbers $t_0, t_1$, a shortest path connecting $\Sigma(t_0)$ and $\Sigma(t_1)$ is given by
\[
	[0,1] \ni u \ \mapsto \ \Sigma((1 - u) t_0 + u t_1) .
\]
For $x \in \R^q$ let
\[
	\Gamma(x) \ := \ B D(e^x) B^\top = B \exp(D(x)) B^\top .
\]
This defines a $q$-dimensional geodesic surface in the following sense: For arbitrary $x_0, x_1 \in \R^q$, a shortest path connecting $\Gamma(x_0)$ and $\Gamma(x_1)$ is given by
\[
	[0,1] \ni u \ \mapsto \ B D \bigl( \exp((1 - u) x_0 + u x_1) \bigr) B^\top .
\]
\end{Lemma}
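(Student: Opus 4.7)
The plan is to reduce both parts of the lemma directly to Theorem~\ref{thm:geodesics}, or rather to the reformulation stated just before the lemma which says that once $\Sigma_0=BB^\top$ with $B\in\Rqqns$, a shortest path from $\Sigma_0$ to $\Sigma_1$ has the form $M(t)=B\exp(tA)B^\top$ with $A=\log(B^{-1}\Sigma_1 B^{-\top})\in\Rqqsym$. The entire proof is then essentially bookkeeping with the matrix exponential, exploiting that $\exp(sA)$ and $\exp(tA)$ commute for scalars $s,t$ and that $\exp(A)$ is symmetric when $A$ is.

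For the first part I would proceed as follows. Write $s:=t_1-t_0$ and factor
\[
	\exp\bigl(((1-u)t_0+ut_1)A\bigr)
	\ = \ \exp(t_0 A/2)\exp(usA)\exp(t_0 A/2) ,
\]
which is legitimate because $t_0A/2$ and $usA$ commute. Setting $\tilde{B}:=B\exp(t_0A/2)$, the symmetry of $\exp(t_0A/2)$ yields $\tilde{B}\tilde{B}^\top=B\exp(t_0A)B^\top=\Sigma(t_0)$, and the candidate path becomes
\[
	u \ \mapsto \ \tilde{B}\exp(u\tilde{A})\tilde{B}^\top
	\quad\text{with}\quad \tilde{A}:=sA\in\Rqqsym .
\]
Since $\tilde{B}^{-1}\Sigma(t_1)\tilde{B}^{-\top}=\exp(\tilde{A})$, one has $\tilde{A}=\log(\tilde{B}^{-1}\Sigma(t_1)\tilde{B}^{-\top})$, so this matches exactly the shortest-path formula obtained from Theorem~\ref{thm:geodesics} with the square root $\tilde{B}$ of $\Sigma(t_0)$. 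Hence the path is a geodesic connecting $\Sigma(t_0)$ and $\Sigma(t_1)$.

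For the second part, commutativity of diagonal matrices gives
\[
	D\bigl(\exp((1-u)x_0+ux_1)\bigr)
	\ = \ \exp(D(x_0))\exp(uD(x_1-x_0)) ,
\]
so the candidate path equals $BD(\exp(x_0))^{1/2}\exp\bigl(uD(x_1-x_0)\bigr)D(\exp(x_0))^{1/2}B^\top$ after inserting $\exp(D(x_0))=\exp(D(x_0)/2)^2$ and recalling that $\exp(D(x_0)/2)$ is symmetric. This is precisely the form treated in the first part, with $B$ replaced by $B\exp(D(x_0)/2)$, $t_0=0$, $t_1=1$, and $A=D(x_1-x_0)\in\Rqqsym$. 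Invoking part (i) at $u\in[0,1]$ yields the claim, since the endpoints are $\Gamma(x_0)$ and $\Gamma(x_1)$ by construction.

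The only potentially delicate step is checking that the exponential rearrangements used above really produce a factorization through the correct symmetric square root of the endpoint $\Sigma(t_0)$ (respectively $\Gamma(x_0)$); this is why I single out $\tilde{B}$ explicitly and verify $\tilde{B}\tilde{B}^\top=\Sigma(t_0)$. Once that is in place, everything else is a direct application of Theorem~\ref{thm:geodesics} and no further computation (such as differentiating $\|\dot M(u)\|_{M(u)}$) is required.
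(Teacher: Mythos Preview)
Your proof is correct and follows essentially the same route as the paper's. Both arguments factor the endpoint as $\Sigma(t_0)=\tilde{B}\tilde{B}^\top$ with $\tilde{B}=B\exp(t_0A/2)$ and then exploit commutativity to match the geodesic formula; the only cosmetic difference is that the paper carries out a single computation for general commuting $A_0,A_1\in\Rqqsym$ (showing $\Sigma_0^{1/2}(\Sigma_0^{-1/2}\Sigma_1\Sigma_0^{-1/2})^u\Sigma_0^{1/2}=B\exp((1-u)A_0+uA_1)B^\top$) and then specializes to $A_j=t_jA$ and $A_j=D(x_j)$, whereas you prove part~(i) directly and reduce part~(ii) to it.
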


\paragraph{Local geodesic parametrizations.}
Closely related to the geodesic paths just described are the following local parametrizations of subsets of $\Rqqsympd$. For any matrix $\Sigma = BB^\top$ with $B \in \Rqqns$ one may write
\[
	\Rqqsympd \ = \ \bigl\{ B \exp(A) B^\top : A \in \Rqqsym \bigr\} .
\]
These parametrizations are particularly useful in connection with first and second order Taylor expansions of smooth functions on $\Rqqsympd$.

\begin{Definition}[Geodesically convex sets]
\label{def:g-convex.sets}
A subset $C$ of $\Rqqsympd$ is called \textsl{geodesically convex} (\textsl{g-convex}) if for arbitrary $\Sigma_0, \Sigma_1 \in C$ the whole geodesic path connecting them is contained in $C$. That means, for $0 \le t \le 1$,
\[
	\Sigma_t^{} := \Sigma_0^{1/2}
	\bigl( \Sigma_0^{-1/2} \Sigma_1^{} \Sigma_0^{-1/2} \bigr)^t
	\Sigma_0^{1/2}
	\ \in \ C .
\]
\noindent
In other words, for arbitrary $B \in \Rqqns$ and $A \in \Rqqsym$ such that both $BB^\top$ and $B\exp(A)B^\top$ belong to $C$,
\[
	B \exp(tA) B^\top \ \in \ C
	\quad\text{for} \ 0 \le t \le 1 .
\]
\end{Definition}

\paragraph{Examples.}
Lemma~\ref{lem:geodesic.curves} implies that for arbitrary $B \in \Rqqns$ the following sets are g-convex:
\[
	\bigl\{ B \exp(tA) B^\top : t \in \mathcal{T}\}
\]
with $A \in \Rqqsym$ and an interval $\mathcal{T} \subset \R$, and
\[
	\bigl\{ B D(e^x) B^\top : x \in \mathcal{X}\}
\]
with a convex set $\mathcal{X} \subset \R^q$. Moreover, for any number $c > 0$, the set
\[
	\{\Sigma \in \Rqqsympd : \det(\Sigma) = c\}
\]
is easily shown to be g-convex.

\paragraph{Geodesic distance.}
The geodesic distance between two matrices $\Sigma_0, \Sigma_1 \in \Rqqsympd$ is defined to be the length of the geodesic path connecting them, i.e.
\[
	d_g(\Sigma_0,\Sigma_1)
	\ := \ \bigl\| \log(\Sigma_0^{-1/2} \Sigma_1^{} \Sigma_0^{-1/2}) \bigr\| .
\]
If, as in \eqref{eq:BDB}, we express $\Sigma_0 = BB^\top$ and $\Sigma_1 = B \exp(D(\gamma)) B^\top$, then
\[
	d_g(\Sigma_0,\Sigma_1)^2 \ = \ \|\gamma\|^2 \ = \ \sum_{i=1}^q \gamma_i^2 .
\]
Obviously $d_g(\Sigma_0,\Sigma_1) \ge 0$ with equality if, and only if, $\Sigma_0^{-1/2} \Sigma_1^{} \Sigma_0^{-1/2} = I_q$ which is equivalent to $\Sigma_0 = \Sigma_1$. The interpretation of $d_g(\Sigma_0,\Sigma_1)$ as the length of a shortest path between $\Sigma_0$ and $\Sigma_1$ implies that $d_g(\cdot,\cdot)$ is a metric on $\Rqqsympd$. As to symmetry, $d_g(\Sigma_1,\Sigma_0) = d_g(\Sigma_0,\Sigma_1)$, because any path $M$ from $\Sigma_0$ and $\Sigma_1$ defines a path $\tilde{M}(t) := M(1 - t)$ from $\Sigma_1$ to $\Sigma_0$ such that $L(\tilde{M}) = L(M)$. As to the triangle inequality, for a third matrix $\Sigma_2 \in \Rqqsympd$ let $M_{01}$ be a shortest path from $\Sigma_0$ to $\Sigma_1$ and let $M_{12}$ be a shortest path from $\Sigma_1$ to $\Sigma_2$. Then
\[
	M(t) \ := \ \begin{cases}
		M_{01}(2t)   & \text{for} \ 0 \le t \le 1/2 \\
		M_{12}(2t-1) & \text{for} \ 1/2 \le t \le 1
	\end{cases}
\]
defines a path from $\Sigma_0$ to $\Sigma_2$ such that $L(M) = L(M_{01}) + L(M_{12})$. Thus $d_g(\Sigma_0,\Sigma_2) \le L(M) = d_g(\Sigma_0,\Sigma_1) + d_g(\Sigma_1,\Sigma_2)$.

Tow additional facts are that
\[
	d_g(B \Sigma_0 B^\top, B \Sigma_1 B^\top)
	\ = \ d_g(\Sigma_0,\Sigma_1)
	\ = \ d_g(\Sigma_0^{-1},\Sigma_1^{-1}) .
\]
The first equality follows from the fact that any path $M$ from $\Sigma_0$ to $\Sigma_1$ gives rise to the path $M_B$ from $B\Sigma_0 B^\top$ to $B\Sigma_1 B^\top$ 
with $L(M_B) = L(M)$. Moreover, one can easily verify that $\tilde{M}(t) := M(t)^{-1}$ defines a path from $\Sigma_0^{-1}$ to $\Sigma_1^{-1}$ with $L(\tilde{M}) = L(M)$.

\paragraph{Matrices with determinant one.}
In connection with scale-invariant functionals, the submanifold
\[
	\M^{(q)} \ := \ \bigl\{ \Sigma \in \Rqqsympd : \det(\Sigma) = 1 \bigr\}
\]
of $\Rqqsympd$ plays a prominent role. Note that any $\Sigma \in \M^{(q)}$ may be represented as $\Sigma = BB^\top$ with $B \in \Rqq$ satisfying $\det(B) = \pm 1$, and then
\[
	\M^{(q)} \ = \ \bigl\{ B \exp(A) B^\top : A \in \W^{(q)} \bigr\}
\]
with the linear subspace
\[
	\W^{(q)} \ := \ \bigl\{ A \in \Rqqsym : \tr(A) = 0 \bigr\}
\]
of $\Rqqsym$.

An arbitrary matrix $\Sigma \in \Rqqsympd$ may be written as $\Sigma = a^a \Gamma$ with $a := q^{-1} \log\det(\Sigma) \in \R$ and $\Gamma := \det(\Sigma)^{-1/q} \Sigma \in \M^{(q)}$. Then indeed
\[
	\min_{G \in \M^{(q)}} d_G(\Sigma,G)
	\ = \ d_G(\Sigma,\Gamma) \ = \ q^{1/2} |a| \ = \ q^{-1/2} |\log \det(\Sigma)| .
\]
This follows from a more general observation: Let $\Sigma_0, \Sigma_1 \in \Rqqsympd$ be written as $\Sigma_j = e^{a_j}\Gamma_j$ with $a_j \in \R$ and $\Gamma_j \in \M^{(q)}$. Then
\[
	\log(\Sigma_0^{-1/2} \Sigma_1^{} \Sigma_0^{-1/2})
	\ = \ (a_1 - a_0) I_q + \log(\Gamma_0^{-1/2} \Gamma_1^{} \Gamma_0^{-1/2}) ,
\]
and it follows from $\langle I_q, \log(\Gamma_0^{-1/2} \Gamma_1^{} \Gamma_0^{-1/2})\rangle = \log \det(\Gamma_0^{-1} \Gamma_1) = 0$ that
\[
	d_g(\Sigma_0, \Sigma_1)^2 \ = \ q (a_1 - a_0)^2 + d_g(\Gamma_0,\Gamma_1)^2 .
\]

\subsection{Geodesically convex functions}
 
\begin{Definition}[Geodesically convex functions]
\label{def:g-convex.functions}
Let $C \subset \Rqqsympd$ be g-convex. A function $f : C \to \R$ is called \textsl{geodesically convex} (\textsl{g-convex}) if for arbitrary matrices $\Sigma_0,\Sigma_1 \in C$ and $0 < t < 1$,
\[
	f(\Sigma_t) \ \le \ (1 - t) f(\Sigma_0) + t f(\Sigma_1) ,
\]
where $\Sigma_t$ is defined as in Definition~\ref{def:g-convex.sets}. If the preceding inequality is strict whenever $\Sigma_0 \ne \Sigma_1$, the function $f$ is called \textsl{strictly geodesically convex} (\textsl{strictly g-convex}).

\noindent
Equivalently, $f : C \to \R$ is (strictly) g-convex if for arbitrary $B \in \Rqqns$ and $A \in \Rqqsym \setminus \{0\}$ such that both $BB^\top$ and $B \exp(A) B^\top$ belong to $C$,
\[
	f(B \exp(tA) B^\top) \ \text{is (strictly) convex in} \ t \in [0,1] . 
\]
\end{Definition}

\begin{Example}
\label{ex0}
The function $f(\Sigma) := \log \det(\Sigma)$ is geodesically convex on $\Rqqsympd$. It is even geodesically linear in the sense that
\[
	f(B \exp(A) B^\top)
	\ = \ f(BB^\top) + \tr(A)
	\ = \ f(BB^\top) + \langle I_q,A\rangle
\]
for arbitrary $B \in \Rqqns$ and $A \in \Rqqsym$.
\end{Example}

By means of Lemma~\ref{lem:geodesic.curves} one can easily derive the following result.

\begin{Lemma}
\label{lem:criteria.g-convexity}
For a function $f : \Rqqsympd \to \R$ the following three properties are equivalent:

\noindent
\textbf{(a)} \ $f$ is (strictly) geodesically convex;

\noindent
\textbf{(b)} \ For arbitrary $B \in \Rqqns$ and $A \in \Rqqsym \setminus \{0\}$, the function
\[
	\R \ni t \ \mapsto \ f(B \exp(tA) B^\top)
\]
is (strictly) convex;

\noindent
\textbf{(b')} \ For arbitrary $B \in \Rqqns$ and $x \in \R^q \setminus \{0\}$, the function
\[
	\R \ni t \ \mapsto \ f(B D(e^{tx}) B^\top)
\]
is (strictly) convex;

\noindent
\textbf{(c)} \ For arbitrary $B \in \Rqqns$, the function
\[
	\R^q \ni x \ \mapsto \ f(B D(e^x) B^\top)
\]
is (strictly) convex.
\end{Lemma}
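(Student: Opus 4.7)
The plan is to chain three equivalences: (a) $\Leftrightarrow$ (b), (b) $\Leftrightarrow$ (b'), and (b') $\Leftrightarrow$ (c). Each step is a short structural argument; the only genuine content is converting the local statement ``convex on $[0,1]$'' built into Definition~\ref{def:g-convex.functions} into the global statement ``convex on $\R$'' or ``convex on $\R^q$'' appearing in (b), (b'), (c). The tool for this conversion is Lemma~\ref{lem:geodesic.curves}, which identifies every subarc of $t \mapsto B\exp(tA)B^\top$ as a shortest path between its own endpoints.

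\textbf{(a) $\Leftrightarrow$ (b).} The implication (b) $\Rightarrow$ (a) is immediate by restricting $t$ to $[0,1]$. For (a) $\Rightarrow$ (b), fix $B \in \Rqqns$, $A \in \Rqqsym \setminus \{0\}$, and arbitrary $t_0 < t_1$. By Lemma~\ref{lem:geodesic.curves}, $u \mapsto B\exp\bigl(((1-u)t_0 + u t_1)A\bigr)B^\top$ is a shortest path between $\Sigma_0 := B\exp(t_0 A)B^\top$ and $\Sigma_1 := B\exp(t_1 A)B^\top$. Choose any $\tilde B \in \Rqqns$ with $\tilde B\tilde B^\top = \Sigma_0$ and let $\tilde A \in \Rqqsym$ be determined by $\Sigma_1 = \tilde B\exp(\tilde A)\tilde B^\top$; g-convexity of $f$ along this segment yields convexity of $t \mapsto f(B\exp(tA)B^\top)$ on $[t_0,t_1]$. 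Since $t_0 < t_1$ are arbitrary, convexity on $\R$ follows; strictness carries through because $A \ne 0$ makes the two endpoints distinct.

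\textbf{(b) $\Leftrightarrow$ (b').} Taking $A := D(x)$ shows (b) $\Rightarrow$ (b'). Conversely, spectrally decompose $A = U D(\lambda) U^\top$ with $U \in \Rqqorth$ and $\lambda \in \R^q$. Then
\[
    B \exp(tA) B^\top \ = \ (BU) D(e^{t\lambda}) (BU)^\top ,
\]
so with $\tilde B := BU \in \Rqqns$ the map $t \mapsto f(B\exp(tA)B^\top)$ coincides with $t \mapsto f(\tilde B D(e^{t\lambda}) \tilde B^\top)$, which is (strictly) convex by (b'), using $A \ne 0 \Leftrightarrow \lambda \ne 0$.

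\textbf{(b') $\Leftrightarrow$ (c).} Restricting $F_B(x) := f(BD(e^x)B^\top)$ to each line $\R x$ through the origin gives (c) $\Rightarrow$ (b'). For the converse, a function on $\R^q$ is convex iff its restriction to every affine line is convex (and strictly so along every nondegenerate line for strict convexity). Given $x_0 \in \R^q$ and $y \in \R^q \setminus \{0\}$, set $\tilde B := B D(e^{x_0/2}) \in \Rqqns$, so that
\[
    F_B(x_0 + ty) \ = \ f(\tilde B D(e^{ty}) \tilde B^\top) ;
\]
by (b') applied to $\tilde B$ and $y$, this is (strictly) convex in $t \in \R$, hence $F_B$ is (strictly) convex on $\R^q$.

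\textbf{Main obstacle.} There is no hard step. The only subtlety is the passage from the local convexity on $[0,1]$ hard-wired into the definition to global convexity on $\R$ or $\R^q$; this is handled by the geodesic reparametrization in Lemma~\ref{lem:geodesic.curves} for (a)$\Leftrightarrow$(b), and by the base-point shift $\tilde B := BD(e^{x_0/2})$ in the diagonal parametrization for (b')$\Leftrightarrow$(c).
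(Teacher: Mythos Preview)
Your proof is correct and follows essentially the same approach the paper indicates: the paper states that the lemma is an easy consequence of Lemma~\ref{lem:geodesic.curves} and then spells out only the (b) $\Leftrightarrow$ (b') step via spectral decomposition, which is exactly your argument. Your write-up simply fills in the (a) $\Leftrightarrow$ (b) and (b') $\Leftrightarrow$ (c) details that the paper leaves implicit; the base-point shift $\tilde B = B D(e^{x_0/2})$ you use for (b') $\Rightarrow$ (c) is the natural counterpart of the $q$-dimensional surface part of Lemma~\ref{lem:geodesic.curves}.
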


Obviously, Property~(b') is a special case of Property~(b), because $D(e^{tx}) = \exp(t D(x))$. On the other hand we may write $A \in \Rqqsym \setminus \{0\}$ as $A = U D(x) U^\top$ for some $U \in \Rqqorth$ and $x \in \R^q \setminus \{0\}$. Then $B \exp(tA) B^\top = (BU) D(e^{tx}) (BU)^\top$, whence Property~(b') implies Property~(b).

\begin{Example}
\label{ex1}
For any vector $v \in \R^q \setminus \{0\}$, the function
\[
	\Sigma \ \mapsto \ v^\top \Sigma v
\]
is g-convex, and the function
\[
	\Sigma \ \mapsto \ \tr(\Sigma)
\]
is strictly g-convex. To verify these claims we use criterion (c) in Lemma~\ref{lem:criteria.g-convexity}: For $B \in \Rqqns$ and $x \in \R^q$,
\[
	v^\top B D(e^x) B^\top v
	\ = \ \sum_{i=1}^q e^{x_i} (B^\top v)_i^2
\]
is obviously convex in $x$, because $\exp : \R \to \R$ is convex. Similarly,
\[
	\tr(B D(e^x) B^\top)
	\ = \ \sum_{j=1}^q e^{x_j} w_j
	\quad\text{with} \ w_j = \sum_{i=1}^q B_{ij}^2 .
\]
This is even strictly convex in $x$, because $\exp : \R \to \R$ is strictly convex and all weights $w_j$ are strictly positive.
\end{Example}

\begin{Example}
\label{exlog}
For any vector $v \in \R^q \setminus \{0\}$, the function
\[
	\Sigma \ \mapsto \ \log (v^\top \Sigma v)
\]
is g-convex. To verify this claim we use criterion (b') in Lemma~\ref{lem:criteria.g-convexity}: For $B \in \Rqqns$, $x \in \R^q \setminus \{0\}$,
and $t \in \R$,
\[
	g(t) \ = \ \log\left( v^\top B D(e^{tx}) B^\top v \right)
	\ = \ \log \sum_{i=1}^q e^{t x_i} a_i.
\]
with $a_i = (B^\top v)_i^2 \ge 0$. Evaluating its second derivative gives
\[
	g^{\prime\prime}(t)
	\ = \ \frac{\sum_{i=1}^q e^{t x_i} a_i x_i^2}{\sum_{i=1}^q e^{t x_i} a_i}
	- \biggl\{ \frac{\sum_{i=1}^q e^{t x_i} a_i x_i}{\sum_{i=1}^q e^{t x_i} a_i} \biggr\}^2 ,
\]
and so by application of the Cauchy Schwartz inequality $g^{\prime\prime}(t) \ge 0 $, with equality if and only if all the $x_i$'s are equal for those $i$ for which
$a_i > 0$.

Furthermore, suppose that $\rho : \R_+ \to \R$ is g-convex, which is equivalent to $h(t) := \rho(e^t)$ being convex in $t \in \R$, and that $\rho$ is non-decreasing. Then the function
\[
	\Sigma \ \mapsto \ \rho(v^\top \Sigma^{-1} v)
\]
is g-convex. This follows by expressing $\rho(v^\top \Sigma^{-1} v) = h(f(\Sigma^{-1})$ with $f(\Sigma) := \log(v^\top \Sigma v)$ and then applying the two remarks given below.
\end{Example}

\begin{Remark}[G-convexity and inversion]
\label{rem:Inversion}
If $f : \Rqqsympd \to \R$ is geodesically convex, then $\tilde{f}(\Sigma) := f(\Sigma^{-1})$ defines a geodesically convex function, too. This follows essentially from the fact that
\[
	\bigl( B \exp(tA) B^\top \bigr)^{-1}
	\ = \ B^{-\top} \exp(- tA) B^{-1}
	\ = \ \tilde{B} \exp(t \tilde{A}) \tilde{B}^\top
\]
with $\tilde{B} := B^{-\top}$ and $\tilde{A} := -A$.
\end{Remark}

\begin{Remark}[G-convexity and compositions]
\label{rem:Transformation}
Let $f : \Rqqsympd \to \R$ be geodesically convex with values in an interval $\mathcal{T} \subset \R$, and let $h : \mathcal{T} \to \R$ be convex and non-decreasing. Then $\tilde{f}(\Sigma) := h(f(\Sigma))$ defines a geodesically convex function, too. For if $\Sigma_0, \Sigma_1, \Sigma_t$ as in Definition~\ref{def:g-convex.sets}, then
\begin{align*}
	\tilde{f}(\Sigma_t) \
	&= \ h(f(\Sigma_t)) \\
	&\le \ h \bigl( (1 - t)f(\Sigma_0) + t f(\Sigma_1) \bigr)
		\qquad(\text{g-convexity of} \ f, \ \text{monotonicity of} \ h) \\
	&\le \ (1 - t) h(f(\Sigma_0)) + t h(f(\Sigma_1))
		\qquad(\text{convexity of} \ h) \\
	&= \ (1 - t) \tilde{f}(\Sigma_0) + \tilde{f}(\Sigma_1) .
\end{align*}
The function $\tilde{f}$ is even strictly g-convex if $f$ is strictly g-convex and $h$ is strictly increasing.
\end{Remark}

\subsection{Minimizers and geodesic coercivity}
\label{subsec:Minimizers.g-coercivity}

Suppose we want to minimize a g-convex function $f : \Rqqsympd \to \R$. As in classical convex analysis, a minimizer of $f$ may be characterized by means of the one-sided directional derivatives
\[
	\lim_{t \to 0\,+} \frac{f(B \exp(tA) B^\top) - f(BB^\top)}{t}
\]
for $B \in \Rqqns$ and $A \in \Rqqsym$. The latter limit exists in $\R$, because g-convexity of $f$ implies convexity of $f(B \exp(tA) B^\top)$ in $t \in \R$.

\begin{Lemma}[Characterizing minimizers]
\label{lem:minimizer.g-convex}
A matrix $\Sigma = BB^\top$ with $B \in \Rqqns$ minimizes a g-convex function $f  : \Rqqsympd \to \R$ if, and only if,
\begin{equation}
\label{eq:minimizer.g-convex}
	\lim_{t \to 0\,+} \frac{f(B \exp(tA) B^\top) - f(BB^\top)}{t}
	\ \ge \ 0 \quad\text{for all} \ A \in \Rqqsym .
\end{equation}
\end{Lemma}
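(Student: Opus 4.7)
The plan is to reduce everything to a one-dimensional statement about convex functions, exploiting the geodesic parametrizations already treated in Lemma~\ref{lem:criteria.g-convexity}. For fixed $B \in \Rqqns$ and $A \in \Rqqsym$, set
\[
	\phi_A(t) \ := \ f(B \exp(tA) B^\top) .
\]
By Lemma~\ref{lem:criteria.g-convexity}(b), $\phi_A$ is convex on $\R$. Consequently the difference quotient $(\phi_A(t) - \phi_A(0))/t$ is non-decreasing in $t > 0$, and the right-hand derivative $\phi_A'(0+)$ exists in $\R$ and coincides with the limit appearing in \eqref{eq:minimizer.g-convex}.

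The necessity direction is immediate: if $\Sigma = BB^\top$ minimizes $f$, then $f(B \exp(tA) B^\top) - f(BB^\top) \ge 0$ for every $t > 0$ and every $A \in \Rqqsym$, so dividing by $t$ and passing to the limit yields \eqref{eq:minimizer.g-convex}.

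For sufficiency, suppose \eqref{eq:minimizer.g-convex} holds, and let $\Sigma_1 \in \Rqqsympd$ be arbitrary. Since $\exp : \Rqqsym \to \Rqqsympd$ is a bijection, the matrix
\[
	A \ := \ \log(B^{-1} \Sigma_1^{} B^{-\top}) \ \in \ \Rqqsym
\]
satisfies $B \exp(A) B^\top = \Sigma_1$, i.e.\ $\phi_A(1) = f(\Sigma_1)$ and $\phi_A(0) = f(\Sigma)$. The hypothesis says $\phi_A'(0+) \ge 0$, and the standard secant inequality for convex functions, namely
\[
	\phi_A(1) - \phi_A(0) \ \ge \ \phi_A'(0+) ,
\]
gives $f(\Sigma_1) \ge f(\Sigma)$. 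As $\Sigma_1$ was arbitrary, $\Sigma$ is a minimizer.

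There is no genuine obstacle: the only technical input beyond g-convexity is the elementary fact that for a convex $\phi : \R \to \R$ the slope $(\phi(1)-\phi(0))/1$ dominates $\phi'(0+)$, which itself comes from monotonicity of difference quotients. The role of g-convexity is precisely to guarantee that the auxiliary one-dimensional function $\phi_A$ is convex, so that differential and global information at a base point $\Sigma$ are equivalent.
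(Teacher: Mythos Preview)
Your proof is correct and essentially identical to the paper's: both reduce to the convexity of $t \mapsto f(B\exp(tA)B^\top)$ and use the secant inequality $\phi_A(1)-\phi_A(0) \ge \phi_A'(0+)$. The only cosmetic difference is that the paper proves sufficiency by contraposition (if $\Sigma$ is not a minimizer, exhibit an $A$ with negative directional derivative), whereas you argue directly by parametrizing an arbitrary $\Sigma_1$ as $B\exp(A)B^\top$.
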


This lemma provides an explicit criterion to check whether a certain point $\Sigma$ is a minimizer of a differentiable and g-convex function on $\Rqqsympd$. But it is not clear under what conditions a minimizer has to exist. In this context a key property of $f$ is coercivity in the following sense.

\begin{Definition}[Geodesic coercivity]
A function $f : \Rqqsympd \to \R$ is called \textsl{geodesically coercive} (\textsl{g-coercive}) if
\[
	f(\Sigma) \ \to \ \infty
	\quad\text{as} \ \|\log(\Sigma)\| \to \infty .
\]

\noindent
In other words, a function $f : \Rqqsympd \to \R$ is g-coercive if, and only if, the function $\Rqqsym \ni A \mapsto f(\exp(A))$ is coercive in the usual sense, that is, $f(\exp(A)) \to \infty$ as $\|A\| \to \infty$.
\end{Definition}

Note that $\|\log(\Sigma)\| \to \infty$ is equivalent to $\|\Sigma\| + \|\Sigma^{-1}\| \to \infty$. Various authors have realized that any continuous function $f$ on $\Rqqsympd$ with the latter property has a compact set of minimizers, e.g.\ \cite{Sra_Hosseini_2015}. The following lemma and its corollary explain the relation between g-coercivity and the existence of minimizers in case of g-convex functions. In particular, the corollary shows that a continuous and strictly g-convex function has a unique minimizer if, and only if, it is g-coercive.

\begin{Lemma}[Existence of minimizers]
\label{lem:existence.minimizers}
Let $f : \Rqqsympd \to \R$ be a continuous and geodesically convex function.

\noindent
\textbf{(i)} \ The set $\mathcal{S}_*$ of its minimizers is a closed and geodesically convex subset of $\Rqqsympd$. It is possibly empty.

\noindent
\textbf{(ii)} \ If $f$ is g-coercive, then $\mathcal{S}_*$ is nonvoid and compact.

\noindent
\textbf{(iii)} \ If $f$ fails to be g-coercive but $\mathcal{S}_*$ is nonvoid, then $\mathcal{S}_*$ is geodesically unbounded, that means,
\[
	\sup_{\Sigma_1, \Sigma_2 \in \mathcal{S}_*} d_g(\Sigma_1,\Sigma_2) \ = \ \infty .
\]
\end{Lemma}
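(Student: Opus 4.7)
Part~(i) is essentially formal. Since $f$ is continuous, the minimizer set $\mathcal{S}_* = f^{-1}(\{m\})$ (with $m := \inf f$) is closed in $\Rqqsympd$ whenever it is nonvoid. To see that $\mathcal{S}_*$ is g-convex, take $\Sigma_0,\Sigma_1\in \mathcal{S}_*$ and the geodesic $\Sigma_t$ of Definition~\ref{def:g-convex.sets}; g-convexity of $f$ gives $f(\Sigma_t)\le(1-t)m+tm = m$, so equality must hold and $\Sigma_t\in \mathcal{S}_*$.

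For Part~(ii), I would transfer the problem from $\Rqqsympd$ to the Euclidean space $\Rqqsym$ via the homeomorphism $\exp:\Rqqsym\to\Rqqsympd$ (with inverse $\log$). The function $g(A) := f(\exp(A))$ is continuous, and by the very definition of g-coercivity it is coercive in the usual sense on $\Rqqsym$. The standard argument then applies: any nonempty sublevel set $\{g\le c\}$ is closed and bounded, hence compact, so an infimizing sequence has a cluster point which is a minimizer; moreover the minimizer set of $g$ is the intersection of closed and bounded level sets, hence compact. Pushing this forward by $\exp$ gives that $\mathcal{S}_*$ is nonempty and compact.

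Part~(iii) is the one requiring real work. The idea is to use the failure of g-coercivity to construct, from any existing minimizer $\Sigma_* \in \mathcal{S}_*$, further minimizers at arbitrarily large geodesic distance from $\Sigma_*$. Since $f$ is not g-coercive, there exist a constant $c$ and matrices $\Sigma_n$ with $\|\log(\Sigma_n)\|\to\infty$ and $f(\Sigma_n)\le c$; because $d_g(\Sigma_*,\,\cdot\,)$ and $\|\log(\,\cdot\,)\|=d_g(I_q,\,\cdot\,)$ differ by at most the constant $d_g(\Sigma_*,I_q)$, we also have $d_n := d_g(\Sigma_*,\Sigma_n)\to\infty$. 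Now fix any $s_0>0$ and, for each $n$ with $d_n\ge s_0$, let $\Sigma_n'$ be the point of arc length $s_0$ along the constant-speed geodesic from $\Sigma_*$ to $\Sigma_n$ (i.e.\ using the parametrization $\Sigma_*^{1/2}(\Sigma_*^{-1/2}\Sigma_n\Sigma_*^{-1/2})^{s_0/d_n}\Sigma_*^{1/2}$). Applying g-convexity of $f$ on this geodesic between $\Sigma_*$ and $\Sigma_n$ with parameter $t=s_0/d_n\in[0,1]$ yields
\[
	f(\Sigma_n')\ \le\ (1-s_0/d_n)\,m + (s_0/d_n)\,c
	\ =\ m + (s_0/d_n)(c-m)\ \longrightarrow\ m.
\]

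To conclude, I would use compactness of the geodesic sphere of radius $s_0$ around $\Sigma_*$. Writing $\Sigma_*=BB^\top$, this sphere is the image under the continuous map $A\mapsto B\exp(A)B^\top$ of the compact Euclidean sphere $\{A\in\Rqqsym:\|A\|=s_0\}$, hence compact. Extract a convergent subsequence $\Sigma_{n_k}'\to\Sigma'$; by continuity $f(\Sigma')=m$ and $d_g(\Sigma_*,\Sigma')=s_0$, so $\Sigma'\in\mathcal{S}_*$. Since $s_0>0$ was arbitrary, $\mathcal{S}_*$ contains elements at every geodesic distance from $\Sigma_*$, which shows $\sup_{\Sigma_1,\Sigma_2\in\mathcal{S}_*}d_g(\Sigma_1,\Sigma_2)=\infty$. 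The only real subtlety lies in correctly interpolating along the geodesic of constant speed and in the compactness argument for the sphere; everything else is elementary once the geometry set up in Section~\ref{sec:G-Convexity} is in place.
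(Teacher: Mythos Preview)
Your proof is correct. Parts~(i) and (ii) match the paper's argument essentially verbatim. For Part~(iii) both you and the paper exploit the same ingredients --- failure of coercivity gives a sequence with bounded $f$-values escaping to infinity, convexity controls $f$ along geodesics from $\Sigma_*$, and compactness of a sphere in the local geodesic chart extracts a limit --- but the order of operations differs. The paper first normalizes the \emph{directions} $N_k := A_k/\|A_k\|$ (in the parametrization $\Sigma_k = B\exp(A_k)B^\top$), passes to a limit direction $N$ on the unit sphere, and then uses the secant-slope inequality for the convex functions $t\mapsto f(B\exp(tN_k)B^\top)$ to conclude that the \emph{entire geodesic ray} $\{B\exp(tN)B^\top : t\ge 0\}$ lies in $\mathcal{S}_*$. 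Your argument instead fixes a radius $s_0$, interpolates to the sphere of that radius, and extracts a limit point there; repeating for each $s_0$ gives minimizers at every distance from $\Sigma_*$, but a priori not along a single ray. Both establish the stated unboundedness; the paper's version yields the slightly stronger structural fact that $\mathcal{S}_*$ contains a full geodesic half-line, which is what one expects by analogy with the recession-cone picture in ordinary convex analysis.
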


\begin{Corollary}[Existence of unique of minimizers]
\label{cor:uniqueness.minimizer}
Let $f : \Rqqsympd \to \R$ be a continuous and strictly geodesically convex function.

\noindent
\textbf{(i)} \ If $f$ is g-coercive, it has a unique minimizer.

\noindent
\textbf{(ii)} \ If $f$ fails to be g-coercive, it has no minimizer at all.
\end{Corollary}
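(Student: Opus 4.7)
The corollary follows rather directly by combining strict geodesic convexity with Lemma~\ref{lem:existence.minimizers}, so my plan is to dispose of each part with a short argument that invokes the relevant clause of that lemma and then derives a contradiction from the existence of two distinct minimizers.

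For part (i), I would first apply Lemma~\ref{lem:existence.minimizers}(ii) to conclude that the minimizer set $\mathcal{S}_*$ is nonempty (and compact). It remains to prove uniqueness. Suppose toward contradiction that $\Sigma_0, \Sigma_1 \in \mathcal{S}_*$ with $\Sigma_0 \ne \Sigma_1$, and write $f_* := f(\Sigma_0) = f(\Sigma_1) = \min f$. By Definition~\ref{def:g-convex.functions}, strict g-convexity gives, for any $t \in (0,1)$ and $\Sigma_t$ the geodesic interpolant,
\[
    f(\Sigma_t) \ < \ (1-t) f(\Sigma_0) + t f(\Sigma_1) \ = \ f_* ,
\]
which contradicts the minimality of $f_*$. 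Hence $\mathcal{S}_*$ is a singleton.

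For part (ii), assume $f$ fails to be g-coercive. If $\mathcal{S}_*$ were nonempty, then Lemma~\ref{lem:existence.minimizers}(iii) would guarantee that $\mathcal{S}_*$ is geodesically unbounded, so in particular contains two distinct matrices $\Sigma_0 \ne \Sigma_1$. But then the same strict-inequality argument as in part (i) applied to the geodesic midpoint produces $f(\Sigma_{1/2}) < f_*$, contradicting $f_* = \min f$. Therefore $\mathcal{S}_* = \varnothing$, i.e.\ $f$ has no minimizer.

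I do not expect any genuine obstacle here; the content is entirely encoded in Lemma~\ref{lem:existence.minimizers} together with the definition of strict g-convexity. The only subtlety to keep in mind is that strict g-convexity is defined along the geodesic path connecting two distinct points in a g-convex set, which is exactly the information supplied by parts~(i) and~(iii) of that lemma (the former telling us $\mathcal{S}_*$ is g-convex, so any $\Sigma_t$ lies in the ambient domain of $f$, and the latter producing two distinct elements of $\mathcal{S}_*$ in the non-coercive case).
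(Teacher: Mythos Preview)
Your proof is correct and matches the paper's own argument essentially line for line: both parts invoke the corresponding clause of Lemma~\ref{lem:existence.minimizers} and then use strict g-convexity on a geodesic interpolant between two putative minimizers to reach a contradiction. The paper phrases part~(ii) slightly differently (``a geodesically unbounded set is necessarily infinite''), but this is exactly your observation that such a set must contain two distinct points.
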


Corollary~\ref{cor:uniqueness.minimizer} follows easily from Lemma~\ref{lem:existence.minimizers}. Note that a strictly g-convex function $f$ can have at most one minimizer. For if $\Sigma_0, \Sigma_1$ are two different matrices with $f(\Sigma_0) = f(\Sigma_1)$, then $f$ attains strictly smaller values along the geodesic path connecting $\Sigma_0$ and $\Sigma_1$. Since a geodesically unbounded set is necessarily infinite, a continuous and strictly g-convex function which is not g-coercive cannot have a minimizer.

The next lemma provides an equivalent characterization for g-coercivity:

\begin{Lemma}[Characterizing g-coercivity]
\label{lem:g-coercivity}
Let $f : \Rqqsympd \to \R$ be continuous and geodesically convex. Then $f$ is geodesically coercive if, and only if, for any fixed $A \in \Rqqsym \setminus \{0\}$,
\[
	\lim_{t \to \infty} \ \lim_{u \to t\,+} \frac{f(\exp(uA)) - f(\exp(tA))}{u - t}
	\ > \ 0 .
\]
\end{Lemma}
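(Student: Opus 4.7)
The plan is to reduce the statement to a one-dimensional convexity argument via the auxiliary functions $g_A(t) := f(\exp(tA))$ for $A \in \Rqqsym$. By Lemma~\ref{lem:criteria.g-convexity}(b) applied with $B = I_q$, each $g_A$ is convex on $\R$. Consequently its right derivative $(g_A)_+'(t) = \lim_{u \to t+}(g_A(u) - g_A(t))/(u-t)$ exists in $\R$ and is non-decreasing in $t$, so $c(A) := \lim_{t \to \infty}(g_A)_+'(t)$ exists in $(-\infty,\infty]$. The lemma's condition becomes simply $c(A) > 0$ for every $A \in \Rqqsym \setminus \{0\}$.

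For the forward direction I would fix $A \ne 0$ and use g-coercivity together with $\|\log(\exp(tA))\| = |t|\|A\| \to \infty$ to conclude $g_A(t) \to \infty$ as $t \to \infty$. A convex function on $\R$ that tends to $+\infty$ at $+\infty$ cannot have $(g_A)_+' \le 0$ everywhere, so there exists $t_0$ with $(g_A)_+'(t_0) > 0$, and then monotonicity of the right derivative gives $c(A) \ge (g_A)_+'(t_0) > 0$.

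For the converse I would argue by contradiction. Assume $c(A) > 0$ for every $A \ne 0$ but $f$ is not g-coercive. Then there exist $A_n \in \Rqqsym$ with $r_n := \|A_n\| \to \infty$ and $f(\exp(A_n)) \le M$ for some constant $M$. Passing to a subsequence, $U_n := A_n/r_n$ converges to some $U_* \in \Rqqsym$ with $\|U_*\| = 1$. For each fixed $t_0 > 0$ and each $n$ large enough that $r_n > t_0$, convexity of $g_{U_n}$ yields
\[
    g_{U_n}(t_0) \ \le \ (1 - t_0/r_n) g_{U_n}(0) + (t_0/r_n) g_{U_n}(r_n)
    \ \le \ (1 - t_0/r_n) f(I_q) + (t_0/r_n) M ,
\]
using $g_{U_n}(0) = f(I_q)$ and $g_{U_n}(r_n) = f(\exp(A_n)) \le M$. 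Continuity of $f$ gives $g_{U_n}(t_0) \to g_{U_*}(t_0)$, so letting $n \to \infty$ with $t_0$ fixed yields $g_{U_*}(t_0) \le f(I_q)$. Since this holds for every $t_0 > 0$, the function $g_{U_*}$ is bounded above on $[0,\infty)$, contradicting the fact that $c(U_*) > 0$ forces $g_{U_*}(t) \to \infty$ as $t \to \infty$.

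The main subtlety will be in the converse: one has no hope of a uniform lower bound for $(g_{U_n})_+'$, nor any uniform estimate on $g_{U_n}(r_n)$ beyond the single constant $M$. The key trick is that the convex-combination inequality translates the far-endpoint bound at $r_n$ back to any fixed interior point $t_0$, where continuity of $f$ then transfers it to the limiting direction $U_*$; the convexity machinery and compactness of the unit sphere in $\Rqqsym$ carry the rest.
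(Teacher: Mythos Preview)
Your proof is correct and follows essentially the same approach as the paper: both directions rely on the convex one-variable functions $t \mapsto f(\exp(tA))$, and the converse proceeds by extracting a limiting unit direction $U_*$ via compactness and deriving a contradiction. The only cosmetic difference is that you bound the function values $g_{U_*}(t_0) \le f(I_q)$ directly via the chord inequality, whereas the paper bounds the right derivative $g(t,N) \le 0$ via a chain of secant-slope comparisons; these are equivalent for a convex function and lead to the same contradiction.
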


\subsection{Differentiability}

The next lemma establishes a connection between differentiability in the usual sense and differentiability with respect to local geodesic coordinates.

\begin{Lemma}[1st order smoothness]
\label{lem:smoothness1}
For a function $f : \Rqqsympd \to \R$ the following two conditions are equivalent:

\noindent
\textbf{(S1.i)} \ $f$ is differentiable with gradient $\nabla f : \Rqqsympd \to \Rqqsym$.

\noindent
\textbf{(S1.ii)} \ For each $B \in \Rqqns$ there exists a matrix $G(B) \in \Rqqsym$ such that for $A \in \Rqqsym$,
\[
	f(B \exp(A) B^\top) \ = \ f(BB^\top) + \langle A, G(B)\rangle + o(\|A\|)
	\quad\text{as} \ A \to 0 .
\]

\noindent
In case of (S1.i-ii),
\begin{align*}
	G(B) \
	&= \ B^\top \nabla f(BB^\top) B
		\quad\text{for} \ B \in \Rqqns , \\
	\nabla f(\Sigma) \
	&= \ \Sigma^{-1/2} G(\Sigma^{1/2}) \Sigma^{-1/2}
		\quad\text{for} \ \Sigma \in \Rqqsympd .
\end{align*}
\end{Lemma}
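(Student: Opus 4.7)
The plan is to prove the two implications (S1.i)$\Rightarrow$(S1.ii) and (S1.ii)$\Rightarrow$(S1.i) by exploiting the first order expansion $\exp(A) = I_q + A + O(\|A\|^2)$, together with the fact that for any fixed $B \in \Rqqns$ the map $A \mapsto B \exp(A) B^\top$ is a smooth local diffeomorphism between a neighborhood of $0 \in \Rqqsym$ and a neighborhood of $BB^\top \in \Rqqsympd$, with linearization $A \mapsto B A B^\top$.

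First I would establish (S1.i)$\Rightarrow$(S1.ii). Fix $B \in \Rqqns$, set $\Sigma := BB^\top$, and note that $B \exp(A) B^\top - \Sigma = B A B^\top + R(A)$ where $\|R(A)\| = O(\|A\|^2)$ as $A \to 0$, by the power series definition of $\exp$. Applying differentiability of $f$ at $\Sigma$ with increment $\Delta := B A B^\top + R(A) \in \Rqqsym$ gives
\[
	f(B \exp(A) B^\top)
	\ = \ f(\Sigma) + \langle \nabla f(\Sigma), B A B^\top + R(A)\rangle + o(\|\Delta\|) .
\]
Since $\|\Delta\| = O(\|A\|)$, the remainder is $o(\|A\|)$; and using $\langle \nabla f(\Sigma), BAB^\top\rangle = \tr(\nabla f(\Sigma) B A B^\top) = \tr(B^\top \nabla f(\Sigma) B \, A) = \langle B^\top \nabla f(\Sigma) B, A\rangle$, we obtain (S1.ii) with $G(B) := B^\top \nabla f(BB^\top) B$, which is symmetric because $\nabla f(\Sigma)$ is.

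Next I would prove (S1.ii)$\Rightarrow$(S1.i). Fix $\Sigma \in \Rqqsympd$ and set $B := \Sigma^{1/2}$. For $\Delta \in \Rqqsym$ of sufficiently small norm, $I_q + \Sigma^{-1/2} \Delta \Sigma^{-1/2}$ is positive definite, so we may define
\[
	A(\Delta) \ := \ \log\bigl( I_q + \Sigma^{-1/2} \Delta \Sigma^{-1/2} \bigr)
	\ \in \ \Rqqsym ,
\]
yielding $\Sigma + \Delta = B \exp(A(\Delta)) B^\top$. The standard expansion $\log(I_q + H) = H + O(\|H\|^2)$ gives $A(\Delta) = \Sigma^{-1/2} \Delta \Sigma^{-1/2} + O(\|\Delta\|^2)$; in particular $\|A(\Delta)\| = O(\|\Delta\|)$. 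Plugging into (S1.ii) yields
\[
	f(\Sigma + \Delta) - f(\Sigma)
	\ = \ \langle A(\Delta), G(B)\rangle + o(\|\Delta\|)
	\ = \ \langle \Sigma^{-1/2} \Delta \Sigma^{-1/2}, G(B)\rangle + o(\|\Delta\|) ,
\]
where the $O(\|\Delta\|^2)$ contribution from $A(\Delta)$ is absorbed into $o(\|\Delta\|)$. Moving the two $\Sigma^{-1/2}$ factors to the other slot of the inner product via cyclicity of the trace shows that $f$ is differentiable at $\Sigma$ with $\nabla f(\Sigma) = \Sigma^{-1/2} G(\Sigma^{1/2}) \Sigma^{-1/2}$, which is symmetric. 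Finally, combining the two implications yields both identities relating $G$ and $\nabla f$ in general.

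I do not expect a serious obstacle; the main care is purely bookkeeping, namely verifying that the quadratic remainder terms in $\exp(A) - I_q - A$ and in $\log(I_q + H) - H$ produce errors of order $o(\|A\|)$ respectively $o(\|\Delta\|)$ when fed into the first-order expansions, and that cyclicity of the trace together with symmetry of $\nabla f(\Sigma)$ and of $G(B)$ makes the two expressions for the differential coincide.
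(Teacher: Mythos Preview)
Your proposal is correct and follows essentially the same route as the paper's own proof: both directions use the first-order expansions $\exp(A) = I_q + A + O(\|A\|^2)$ and $\log(I_q + H) = H + O(\|H\|^2)$, together with cyclicity of the trace, to pass between the Euclidean increment $\Delta$ and the geodesic increment $A$. The paper's argument is written slightly more tersely, but the logic and the resulting formulas for $G(B)$ and $\nabla f(\Sigma)$ are identical.
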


In particular, a function $f : \Rqqsympd \to \R$ is continuously differentiable if, and only if, its ``geodesic gradient (g-gradient)'' $G(B)$ is continuous in $B \in\Rqqns$.

It is well-known from convex analysis that a differentiable convex function $f$ on $\R^d$ is minimal at a certain point $x \in \R^d$ if, and only if, $\nabla f(x) = 0$. The same is true for differentiable g-convex functions:

\begin{Corollary}[Characterizing minimizers]
\label{cor:minimizers.g-convex}
Let $f : \Rqqsympd \to \R$ be differentiable and geodesically convex. Then for $\Sigma = BB^\top$, $B \in \Rqqns$, the following three conditions are equivalent:

\noindent
\textbf{(a)} \ $\Sigma$ is a minimizer of $f$;

\noindent
\textbf{(b)} \ $\nabla f(\Sigma) = 0$;

\noindent
\textbf{(b')} \ $G(B) = 0$.
\end{Corollary}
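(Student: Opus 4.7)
The plan is to derive the corollary as a direct combination of Lemma \ref{lem:minimizer.g-convex} (the characterization of minimizers by nonnegativity of one-sided directional derivatives) with the first-order Taylor expansion provided by Lemma \ref{lem:smoothness1}. The equivalence of (b) and (b') is immediate from the formula $G(B) = B^\top \nabla f(BB^\top) B$ in Lemma \ref{lem:smoothness1}: since $B \in \Rqqns$, the map $M \mapsto B^\top M B$ is a linear bijection on $\Rqqsym$, so $G(B) = 0$ if and only if $\nabla f(\Sigma) = 0$.

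For the equivalence of (a) and (b'), I would first use (S1.ii) of Lemma \ref{lem:smoothness1} applied with $tA$ in place of $A$, which gives
\[
    f(B \exp(tA) B^\top) - f(BB^\top)
    \ = \ t \langle A, G(B)\rangle + o(t)
    \quad\text{as} \ t \to 0 .
\]
Dividing by $t > 0$ and letting $t \to 0\,+$ shows that the one-sided directional derivative appearing in \eqref{eq:minimizer.g-convex} equals $\langle A, G(B)\rangle$. Thus Lemma \ref{lem:minimizer.g-convex} tells us that $\Sigma$ is a minimizer of $f$ if and only if $\langle A, G(B)\rangle \ge 0$ for every $A \in \Rqqsym$.

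The final step is to observe that this one-sided inequality can be strengthened to equality: applying it to both $A$ and $-A$ yields $\langle A, G(B)\rangle = 0$ for all $A \in \Rqqsym$. Since $G(B) \in \Rqqsym$ itself, choosing $A = G(B)$ gives $\|G(B)\|^2 = 0$, hence $G(B) = 0$, establishing (a) $\Rightarrow$ (b'). Conversely, if $G(B) = 0$ then the directional derivative is identically zero, trivially satisfying \eqref{eq:minimizer.g-convex}, so (b') $\Rightarrow$ (a).

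There is no real obstacle here; the proof is a routine unpacking of the two preceding lemmas. The only mild subtlety is that Lemma \ref{lem:minimizer.g-convex} only provides a one-sided inequality, but because $A$ ranges over the linear space $\Rqqsym$, symmetry immediately converts it to the vanishing of $G(B)$.
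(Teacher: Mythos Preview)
Your proof is correct and follows essentially the same approach as the paper: both combine Lemma~\ref{lem:minimizer.g-convex} with Lemma~\ref{lem:smoothness1} via the identity $\lim_{t \to 0\,+} t^{-1}\bigl(f(B\exp(tA)B^\top) - f(BB^\top)\bigr) = \langle A, G(B)\rangle$. You simply make explicit the details the paper leaves to the reader, namely the $A \mapsto -A$ symmetry step and the equivalence (b) $\Leftrightarrow$ (b') via $G(B) = B^\top \nabla f(\Sigma) B$.
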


This corollary follows directly from Lemmas~\ref{lem:minimizer.g-convex} and \ref{lem:smoothness1}, noting that
\[
	\lim_{t \to 0\,+} \frac{f(B\exp(tA)B^\top) - f(BB^\top)}{t}
	\ = \ \langle A, G(B)\rangle
\]
for $A \in \Rqqsym$ and $B \in \Rqqns$. Moreover, for different real numbers $t, u$ and $B := \exp((t/2)A)$,
\[
	\frac{f(\exp(uA)) - f(\exp(tA))}{u - t}
	\ = \ \frac{f \bigl( B\exp((u - t)A) B \bigr) - f(BB)}{u - t}
	\ \to \ \langle A, G(B) \rangle
\]
as $u \to t$. Hence for differentiable and g-convex functions $f$ the criterion for g-coercivity in Lemma~\ref{lem:g-coercivity} can be reformulated as follows:

\begin{Corollary}[Characterizing g-coercivity]
\label{cor:g-coercivity}
Let $f : \Rqqsympd \to \R$ be differentiable and geodesically convex. Then $f$ is geodesically coercive if, and only if, for any fixed $A \in \Rqqsym \setminus \{0\}$,
\[
	\lim_{t \to \infty} \ \frac{d}{dt} f(\exp(tA)) \ > \ 0
\]
which is equivalent to
\[
	\lim_{t \to \infty} \langle A, G(\exp(tA)) \ > \ 0 .
\]
\end{Corollary}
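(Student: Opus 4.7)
The plan is to specialise Lemma~\ref{lem:g-coercivity} to the differentiable case, translating its inner one-sided limit first into an ordinary derivative along the curve $t \mapsto \exp(tA)$, and then into the inner-product form via the geodesic-gradient formula of Lemma~\ref{lem:smoothness1}.

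First I would invoke Lemma~\ref{lem:g-coercivity}: $f$ is g-coercive if and only if, for every $A \in \Rqqsym \setminus \{0\}$,
\[
	\lim_{t \to \infty} \ \lim_{u \to t\,+} \frac{f(\exp(uA)) - f(\exp(tA))}{u - t}
	\ > \ 0 .
\]
Since $f$ is differentiable and $t \mapsto \exp(tA)$ is smooth, the composition $t \mapsto f(\exp(tA))$ is differentiable as an ordinary function of one real variable, so the inner one-sided quotient coincides with $\frac{d}{dt} f(\exp(tA))$. This gives the first stated criterion immediately.

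For the second form, I would fix $t \in \R$ and set $B := \exp((t/2)A)$. Because $A$ is symmetric, so is $B$, giving $BB^\top = \exp(tA)$; because $A$ commutes with $B$, one has $\exp(uA) = B \exp((u - t)A) B^\top$ for every $u$. Applying condition (S1.ii) of Lemma~\ref{lem:smoothness1} to the perturbation $(u - t)A \to 0$ yields the expansion
\[
	f(\exp(uA)) - f(\exp(tA)) \ = \ (u - t) \langle A, G(B)\rangle + o(|u - t|) ,
\]
so $\frac{d}{dt} f(\exp(tA)) = \langle A, G(\exp((t/2)A))\rangle$.

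Finally, letting $t \to \infty$ and re-indexing by $s = t/2$ I get $\lim_{t \to \infty} \frac{d}{dt} f(\exp(tA)) = \lim_{s \to \infty} \langle A, G(\exp(sA))\rangle$, so the two displayed conditions are literally the same limit and are equivalent as sign conditions. Combined with Lemma~\ref{lem:g-coercivity}, this proves the corollary. The only mildly subtle point is the bookkeeping of the half-parameter in $B = \exp((t/2)A)$ produced by Lemma~\ref{lem:smoothness1}; this factor is harmless because it disappears once the outer limit $t \to \infty$ is taken, and the argument is otherwise just a direct chain of the two earlier results.
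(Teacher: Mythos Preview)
Your proposal is correct and follows essentially the same route as the paper: the paper's justification, given in the paragraph immediately preceding the corollary, also sets $B := \exp((t/2)A)$, rewrites $\exp(uA) = B\exp((u-t)A)B^\top$, and identifies the difference quotient's limit as $\langle A, G(B)\rangle$ before invoking Lemma~\ref{lem:g-coercivity}. You are in fact slightly more careful than the paper in explicitly noting the re-indexing $s = t/2$ needed to reconcile $G(\exp((t/2)A))$ with the displayed $G(\exp(tA))$; the paper leaves this implicit.
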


\subsection{Second order smoothness}

Verifying g-convexity of a function $f$ on $\Rqqsympd$ is not trivial. Many authors use direct calculations case by case \cite{Wiesel_2012} or use advanced matrix inequalities \cite{Sra_Hosseini_2013, Sra_Hosseini_2015}. Convexity of functions can be easily characterized in terms of second derivatives. The same is true for g-convexity if one uses local geodesic coordinates.

\begin{Lemma}[Conditions for g-convexity]
\label{lem:g-convexity}
Let $f : \Rqqsympd \to \R$ satisfy the following condition: For each $B \in \Rqqns$ there exist a matrix $G(B) \in \Rqqsym$ and a quadratic form $H(\cdot,B)$ on $\Rqqsym$ such that for $A \in \Rqqsym$,
\begin{equation}
\label{eq:Taylor2}
	f(B \exp(A) B^\top) \ = \ f(BB^\top)
		+ \langle A, G(B)\rangle + 2^{-1} H(A,B) + o(\|A\|^2)
	\quad\text{as} \ A \to 0 .
\end{equation}
Then the function $f$ is geodesically convex if, and only if,
\begin{equation}
\label{eq:H.pos.semidefinite}
	H(A,B) \ \ge \ 0
	\quad\text{for all} \ B \in \Rqqns \ \text{and} \ A \in \Rqqsym .
\end{equation}
It is strictly geodesically convex if
\begin{equation}
\label{eq:H.pos.definite}
	H(A,B) \ > \ 0
	\quad\text{for all} \ B \in \Rqqns \ \text{and} \ A \in \Rqqsym \setminus \{0\} .
\end{equation}
\end{Lemma}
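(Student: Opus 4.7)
The plan is to reduce the statement to a one-dimensional calculus fact via the characterization in Lemma~\ref{lem:criteria.g-convexity}(b). By that lemma, $f$ is (strictly) g-convex if, and only if, for every $B \in \Rqqns$ and $A \in \Rqqsym \setminus \{0\}$, the scalar function $\phi_{B,A}(t) := f(B\exp(tA)B^\top)$ is (strictly) convex in $t \in \R$. The key algebraic observation is that the group property of the matrix exponential allows one to recentre the geodesic parametrization: for any $t \in \R$, setting $\tilde B := B\exp(tA/2) \in \Rqqns$, one has $\tilde B \tilde B^\top = B\exp(tA)B^\top$ and
\[
	\phi_{B,A}(t+s) \ = \ f(\tilde B \exp(sA) \tilde B^\top)
	\quad\text{for all } s \in \R.
\]
Applying the Taylor expansion~\eqref{eq:Taylor2} at base point $\tilde B$ in direction $sA$ therefore yields
\[
	\phi_{B,A}(t+s) \ = \ \phi_{B,A}(t) + s\langle A, G(\tilde B)\rangle
		+ \tfrac{s^2}{2} H(A,\tilde B) + o(s^2)
	\quad\text{as } s \to 0 .
\]

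From this expansion one reads off that $\phi_{B,A}$ is twice differentiable at $t$ with $\phi_{B,A}'(t) = \langle A, G(\tilde B)\rangle$ and $\phi_{B,A}''(t) = H(A, \tilde B)$. Given this identity for the second derivative, the lemma reduces to standard one-dimensional convexity. If \eqref{eq:H.pos.semidefinite} holds, then $\phi_{B,A}''(t) \ge 0$ for every $t \in \R$, so each $\phi_{B,A}$ is convex and hence $f$ is g-convex. Conversely, if $f$ is g-convex, then each $\phi_{B,A}$ is convex, so $\phi_{B,A}''(0) = H(A,B) \ge 0$ for every $B \in \Rqqns$ and $A \in \Rqqsym$, giving \eqref{eq:H.pos.semidefinite}. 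For the strict case, condition \eqref{eq:H.pos.definite} gives $\phi_{B,A}''(t) > 0$ for all $t \in \R$ whenever $A \ne 0$, forcing strict convexity of each $\phi_{B,A}$ and hence strict g-convexity of $f$.

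The main technical point to be careful about is extracting genuine twice-differentiability of $\phi_{B,A}$ at every $t$ from the pointwise Peano-type expansion. Since hypothesis~\eqref{eq:Taylor2} is assumed to hold at every base point $B \in \Rqqns$, the recentred expansion above is a second-order Taylor expansion of the real function $s \mapsto \phi_{B,A}(t+s)$ at $s=0$ with honest $o(s^2)$ remainder, and a standard real-analysis fact then identifies the coefficients as $\phi_{B,A}'(t)$ and $\tfrac12 \phi_{B,A}''(t)$. No extra global regularity of $f$ beyond \eqref{eq:Taylor2} is needed, which is precisely what makes the criterion sharp. I do not expect any serious obstacle beyond verifying this pointwise-to-pointwise second differentiability carefully; note also that no converse is claimed in the strictly g-convex case, which is consistent with geodesically linear examples such as $\log\det$ where $H \equiv 0$ yet the function is g-convex but not strictly so.
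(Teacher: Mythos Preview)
Your recentring via $\tilde B = B\exp(tA/2)$ and the resulting pointwise second-order expansion of $\phi_{B,A}$ are exactly what the paper does. The gap is in the sentence ``a standard real-analysis fact then identifies the coefficients as $\phi_{B,A}'(t)$ and $\tfrac12 \phi_{B,A}''(t)$.'' This is not true: a Peano-type expansion $\phi(t+s) = \phi(t) + g(t)s + \tfrac12 h(t)s^2 + o(s^2)$ holding at \emph{every} $t$ does \emph{not} force $\phi$ to be twice differentiable. The paper itself gives the counterexample $\phi(x) = x^3\sin(1/x^2)$ (with $\phi(0)=0$): this function is $C^\infty$ away from $0$ and satisfies $\phi(\delta)=O(\delta^3)$, so it admits a Peano expansion at every point, yet $\phi'(x) = 3x^2\sin(1/x^2) - 2\cos(1/x^2)$ for $x\ne 0$ has no limit as $x\to 0$ and $\phi''(0)$ does not exist. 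So you cannot write $\phi_{B,A}''(t) = H(A,\tilde B)$ and appeal to ``$\phi''\ge 0 \Rightarrow$ convex''.

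The paper closes this gap with a dedicated auxiliary result (Lemma~\ref{lem:convexity} in the supplement): if a real function on an interval admits a pointwise second-order expansion with coefficient $h(t)\ge 0$ at every $t$, then it is convex (strictly convex if $h>0$). The proof is a bisection argument on the second divided difference and does not pass through $\phi''$. For the converse direction your argument is essentially fine once rephrased without $\phi''$: from the expansion alone, $\phi_{B,A}(\delta)+\phi_{B,A}(-\delta)-2\phi_{B,A}(0) = H(A,B)\delta^2 + o(\delta^2)$, and convexity of $\phi_{B,A}$ forces the left side to be $\ge 0$, hence $H(A,B)\ge 0$. The paper argues the contrapositive in the same spirit.
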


\begin{Example}
\label{ex2}
The function $\Sigma \mapsto \log \tr(\Sigma)$ is geodesically convex. For if $B \in \Rqqns$ and $A \in \Rqqsym$, then
\begin{align*}
	\log \tr(B \exp(A) B^\top) \
	= \ &\log \bigl( \tr(BB^\top) + \tr(BAB^\top) + 2^{-1} \tr(B A^2 B^\top)
		+ O(\|A\|^3) \bigr) \\
	= \ &\log \tr(\Sigma)
		+ \log \Bigl( 1 + \frac{\tr(BAB^\top)}{\tr(BB^\top)}
			+ 2^{-1} \frac{\tr(B A^2 B^\top)}{\tr(BB^\top)}
			+ O(\|A\|^3) \Bigr) \\
	= \ &\log \tr(\Sigma)
		+ \frac{\tr(BAB^\top)}{\tr(BB^\top)} \\
		&+ \ 2^{-1} \Bigl( \frac{\tr(B A^2 B^\top)}{\tr(BB^\top)}
			- \Bigl( \frac{\tr(BAB^\top)}{\tr(BB^\top)} \Bigr)^2 \Bigr)
			+ O(\|A\|^3) \Bigr)
\end{align*}
as $A \to 0$, so
\begin{align*}
	G(B) \
	&= \ \tr(BB^\top)^{-1} B^\top B
		\ = \ \|B\|^{-2} B^\top B , \\
	H(A,B) \
	&= \ \frac{\tr(B A^2 B^\top)}{\tr(BB^\top)}
			- \Bigl( \frac{\tr(BAB^\top)}{\tr(BB^\top)} \Bigr)^2
		\ = \ \langle A^2, G(B)\rangle - \langle A, G(B)\rangle^2 .
\end{align*}
Obviously, $H(I_q,B) = 0$. But $H(A,B) > 0$ for all $A \not\in \{t I_q : t \in \R\}$. To show this let $A = U D(x) U^\top$ with $U \in \Rqqorth$ and $x \in \R^q$. Then for any integer $s \ge 0$,
\[
	\tr(B A^s B^\top)
	\ = \ \tr(BU D(x^s) (BU)^\top)
	\ = \ \sum_{j=1}^q w_j x_j^s
\]
with $w_j := \sum_{i=1}^q (BU)_{ij}^2 > 0$. Consequently,
\[
	H(A,B)
	\ = \ \frac{\sum_{j=1}^q w_j x_j^2}{\sum_{j=1}^q w_j}
	- \biggl\{ \frac{\sum_{j=1}^q w_j x_j}{\sum_{j=1}^q w_j} \biggr\}^2
	\ > \ 0
\]
unless $x_1 = x_2 = \cdots = x_q$. But the latter condition would be equivalent to $A$ being a multiple of the identity matrix.
\end{Example}

\begin{Remark}[Smoothness and inversion]
\label{rem:Inversion2}
Suppose that $f : \Rqqsympd \to \R$ satisfies the second order smoothness assumption in Lemma~\ref{lem:g-convexity}. Then $\tilde{f}(\Sigma) := f(\Sigma^{-1})$ satisfies this assumption, too: For any $B \in \Rqqns$, as $\Rqqsym \ni A \to 0$,
\[
	\tilde{f} \bigl( B \exp(A) B^\top \bigr)
	\ = \ \langle A, \tilde{G}(B)\rangle
		+ 2^{-2} \tilde{H}(A, B) + o(\|A\|^2)
\]
with
\begin{align*}
	\tilde{G}(B)   \ &:= \ - G(B^{-\top}) , \\
	\tilde{H}(A,B) \ &:= \ H(A,B^{-\top}) .
\end{align*}
\end{Remark}

\begin{Remark}[Smoothness and exponential or power transformations]
\label{rem:Transformation2}
Suppose that a function $f : \Rqqsympd \to \R$ satisfies the second order smoothness assumption in Lemma~\ref{lem:g-convexity}. For $c > 0$ let
\[
	f_c(\Sigma) \ := \ \exp(c f(\Sigma))/c .
\]
Then for any $B \in \Rqqns$, as $\Rqqsym \ni A \to 0$,
\[
	f_c(B \exp(A) B^\top)
	\ = \ f_c(BB^\top)
		+ \langle A, G_c(B)\rangle
		+ 2^{-1} H_c(A, B) + o(\|A\|^2)
\]
with
\begin{align*}
	G_c(B) \
	&:= \ \exp(c f(BB^\top)) G(B) , \\
	H_c(A,B) \
	&:= \ \exp(c f(BB^\top))
		\bigl( H(A,B) + c \langle A, G(B)\rangle^2 \bigr) .
\end{align*}
Similarly, if $f > 0$ and
\[
	f_\gamma(\Sigma) \ = \ f(\Sigma)^\gamma/\gamma
\]
for $\gamma > 1$, then
\[
	f_\gamma(B \exp(A) B^\top)
	\ = \ f_\gamma(BB^\top)
		+ \langle A, G_\gamma(B)\rangle
		+ 2^{-1} H_\gamma(A, B) + o(\|A\|^2)
\]
with
\begin{align*}
	G_\gamma(B) \
	&:= \ f(BB^\top)^{\gamma-1} G(B) , \\
	H_\gamma(A,B) \
	&:= \ f(BB^\top)^{\gamma-1} H(A,B)
		+ (\gamma - 1) f(BB^\top)^{\gamma-2} \langle A, G(B)\rangle^2 .
\end{align*}
\end{Remark}

\begin{Remark}[Orthogonal transformations]
\label{rem:Orthogonal transformations}
For matrices $B, \tilde{B} \in \Rqqns$, the equation $BB^\top = \tilde{B}\tilde{B}^\top$ is equivalent to $\tilde{B} = BU$ for some orthogonal matrix $U \in \Rqq$. For any function $f : \Rqqsympd \to \R$ satisfying the second order smoothness assumption in Lemma~\ref{lem:g-convexity},
\begin{align*}
	G(BU) \ &= \ U^\top G(B) U
		\quad\text{and} \\
	H(A,BU) \ &= \ H(UAU^\top, B) \quad\text{for} \ A \in \Rqqsym .
\end{align*}
In particular, neither the eigenvalues of $G(B)$ nor the set $\bigl\{ H(A,B) : A \in \Rqqsym, \|A\| = 1 \bigr\}$ change when $B$ is replaced with $BU$.

The equations for $G(BU)$ and $H(\cdot,BU)$ follow from the fact that $BU \exp(A) (BU)^\top = B \exp(UAU^\top) B^\top$. Thus
\[
	f \bigl( BU \exp(A) (BU)^\top \bigr)
	\ = \ f(BB^\top) + \langle A, G(BU)\rangle
		+ 2^{-1} H(A, BU) + o(\|A\|^2)
\]
coincides with
\begin{align*}
	f \bigl( B \exp(UAU^\top) B^\top \bigr) \
	&= \ f(BB^\top) + \langle UAU^\top, G(B)\rangle
		+ 2^{-1} H(UAU^\top, B) + o(\|A\|^2) \\
	&= \ f(BB^\top) + \langle A, U^\top G(B)U\rangle
		+ 2^{-1} H(UAU^\top, B) + o(\|A\|^2) .
\end{align*}
\end{Remark}

As explained in Supplement~\ref{sec:Auxiliary}, existence of second order Taylor expansions alone does not imply twice differentiability. But this is true under an additional continuity requirement on the quadratic terms.

\begin{Lemma}[2nd order smoothness]
\label{lem:smoothness2}
For a function $f : \Rqqsympd \to \R$ the following two conditions are equivalent:

\noindent
\textbf{(S2.i)} \ $f$ is twice continuously differentiable with gradient $\nabla f(\Sigma) \in \Rqqsym$ and Hessian operator $D^2f(\Sigma) : \Rqqsym \to \Rqqsym$ at $\Sigma \in \Rqqsympd$.

\noindent
\textbf{(S2.ii)} \ For each $B \in \Rqqns$ there exist a matrix $G(B) \in \Rqqsym$ and a quadratic form $H(\cdot,B)$ on $\Rqqsym$ such that expansion \eqref{eq:Taylor2} is valid. Moreover, $H(A,B)$ is continuous in $B \in \Rqqsym$ for any fixed $A \in \Rqqsym$.

\noindent
In case of (S2.i-ii), for $A \in \Rqqsym$,
\begin{align*}
	H(A,B) \
	&= \ \langle A^2, G(B) \rangle + \langle BAB^\top, D^2f(BB^\top) BAB^\top\rangle
		\quad\text{for} \ B \in \Rqqns , \\
	\langle A, D^2f(\Sigma) A\rangle \
	&= \ H(\Sigma^{-1/2}A\Sigma^{-1/2}, \Sigma^{1/2})
		- \langle A\Sigma^{-1}A, \nabla f(\Sigma)\rangle
		\quad\text{for} \ \Sigma \in \Rqqsympd .
\end{align*}
\end{Lemma}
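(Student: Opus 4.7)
The natural strategy is to translate between the Euclidean expansion of $f$ on the open set $\Rqqsympd \subset \Rqqsym$ and the geodesic expansion built from the local chart $\Phi_B(A) := B\exp(A)B^\top$, using the power series $\exp(A) = I_q + A + \tfrac12 A^2 + O(\|A\|^3)$ and, for small symmetric $X$, $\log(I_q + X) = X - \tfrac12 X^2 + O(\|X\|^3)$. Once this change of variables is carried out in both directions, the two displayed formulas for $H(A,B)$ and $\langle A, D^2f(\Sigma)A\rangle$ will fall out; the real work lies in the reverse implication (S2.ii)$\Rightarrow$(S2.i), where a local Taylor expansion must be upgraded to genuine twice continuous differentiability.

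For (S2.i)$\Rightarrow$(S2.ii) I would simply expand $f(\Phi_B(A))$ around $\Sigma := BB^\top$. Writing $\Delta(A) := \Phi_B(A) - \Sigma = BAB^\top + \tfrac12 BA^2B^\top + O(\|A\|^3)$ and plugging into the Euclidean second-order Taylor expansion
\[
 f(\Sigma + \Delta) = f(\Sigma) + \langle \nabla f(\Sigma), \Delta\rangle + \tfrac12 \langle \Delta, D^2f(\Sigma)\Delta\rangle + o(\|\Delta\|^2),
\]
the cyclic trace identity $\langle \nabla f(\Sigma), BCB^\top\rangle = \langle B^\top \nabla f(\Sigma) B, C\rangle$ turns the linear terms into a multiple of $\langle A,\cdot\rangle$ plus a multiple of $\langle A^2,\cdot\rangle$, and the quadratic term, up to $o(\|A\|^2)$, becomes $\tfrac12\langle BAB^\top, D^2f(\Sigma) BAB^\top\rangle$. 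Matching coefficients gives $G(B) = B^\top\nabla f(\Sigma) B$ and the stated formula for $H(A,B)$, and continuity of $H(A,\cdot)$ is inherited from continuity of $\nabla f$ and $D^2 f$ via $B\mapsto BB^\top$.

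The harder direction is (S2.ii)$\Rightarrow$(S2.i). The first-order part of \eqref{eq:Taylor2} combined with Lemma~\ref{lem:smoothness1} already yields differentiability of $f$ with $\nabla f(\Sigma) = \Sigma^{-1/2} G(\Sigma^{1/2})\Sigma^{-1/2}$. To produce a Euclidean second-order expansion at a fixed $\Sigma \in \Rqqsympd$, I would set $B := \Sigma^{1/2}$, $X := B^{-1}\Delta B^{-1}$ for small $\Delta \in \Rqqsym$, and invert the chart via $A := \log(I_q + X) = X - \tfrac12 X^2 + O(\|X\|^3)$, so that $\Sigma + \Delta = B\exp(A)B^\top$. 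Since $H(\cdot,B)$ is quadratic and $A - X = O(\|X\|^2)$, one has $H(A,B) = H(X,B) + o(\|X\|^2)$. Substituting into \eqref{eq:Taylor2} and converting each trace against $G(B)$ into the inner product against $\nabla f(\Sigma) = B^{-1}G(B)B^{-1}$ yields
\[
 f(\Sigma+\Delta) = f(\Sigma) + \langle \Delta, \nabla f(\Sigma)\rangle + \tfrac12 \bigl[H(\Sigma^{-1/2}\Delta\Sigma^{-1/2}, \Sigma^{1/2}) - \langle \Delta\Sigma^{-1}\Delta, \nabla f(\Sigma)\rangle\bigr] + o(\|\Delta\|^2),
\]
which identifies the putative Hessian quadratic form $Q_\Sigma(\Delta)$ as the bracketed expression. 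This form is continuous in $\Sigma$ because $H(A,B)$ is continuous in $B$, the maps $\Sigma \mapsto \Sigma^{\pm 1/2}$ are continuous on $\Rqqsympd$, and $\nabla f$ is continuous via the formula from Lemma~\ref{lem:smoothness1} together with the continuity of $G(B)$ that the same inversion argument provides.

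The main obstacle is the final upgrade: a pointwise second-order Taylor expansion does not by itself guarantee twice continuous differentiability, as the paper explicitly warns just before the lemma. Here I would invoke the corresponding auxiliary result in Supplement~\ref{sec:Auxiliary}, which states that if a function on an open subset of a Euclidean space admits a second-order Taylor expansion at every point whose quadratic form depends continuously on the base point, then it is in fact $C^2$ with that quadratic form equal to $\langle \Delta, D^2f(\Sigma)\Delta\rangle$. Applying this to the expansion above delivers (S2.i) together with the second displayed formula of the lemma, and the first formula then follows by polarization from the identity already obtained in the (S2.i)$\Rightarrow$(S2.ii) direction.
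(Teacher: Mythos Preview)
Your proposal is correct and follows essentially the same route as the paper's proof: both directions rely on the change of variables $\Sigma + \Delta = \Sigma^{1/2}\exp(A)\Sigma^{1/2}$ with $A = \log(I_q + \Sigma^{-1/2}\Delta\Sigma^{-1/2})$, expanding $\exp(A) = I_q + A + \tfrac12 A^2 + O(\|A\|^3)$ and $\log(I_q+X) = X - \tfrac12 X^2 + O(\|X\|^3)$, and the reverse implication is closed by invoking the auxiliary Lemma~\ref{lem:2nd.order.Taylor} from the supplement. One small point: your justification that $\nabla f$ is continuous (needed so that $Q(\Delta,\Sigma)$ is continuous in $\Sigma$) via ``the same inversion argument'' is a bit loose---the paper is equally brief here, simply asserting continuity of $Q(\Delta,\Sigma)$---but a clean way to fill this in is to note that for fixed $A$, the one-dimensional function $t\mapsto f(B\exp(tA)B^\top)$ is $C^2$ by the one-dimensional case of Lemma~\ref{lem:2nd.order.Taylor}, whence $\langle A, G(B)\rangle = f(B\exp(A)B^\top) - f(BB^\top) - \int_0^1(1-s)H(A,B\exp(sA/2))\,ds$ is continuous in $B$.
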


\subsection{Scale-invariant functions}
\label{subsec:Scale-invariance}

Sometimes we consider \textsl{scale-invariant} functions $f : \Rqqsympd \to \R$ in the sense that
\[
	f(c\Sigma) \ = \ f(\Sigma)
	\quad\text{for arbitrary} \ \Sigma \in \Rqqsympd \ \text{and} \ c > 0 .
\]
If the function $f$ is differentiable, this property is equivalent to the following condition on its g-gradients $G(B)$:
\[
	\tr(G(B)) \ = \ 0
	\quad\text{for all} \ B \in \Rqqns .
\]
This follows essentially from the fact that for $t \in \R$,
\begin{align*}
	f(e^tBB^\top) \ = \ f(B \exp(t I_q) B^\top) \
	&= \ f(BB^\top) + \langle t I_q, G(B)\rangle + o(t) \\
	&= \ f(BB^\top) + \tr(G(B)) t + o(t)
\end{align*}
as $t \to 0$. If $f$ does even satisfy the second order smoothness assumption in Lemma~\ref{lem:g-convexity}, then
\[
	H(I_q,B) \ = \ 0
	\quad\text{for all} \ B \in \Rqqns ,
\]
because
\[
	f(e^tBB^\top)
	\ = \ f(BB^\top) + \tr(G(B)) t + H(I_q,B) t^2/2 + o(t^2) .
\]

A scale-invariant function $f$ on $\Rqqsympd$ is geodesically convex if, and only if, $f$ is geodesically convex on the g-convex submanifold $\M^{(q)} = \{\Sigma \in \Rqqsympd : \det(\Sigma) = 1\}$ introduced earlier. For if $\Sigma_t = B \exp(tA) B^\top$ for $t \in \R$ with arbitrary $B \in \Rqqns$ and $A \in \Rqqsym$, then $\det(\Sigma_t) = \det(B)^2 \exp(t \tr(A))$, and
\[
	f(\Sigma_t)
	\ = \ f \bigl( \det(\Sigma_t)^{-1/q} \Sigma_t \bigr)
	\ = \ f(B_o^{} \exp(t A_o) B_o^\top)
\]
with $B_o := |\det(B)|^{-1/q} B$ satisfying $\det(B_o) = \pm 1$ and $A_o := A - (\tr(A)/q) I_q$ belonging to the subspace $\W^{(q)}$ of symmetric matrices with trace $0$.

To minimize a scale-invariant function $f$, one may restrict one's attention to matrices in $\M^{(q)}$. Then the previous considerations can be adapted as follows:

\paragraph{A criterion for strict g-convexity.}
Suppose that $f : \Rqqsympd \to \R$ is scale-invariant and satisfies the second order smoothness assumption of Lemma~\ref{lem:g-convexity}. Then it is strictly geodesically convex on $\M^{(q)}$ if $H(A,B) > 0$ for all $B \in \Rqqns$ and $A \in \W^{(q)} \setminus \{0\}$.

\paragraph{Minimizers and g-coercivity.}
All results of Section~\ref{subsec:Minimizers.g-coercivity} carry over with the following modifications: We restrict our attention to matrices $\Sigma \in \M^{(q)}$, to matrices $B \in \Rqqns$ with $\det(B) = \pm 1$ and to matrices $A \in \W^{(q)}$. In particular, a matrix $\Sigma = BB^\top \in \M^{(q)}$ minimizes a g-convex function $f$ on $\M^{(q)}$ if, and only if,
\[
	\lim_{t \to 0\,+} \frac{f(B\exp(tA) B^\top) - f(BB^\top)}{t}
	\ \ge \ 0
	\quad\text{for all} \ A \in \W^{(q)} .
\]
A function $f$ is said to be geodesically coercive on $\M^{(q)}$ if
\[
	f(\exp(A)) \ \to \ \infty
	\quad\text{as} \ \|A\| \to \infty, A \in \W^{(q)} .
\]
In case of a continuous and g-convex function $f$, a necessary and sufficient condition for this is
\[
	\lim_{t \to \infty} \ \lim_{u \to t\,+} \frac{f(\exp(uA)) - f(\exp(tA))}{u - t}
	\ > \ 0
	\quad\text{whenever} \ A \in \W^{(q)} \setminus \{0\} .
\]

\section{Regularized $M$-estimators of scatter}
\label{sec:Regularized.scatter}

\subsection{Scatter functionals}
\label{subsec:M-scatter}

We now apply the results of the previous section to the problem of regularized $M$-functionals and $M$-estimators of scatter. Before doing so, we first briefly consider the non-penalized case, i.e.\ minimizing 
\[
	L_\rho(\Sigma,Q)
	\ := \ \int \bigl[ \rho(x^\top\Sigma^{-1}x) - \rho(\|x\|^2) \bigr] \, Q(dx)
		+ \log \det(\Sigma) .
\]
In what follows we summarize various results from \cite{Zhang_etal_2013} and \cite{Duembgen_etal_2015} in a slightly more general setting. The former paper considered only empirical distributions $Q = Q_n$ whereas the latter survey paper considered general distributions $Q$ but only differentiable functions $\rho$ satisfying additional constraints.

Throughout we assume that $\rho(s)$ is non-decreasing and g-convex in $s > 0$, that means, $h(x) := \rho(e^x)$ is non-decreasing and convex in $x \in \R$. In particular, $\rho$ is continuous with left- and right-sided derivatives on $\R_+$, and
\[
	\psi(s) \ := \ \begin{cases}
		0 & \text{if} \ s = 0 , \\
		s \rho'(s\,+) & \text{if} \ s > 0
	\end{cases}
\]
defines a non-decreasing function on $[0,\infty)$. Note that $\psi(e^x) = h'(x\,+)$ for $x \in \R$. Thus strict g-convexity of $\rho$ on $\R_+$ is equivalent to $\psi$ being strictly increasing on $[0,\infty)$.

The next proposition clarifies under which conditions on $\rho$ and $Q$ the objective function $L_\rho(\Sigma,Q)$ is well-defined for arbitrary $\Sigma \in \Rqqsympd$. In particular, a sufficient condition for that is $\psi(\infty) < \infty$ or $Q$ having bounded support.

\begin{Proposition}
\label{prop:existence}
The integral $\int \bigl| \rho(x^\top\Sigma^{-1}x) - \rho(\|x\|^2) \bigr| \, Q(dx)$ is finite for arbitrary $\Sigma \in \Rqqsympd$ if, and only if,
\begin{equation}
\label{eq:existence}
	\int \psi(\lambda \|x\|^2) \, Q(dx) \ < \ \infty
	\quad\text{for arbitrary} \ \lambda \ge 1 .
\end{equation}
In case of $\rho'(s\,+)$ being non-increasing in $s > 0$, the latter condition is equivalent to
\[
	\int \psi(\|x\|^2) \, Q(dx) \ < \ \infty .
\]
\end{Proposition}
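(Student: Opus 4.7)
The plan is to derive a two-sided inequality controlling $\rho(\mu s) - \rho(s)$ by $\psi$ and then compare $x^\top \Sigma^{-1} x$ with $\|x\|^2$ via the extreme eigenvalues of $\Sigma^{-1}$. The starting point is g-convexity: since $h(x) := \rho(e^x)$ is convex and non-decreasing with $h'(x+) = \psi(e^x) \ge 0$, the standard one-sided convexity inequalities $h'(x+)(y-x) \le h(y) - h(x) \le h'(y+)(y-x)$ for $x < y$, applied with $x = \log s$ and $y = \log(\mu s)$, will yield
\[
\psi(s) \log \mu \;\le\; \rho(\mu s) - \rho(s) \;\le\; \psi(\mu s) \log \mu
\quad (\mu \ge 1,\ s > 0).
\]
The analogous statement for $0 < \mu \le 1$ follows by the substitution $s \mapsto \mu s$, $\mu \mapsto 1/\mu$.

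For sufficiency, given $\Sigma \in \Rqqsympd$ write $\mu_- := \lambda_{\min}(\Sigma^{-1})$ and $\mu_+ := \lambda_{\max}(\Sigma^{-1})$, so $\mu_- \|x\|^2 \le x^\top \Sigma^{-1} x \le \mu_+ \|x\|^2$. Monotonicity of $\rho$ bounds the integrand by the maximum of $\rho(\mu_+\|x\|^2) - \rho(\|x\|^2)$ and $\rho(\|x\|^2) - \rho(\mu_-\|x\|^2)$, and the upper bound in the key inequality (applied with $\mu = \mu_+$ in the first case, and $\mu = \mu_-^{-1}$, $s = \mu_-\|x\|^2$ in the second) yields a pointwise estimate of the form $C\,\psi(\lambda \|x\|^2)$ with $\lambda := \max(\mu_+, \mu_-^{-1}, 1) \ge 1$ and $C := \log \lambda$; this is integrable by hypothesis \eqref{eq:existence}.

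For necessity, given $\lambda \ge 1$ one specializes to $\Sigma = \lambda^{-2} I_q$, so that $x^\top \Sigma^{-1}x = \lambda^2\|x\|^2$. Writing $\rho(\lambda^2 s) - \rho(s) = [\rho(\lambda^2 s) - \rho(\lambda s)] + [\rho(\lambda s) - \rho(s)]$ and applying the lower bound in the key inequality to each piece gives
\[
\rho(\lambda^2 s) - \rho(s) \;\ge\; \bigl[\psi(\lambda s) + \psi(s)\bigr] \log \lambda \;\ge\; \psi(\lambda s)\,\log\lambda,
\]
since $\psi \ge 0$. Finiteness of the integral at this $\Sigma$ therefore forces $\int \psi(\lambda \|x\|^2)\,Q(dx) < \infty$ whenever $\lambda > 1$; the case $\lambda = 1$ then follows from monotonicity of $\psi$.

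Finally, if $u(s) = \rho'(s+)$ is non-increasing, then for any $\lambda \ge 1$ one has $\psi(\lambda s) = \lambda s\, u(\lambda s) \le \lambda s\, u(s) = \lambda\,\psi(s)$, hence $\int \psi(\lambda\|x\|^2)\,Q(dx) \le \lambda \int \psi(\|x\|^2)\,Q(dx)$; the reverse direction is immediate from monotonicity of $\psi$, giving the claimed equivalence. The only delicate point is the first step: one must deduce the two-sided bound directly from g-convexity using the right-derivative version of $\psi$, and handle the case $0 < \mu \le 1$ symmetrically. Once this inequality is established, the rest amounts to eigenvalue estimates and elementary bookkeeping with monotonicity.
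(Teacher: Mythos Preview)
Your proof is correct and follows essentially the same route as the paper's: both arguments exploit the convexity of $h(t)=\rho(e^t)$ to bound $|\rho(\mu s)-\rho(s)|$ by $(\log\lambda)\,\psi(\lambda s)$ with $\lambda=\max\{\lambda_{\max}(\Sigma^{-1}),\lambda_{\min}(\Sigma^{-1})^{-1},1\}$, and both handle the final clause via $\psi(\lambda s)\le \lambda\psi(s)$ when $\rho'(\cdot\,+)$ is non-increasing. The only cosmetic difference is in the necessity direction: the paper uses two scalar matrices $\lambda_1^{-1}I_q,\lambda_2^{-1}I_q$ and subtracts the resulting integrable functions to get $\rho(\lambda_2\|x\|^2)-\rho(\lambda_1\|x\|^2)\ge \log(\lambda_2/\lambda_1)\,\psi(\lambda_1\|x\|^2)$, whereas you use a single matrix $\lambda^{-2}I_q$ and split $\rho(\lambda^2 s)-\rho(s)$ at $\lambda s$ to extract the lower bound $\psi(\lambda s)\log\lambda$; this is the same idea expressed slightly more economically.
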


The following theorem regarding the g-convexity of $L_\rho(\Sigma,Q)$ follows essentially from examples \ref{ex0} and \ref{exlog} plus some extra arguments, see Supplement~\ref{sec:Auxiliary}. It is an extension of Theorem~1(a) of \cite{Zhang_etal_2013}, who considered the case $Q = Q_n$, and of Proposition~5.4 of \cite{Duembgen_etal_2015}, who considered differentiable functions $\rho$:

\begin{Theorem}
\label{thm:Mfunc}
Under Condition~\eqref{eq:existence}, $L_\rho(\Sigma,Q)$ is continuous and geodesically convex in $\Sigma \in \Rqqsympd$. Furthermore,

\noindent
\textbf{(a)} suppose that $\rho(s)$ is strictly g-convex in $s > 0$. Then $L_\rho(\cdot,Q)$ is strictly geodesically convex if, and only if,
\[
	Q(\V) \ < \ 1
\]
for any linear subspace $\V$ of $\R^q$ with $\dim(\V) < q$.

\noindent
\textbf{(b)} suppose that $\rho(s) = q \log s$ for $s > 0$. Then $L_\rho(\cdot,Q)$ is strictly geodesically convex on $\M^{(q)}$ if, and only if,
\[
	Q(\V \cup \W) \ < \ 1
\]
for arbitrary linear subspaces $\V, \W \subsetneq \R^q$ with $\V \cap \W = \{0\}$.
\end{Theorem}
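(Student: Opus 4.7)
The objective splits as $L_\rho(\Sigma, Q) = \int[\rho(x^\top \Sigma^{-1} x) - \rho(\|x\|^2)]\, Q(dx) + \log\det(\Sigma)$. The second term is geodesically linear by Example~\ref{ex0}. For each fixed $x \in \R^q \setminus \{0\}$, g-convexity of $\Sigma \mapsto \rho(x^\top \Sigma^{-1} x)$ is the combination of three steps already established: $\log(x^\top \Sigma x)$ is g-convex (Example~\ref{exlog}), precomposing with $h(s) := \rho(e^s)$ (non-decreasing and convex by assumption) preserves g-convexity (Remark~\ref{rem:Transformation}), and replacing $\Sigma$ by $\Sigma^{-1}$ preserves g-convexity (Remark~\ref{rem:Inversion}). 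Integrating over $Q$ then yields g-convexity of $L_\rho(\cdot, Q)$, with integrability ensured by Proposition~\ref{prop:existence}. Continuity is a routine dominated-convergence argument: for $\Sigma$ in a compact neighborhood one has a uniform bound $x^\top \Sigma^{-1} x \leq \lambda \|x\|^2$ with $\lambda \geq 1$ fixed, and monotonicity of $\rho$ together with \eqref{eq:existence} provides the dominating integrable function.

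For both (a) and (b) I study a generic geodesic $\Sigma_t = B \exp(tA) B^\top$ and, by Remark~\ref{rem:Orthogonal transformations}, may assume $A = D(a)$ is diagonal. Writing $y = B^{-1} x$ and $c_i = y_i^2$, set $\phi_x(t) := x^\top \Sigma_t^{-1} x = \sum_i c_i e^{-t a_i}$ and $g_x(t) := \log \phi_x(t)$. The crucial structural fact is that $\phi_x$ is a sum of exponentials and hence real-analytic, so $g_x$ is either strictly convex on all of $\R$, or affine on all of $\R$ with slope $-\alpha$ equal to the common value of the $a_i$ on $\{i : c_i > 0\}$. Since $t \mapsto L_\rho(\Sigma_t, Q) - t\tr(A) - \text{const}$ equals $\int h(g_x(t))\, Q(dx)$, failure of strict convexity amounts to the existence of an interval on which $h \circ g_x$ is affine for $Q$-almost every $x$.

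For part~(a), $h$ is strictly convex and non-decreasing, hence strictly increasing (a flat segment would contradict strict convexity). A short case analysis then shows that for any convex $g$ which is nowhere locally constant, $h \circ g$ is strictly convex; applied to our analytic $g_x$, this means $h(g_x(t))$ is either strictly convex in $t$ or constant in $t$, the latter precisely when $x$ lies in the proper subspace $\V(A,B) := B \cdot \mathrm{span}\{e_i : a_i = 0\}$. Hence $L_\rho(\cdot, Q)$ is strictly g-convex iff $Q(\V(A,B)) < 1$ for every $A \neq 0$ and $B \in \Rqqns$. Since every proper subspace $\V \subsetneq \R^q$ arises as $\V(A,B)$ for a suitable diagonal $A$ and a $B$ whose first $\dim\V$ columns span $\V$, this reduces to $Q(\V) < 1$ for every proper subspace.

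For part~(b), $h(s) = qs$ is affine, so $h \circ g_x$ is strictly convex (resp.\ affine) iff $g_x$ is; the restriction to $\M^{(q)}$ forces $A \in \W^{(q)} \setminus \{0\}$ and $|\det B| = 1$. Let $\alpha_1, \dots, \alpha_k$ be the distinct values of $a$ with index sets $I_1, \dots, I_k$; the constraints $\tr(A) = 0$ and $A \neq 0$ force $k \geq 2$. Then $g_x$ is affine on $\R$ iff $x$ lies in some $\V_j := B \cdot \mathrm{span}\{e_i : i \in I_j\}$, and each $\V_j$ is proper. Thus failure of strict g-convexity is equivalent to the existence of such a configuration with $Q\bigl(\bigcup_j \V_j\bigr) = 1$. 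The key step --- and the main obstacle --- is to reduce this union of potentially many subspaces to \emph{two}: take $\V := \V_1$ and $\W := \V_2 + \cdots + \V_k$. Both are proper and satisfy $\V \cap \W = \{0\}$ because the $B^{-1}\V_j$ are spanned by disjoint coordinate sets, while $\V \cup \W \supseteq \bigcup_j \V_j$, so $Q(\V \cup \W) = 1$. Conversely, given proper $\V, \W$ with $\V \cap \W = \{0\}$ and $Q(\V \cup \W) = 1$, choose $B$ with $|\det B| = 1$ whose columns form adapted bases of $\V$, $\W$, and a complement, and let $A = D(a)$ take two (or three) distinct values on these blocks summing to zero; then $g_x$ is affine for $Q$-a.e.\ $x$, so $L_\rho(\Sigma_t, Q)$ is affine in $t$ on $\M^{(q)}$ and strict g-convexity fails.
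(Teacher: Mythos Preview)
Your argument is correct and follows the same route as the paper: parametrize a geodesic as $\Sigma_t = B\exp(tA)B^\top$ with $A$ diagonal, write the integrand as $h\circ g_x$ with $g_x(t)=\log\sum_i c_i e^{-ta_i}$, exploit the dichotomy that each $g_x$ is either strictly convex on all of $\R$ or affine, and identify the affine locus as a union of subspaces $B\cdot\mathrm{span}\{e_i:a_i=\alpha\}$; the two-subspace reduction in part~(b) via $\V=\V_1$, $\W=\V_2+\cdots+\V_k$ is exactly the paper's choice $\V=B\,\V(\gamma_o)$, $\W=B\,\V(\gamma_o)^\perp$. One imprecision worth fixing: the dichotomy for $g_x$ does not follow from real-analyticity alone (an analytic convex function need not be globally strictly convex or affine); the actual reason---which the paper states and your parenthetical already hints at---is the explicit formula $g_x''(t)=\sum_i p_i(t)a_i^2-\bigl(\sum_i p_i(t)a_i\bigr)^2$ with $p_i(t)=c_ie^{-ta_i}/\sum_j c_je^{-ta_j}$, a weighted variance that vanishes at some $t$ iff the $a_i$ with $c_i>0$ all coincide, hence iff it vanishes identically.
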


The special function $\rho(s) = q \log s$ in part~(b) corresponds to the distribution-free $M$-estimator of scatter introduced in \cite{Tyler_1987a}, and it is the setting for which geodesic convexity was first applied to $M$-estimation \cite{Auderset_etal_2005,Wiesel_2012}. The corresponding objective function $L_\rho(\cdot,Q)$ is scale-invariant if $Q(\{0\}) = 0$.

Results on the g-coercivity of $L_\rho(\Sigma,Q)$ can be obtained by extending Lemma 2.2 of \cite{Kent_Tyler_1991} from $Q_n$ to general $Q$, see also Theorem~1(b) of \cite{Zhang_etal_2013} and Proposition~5.5 of \cite{Duembgen_etal_2015}. Lemma~\ref{lem:g-coercivity} allows for a complete answer in the present general framework, starting from the following proposition.

\begin{Proposition}
\label{prop:g-coercivity}
Let $A = U D(-\gamma) U^\top$ with $U = [u_1, u_2, \ldots, u_q] \in \Rqqorth$ and $\gamma \in \R^q$ satisfying $\gamma_1 \le \gamma_2 \le \cdots \le \gamma_q$. Then
\begin{align}
\nonumber
	\lim_{t \to \infty} \ &\lim_{u \to t\,+} \,
		\frac{L_\rho(\exp(uA),Q) - L_\rho(\exp(tA),Q)}{u - t} \\
\label{eq:g-coercivity}
	&= \ \sum_{j=1}^q Q(\V_j\setminus\V_{j-1})
		\bigl( \psi(\infty) \gamma_j^+ - \psi(0\,+) \gamma_j^- \bigr)
		- \sum_{j=1}^q \gamma_j ,
\end{align}
where $\V_0 := \{0\}$ and $\V_j = \mathrm{span}(u_1,u_2,\ldots,u_j)$ for $1 \le j \le q$. Furthermore, $a^{\pm} := \max\{\pm a, 0\}$ for $a \in \R$.

\noindent
\textbf{(a)} \ Specifically let $\psi(0\,+) = 0 < \psi(\infty)$. Then the previous limit may be rewritten as
\[
	\sum_{k=0}^{q-1} \bigl( (1 - Q(\V_k)) \psi(\infty) - q + k \bigr)
		(\gamma_{k+1}^+ - \gamma_k^+)
	+ \sum_{j=1}^q \gamma_j^- .
\]

\noindent
\textbf{(b)} \ Specifically let $\rho(s) := q \log s$ for $s > 0$. Then $\psi \equiv q$ on $\R_+$, and the previous limit may be rewritten as
\[
	q \sum_{k=1}^{q-1} (k/q - Q(\V_k)) (\gamma_{k+1} - \gamma_k)
	- q Q(\{0\}) \gamma_1 .
\]
\end{Proposition}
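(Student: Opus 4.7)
My plan is to analyze the scalar function $F(t) := L_\rho(\exp(tA),Q)$ via a change of coordinates and exploit that $F$ is convex in $t$. Setting $y_j := u_j^\top x$ and $s(t,x) := x^\top \exp(-tA)x = \sum_{j=1}^q e^{t\gamma_j} y_j^2$, and using $\log\det\exp(tA) = t\tr(A) = -t\sum_j \gamma_j$, we have $F(t) = \int[\rho(s(t,x)) - \rho(\|x\|^2)]\,Q(dx) - t\sum_j \gamma_j$. With $h(z) := \rho(e^z)$ and $\xi(t,x) := \log s(t,x)$, the $\rho$-integrand equals $h(\xi(t,x))$, where $h$ is convex and non-decreasing (g-convexity of $\rho$) and $\xi$ is a log-sum-exp in $t$, hence convex. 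So $F$ is convex in $t$, its right derivative $F'(t+)$ exists and is non-decreasing in $t$, and the iterated limit in the statement is just $\lim_{t\to\infty} F'(t+)$.

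The first technical step is to identify $F'(t+) = \int g_t(x)\,Q(dx) - \sum_{j=1}^q \gamma_j$ with $g_t(x) := \frac{d^+}{du}\big|_{u=t} h(\xi(u,x))$. Since the difference quotient $(\rho(s(u,x)) - \rho(s(t,x)))/(u-t)$ decreases monotonically to $g_t(x)$ as $u \downarrow t$ by convexity of $h\circ\xi$, dominated convergence applies with the integrable majorant $\rho(s(t+1,x)) - \rho(s(t,x))$, whose integrability is guaranteed by Proposition~\ref{prop:existence}.

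The computational core is the pointwise asymptotic $\lim_{t\to\infty} g_t(x)$ for $x\ne 0$. Define $k(x) := \max\{j : y_j \ne 0\}$; then $x \in \V_{k(x)} \setminus \V_{k(x)-1}$. Since $\xi_t(t,x) = s'(t,x)/s(t,x)$ is a weighted average of $\{\gamma_j\}_{j\le k(x)}$ with weights $e^{t\gamma_j}y_j^2$ in which $\gamma_{k(x)}$ is the largest, $\xi_t(t,x) \to \gamma_{k(x)}$; simultaneously $s(t,x) \to \infty$, $0$, or a finite positive limit as $\gamma_{k(x)}$ is positive, negative, or zero. Applying the one-sided chain rule $\frac{d^+}{du}h(\xi(u,x)) = h'(\xi(t,x)\pm)\,\xi_t(t,x)$ (sign chosen by the sign of $\xi_t$) and the identity $h'(z\pm) = \psi(e^z)$, a case-by-case computation gives $\lim_{t\to\infty} g_t(x) = \psi(\infty)\gamma_{k(x)}^+ - \psi(0+)\gamma_{k(x)}^-$ in all three regimes (the case $\gamma_{k(x)}=0$ yielding $0$ from both the formula and the computation).

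Finally, to interchange $\lim_{t\to\infty}$ with the integral I use monotone convergence: $g_t(x)$ is non-decreasing in $t$ (right derivative of a convex function) and bounded below by the integrable function $\rho(s(t_0,x)) - \rho(s(t_0-1,x))$ for any fixed $t_0$. Partitioning $\R^q \setminus \{0\} = \bigsqcup_{j=1}^q (\V_j \setminus \V_{j-1})$ then yields exactly \eqref{eq:g-coercivity}. Parts (a) and (b) follow by Abel summation on the index $j$, using $Q(\V_q) = 1$, monotonicity of $\gamma$ (and hence of $\gamma^+$), and in (b) the identity $\psi \equiv q$. I expect the main obstacles to be the careful handling of one-sided derivatives since $\rho$ need not be differentiable, together with the regime analysis at $\gamma_{k(x)} = 0$; the case $\psi(\infty) = \infty$ is compatible throughout under the natural convention $\infty\cdot 0 = 0$ (or equivalently by noting that the positive terms then drive the limit to $+\infty$ as soon as some $\gamma_j > 0$ has positive $Q$-mass outside $\V_{j-1}$).
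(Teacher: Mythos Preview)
Your proposal is correct and follows essentially the same route as the paper's proof: both reduce to analyzing the convex scalar function $t \mapsto h(\xi(t,x))$ with $h = \rho \circ \exp$ and $\xi$ a log-sum-exp, pass the right derivative under the integral via dominated convergence, compute the pointwise limit $\psi(\infty)\gamma_{k(x)}^+ - \psi(0+)\gamma_{k(x)}^-$ by the case analysis on the sign of $\gamma_{k(x)}$, take $t \to \infty$ by monotone convergence, and finish parts~(a)--(b) by Abel summation. The only cosmetic difference is that the paper first reduces to $U = I_q$ via the substitution $Q \mapsto Q_U$, whereas you carry the coordinates $y_j = u_j^\top x$ throughout.
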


This proposition will be used later in connection with regularized scatter functionals. In the present context it implies necessary and sufficient conditions for g-coercivity in the following two settings:

\bigskip

\noindent
\textbf{Setting 0.} \ $\rho(s) = q \log s$ for $s > 0$, and $Q(\{0\}) = 0$.

\noindent
\textbf{Setting 1.} \ $\psi(0\,+) = 0$, $q < \psi(\infty) \le \infty$, and $Q$ satisfies \eqref{eq:existence}.

\bigskip

\begin{Theorem}\strut
\label{thm:g-coercivity}

\noindent
\textbf{(a)} \ In Setting~1, $L_\rho(\cdot,Q)$ is geodesically coercive if, and only if,
\begin{equation}
\label{eq:g-coercivity1}
	Q(\V) \ < \ 1 - \frac{\{q - \dim(\V)\}}{\psi(\infty)}
\end{equation}
for all linear subspaces $\V \subset \R^q$ with $0 \le \dim(\V) < q$. If in addition $\psi$ is strictly increasing on $\{s \ge 0: \psi(s) < \psi(\infty)\}$, then $L_\rho(\cdot,Q)$ has a unique minimizer.

\noindent
\textbf{(b)} \ In Setting~0, $L_\rho(\cdot,Q)$ is geodesically coercive on $\M^{(q)}$ if, and only if,
\begin{equation}
\label{eq:g-coercivity0}
	Q(\V) \ < \ \frac{\dim(\V)}{q}
\end{equation}
for all linear subspaces $\V \subset \R^q$ with $1 \le \dim(\V) < q$. In this case, $L_\rho(\cdot,Q)$ has a unique minimizer on $\M^{(q)}$.
\end{Theorem}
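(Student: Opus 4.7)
The plan is to invoke Lemma~\ref{lem:g-coercivity}, which reduces geodesic coercivity of $L_\rho(\cdot,Q)$ to strict positivity, for every nonzero $A \in \Rqqsym$ (respectively every nonzero $A \in \W^{(q)}$ in the scale-invariant setting of part~(b)), of the iterated limit
\[
	\lim_{t\to\infty}\ \lim_{u\to t\,+}\frac{L_\rho(\exp(uA),Q) - L_\rho(\exp(tA),Q)}{u-t}.
\]
Proposition~\ref{prop:g-coercivity} evaluates this limit in closed form in terms of the ordered eigenvalues $\gamma_1 \le \cdots \le \gamma_q$ of $-A$, orthonormal eigenvectors $u_1,\ldots,u_q$, and the nested subspaces $\V_j = \mathrm{span}(u_1,\ldots,u_j)$. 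The task thus reduces to characterizing those $Q$ for which the explicit expression of Proposition~\ref{prop:g-coercivity}(a) (in Setting~1) or Proposition~\ref{prop:g-coercivity}(b) (in Setting~0, using $Q(\{0\})=0$) is strictly positive for every admissible nonzero $\gamma$.

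For part~(a), the relevant expression is $\sum_{k=0}^{q-1}[(1-Q(\V_k))\psi(\infty) - q + k](\gamma_{k+1}^+ - \gamma_k^+) + \sum_j \gamma_j^-$ with $\gamma_0^+ := 0$. Sufficiency: the hypothesis $Q(\V) < 1 - (q - \dim\V)/\psi(\infty)$ makes every bracket strictly positive; since $\gamma_{k+1}^+ - \gamma_k^+ \ge 0$ and $\gamma_j^- \ge 0$ cannot all vanish when $\gamma \ne 0$, the total is strictly positive. Necessity: for a proper $\V$ of dimension $k$, complete an orthonormal basis of $\V$ to one of $\R^q$ and take $\gamma = (0,\ldots,0,1,\ldots,1)$ with $k$ leading zeros; all increments vanish except the $k$-th (which equals $1$) and all $\gamma_j^-$ vanish, so coercivity forces $(1-Q(\V))\psi(\infty) - q + k > 0$, i.e.\ \eqref{eq:g-coercivity1}.

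For part~(b), scale-invariance of $L_\rho(\cdot,Q)$ confines us to $A \in \W^{(q)}\setminus\{0\}$, equivalently $\sum_j \gamma_j = 0$ with $\gamma \ne 0$; Proposition~\ref{prop:g-coercivity}(b) then reduces the limit to $q \sum_{k=1}^{q-1}(k/q - Q(\V_k))(\gamma_{k+1}-\gamma_k)$. Sufficiency is immediate under $Q(\V) < \dim\V/q$, since every coefficient is strictly positive and a nondecreasing $\gamma$ with vanishing sum that is not identically zero must exhibit at least one strict increment. Necessity: the test vector with $k$ leading entries $-(q-k)/q$ and trailing entries $k/q$ lies in $\W^{(q)}$ and isolates the $k$-th coefficient, giving $q(k/q - Q(\V)) > 0$, i.e.\ \eqref{eq:g-coercivity0}.

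Uniqueness follows by combining g-coercivity with strict g-convexity via Corollary~\ref{cor:uniqueness.minimizer}. For part~(b), for any proper $\V,\W$ with $\V \cap \W = \{0\}$, the hypothesis yields $Q(\V \cup \W) \le Q(\V) + Q(\W) < (\dim\V + \dim\W)/q \le 1$, so Theorem~\ref{thm:Mfunc}(b) gives strict g-convexity on $\M^{(q)}$ and the scale-invariant version of Corollary~\ref{cor:uniqueness.minimizer} from Section~\ref{subsec:Scale-invariance} delivers the unique minimizer. For part~(a), the coercivity inequality entails $Q(\V) < 1$ for proper $\V$, and strict monotonicity of $\psi$ on $\{\psi < \psi(\infty)\}$ supplies the strict convexity of $h(x)=\rho(e^x)$ needed to apply Theorem~\ref{thm:Mfunc}(a). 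The principal obstacle I anticipate lies in this last uniqueness step when $\psi$ saturates at some finite level: one has to rule out that any nontrivial geodesic through a candidate minimizer lies entirely in the saturation region of $\rho$, and this is where the strict coercivity bounds $Q(\V) < 1 - (q-\dim\V)/\psi(\infty)$ must be leveraged to guarantee that the $x^\top \Sigma^{-1} x$ visit the strictly convex region of $\rho$ often enough under $Q$.
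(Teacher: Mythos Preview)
Your coercivity arguments for both parts are correct and essentially identical to the paper's: you invoke Lemma~\ref{lem:g-coercivity} together with the explicit limit in Proposition~\ref{prop:g-coercivity}, test with the indicator-type vectors $\gamma$ to extract the necessary conditions, and note that nonnegativity of the increments yields sufficiency. Your uniqueness argument for part~(b) via Theorem~\ref{thm:Mfunc}(b) is also correct and matches the paper.

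The genuine gap is exactly where you flagged it: part~(a) uniqueness in the saturating case. Your sentence ``strict monotonicity of $\psi$ on $\{\psi < \psi(\infty)\}$ supplies the strict convexity of $h(x)=\rho(e^x)$ needed to apply Theorem~\ref{thm:Mfunc}(a)'' is not correct as written---if $\psi$ reaches $\psi(\infty)$ at some finite $s_o$, then $h$ is affine on $[\log s_o,\infty)$, so $\rho$ is not strictly g-convex on all of $\R_+$ and Theorem~\ref{thm:Mfunc}(a) does not apply. The paper does \emph{not} close this gap by establishing strict g-convexity of $L_\rho(\cdot,Q)$ globally; instead it argues directly at a minimizer $\Sigma = BB^\top$. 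Fix $\gamma \ne 0$ and set $f(t) := L_\rho(BD(e^{-t\gamma})B^\top,Q) - L_\rho(BB^\top,Q)$, a convex function with $f \ge f(0) = 0$. If $f$ fails to be strictly convex, then (as in the proof of Theorem~\ref{thm:Mfunc}) $Q_B$ must be supported on the union of eigenspaces $\V(\gamma_o)$ of $D(\gamma)$, and one decomposes $f(t) = \sum_{\gamma_o} f_{\gamma_o}(\gamma_o t)$ with each $f_{\gamma_o}$ convex and nonnegative, vanishing at $0$. If some $f_{\gamma_o}$ vanished at a nonzero point, its right derivative $f_{\gamma_o}'(t\,+) = \int_{\V(\gamma_o)\setminus\{0\}} \psi(e^t\|x\|^2)\,Q_B(dx) - \dim\V(\gamma_o)$ would be zero on a nondegenerate interval $[v,w]$; strict monotonicity of $\psi$ below $\psi(\infty)$ then forces $\psi(e^v\|x\|^2) = \psi(\infty)$ for $Q_B$-a.e.\ $x \in \V(\gamma_o)\setminus\{0\}$, whence $f_{\gamma_o}'(v\,+) = \psi(\infty)\bigl(1 - Q_B(\V(\gamma_o)^\perp)\bigr) - \dim\V(\gamma_o)$, and the coercivity bound~\eqref{eq:g-coercivity1} applied to $\V(\gamma_o)^\perp$ makes this strictly positive---a contradiction. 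This is the missing piece your final paragraph anticipated but did not supply.
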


Note that the condition in part~(a) of Theorem~\ref{thm:g-coercivity} is precisely Condition~1 mentioned in Section~\ref{sec:Background}. The additional assumption for uniqueness of the minimizer covers $M$-estimators of scatter as proposed in \cite{Maronna_1976,Huber_1981} with functions $\rho$ which are not strictly g-convex on the whole positive half-line. In part~(b) the condition $Q(\{0\}) = 0$ can be eliminated by replacing $Q$ with $\LL(X \,|\, X \ne 0)$, $X \sim Q$. The conclusion of part~(b) is well known, see \cite{Duembgen_Tyler_2005} and \cite{Duembgen_etal_2015}.

In connection with the algorithms introduced later we need objective functions $L_\rho(\cdot,Q)$ which are twice continuously differentiable. In Setting~0 this is the case, but Setting~1 will be replaced with the following one:

\bigskip

\noindent
\textbf{Setting~2.} \ $\rho$ is twice continuously differentiable on $\R_+$ such that $\psi(s) \ := \ s \rho'(s)$ is strictly increasing in $s \in \R_+$ with limits $\psi(0) = 0$ and $\psi(\infty) \in (q, \infty]$. Moreover, for some constant $\kappa > 0$, $s \psi'(s) \le \kappa \psi(s)$ for all $s \in \R_+$.

\bigskip

\begin{Lemma}[cf.\ \cite{Duembgen_etal_2015}]
\label{lem:g-convexity.log.likelihood}
For $B \in \Rqqns$ and $A \in \Rqqsym$, under Settings~0 and 2,
\[
	L_\rho(B \exp(A) B^\top, Q) - L_\rho(BB^\top)
	\ = \ \langle A, G_\rho(Q_B)\rangle
		+ 2^{-1} H_\rho(A, Q_B) + o(\|A\|^2)
\]
as $A \to 0$, where
\[
	Q_B \ := \ \LL(B^{-1} X), X \sim Q ,
\]
and
\begin{align*}
	G_\rho(Q) \
	&:= \ I_q - \Psi_\rho(Q), \\
	\Psi_\rho(Q) \
	&:= \ \int \rho'(\|x\|^2) xx^\top \, Q(dx) , \\
	H_\rho(A,Q) \
	&:= \ \langle A^2, \Psi_\rho(Q)\rangle
		+ \int \rho''(\|x\|^2) (x^\top Ax)^2 \, Q(dx) .
\end{align*}
Moreover, $H_\rho(A,Q) \ge 0$ with equality if, and only if,
\[
	\begin{cases}
		Q \bigl( \bigcup_{j=1}^m \V_j \bigr) = 1
			& \text{in Setting~0} , \\
		Q(\mathcal{N}_A) = 1 & \text{in Setting~2} .
	\end{cases}
\]
Here $\V_1, \V_2, \ldots, \V_m$ are the different eigenspaces of $A$, and $\mathcal{N}_A := \{x \in \R^q : Ax = 0\}$.
\end{Lemma}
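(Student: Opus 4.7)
The plan is to reduce the expansion to a Taylor calculation on a neighborhood of $0 \in \Rqqsym$ via a substitution that transforms $B\exp(A)B^\top$ into $\exp(A)$ and $Q$ into $Q_B$, then to expand $\exp(-A)$ and $\rho$ to second order and collect terms, and finally to establish $H_\rho(A,Q) \ge 0$ with its equality characterizations by rewriting it in the spectral coordinates of $A$.

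For the first step I would substitute $y = B^{-1}x$ in the defining integral; using $x^\top (B\exp(A)B^\top)^{-1} x = y^\top \exp(-A) y$ and $\log\det(B\exp(A)B^\top) = \log\det(BB^\top) + \tr(A)$ this yields
\begin{equation*}
L_\rho(B\exp(A)B^\top,Q) - L_\rho(BB^\top,Q) \ = \ \tr(A) + \int \bigl[ \rho(y^\top \exp(-A) y) - \rho(\|y\|^2) \bigr] \, Q_B(dy) .
\end{equation*}
Then I would expand $\exp(-A) = I_q - A + 2^{-1}A^2 + O(\|A\|^3)$, so that $y^\top\exp(-A)y - \|y\|^2 = -y^\top A y + 2^{-1} y^\top A^2 y + O(\|A\|^3 \|y\|^2)$, and combine with $\rho(s+\delta) - \rho(s) = \rho'(s)\delta + 2^{-1}\rho''(s)\delta^2 + o(\delta^2)$. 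Collecting terms of orders $\|A\|$ and $\|A\|^2$ and integrating against $Q_B$ produces the first-order contribution $\tr(A) - \langle A, \Psi_\rho(Q_B)\rangle = \langle A, G_\rho(Q_B)\rangle$ and the second-order contribution $2^{-1} H_\rho(A, Q_B)$ from the statement. The main analytic obstacle is justifying that the remainders are $o(\|A\|^2)$ uniformly in $y$ after integration: in Setting~2 I would exploit the growth condition $s\psi'(s) \le \kappa \psi(s)$, which together with $s\rho'(s) = \psi(s)$ yields $s^2|\rho''(s)| \le (\kappa+1)\psi(s)$, dominating the relevant integrands by a constant multiple of $\|A\|^2\psi(\|y\|^2)$; this is integrable under Condition~\eqref{eq:existence}, which is inherited by $Q_B$. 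Setting~0 is handled by a direct calculation since $\rho(s) = q\log s$.

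To establish $H_\rho(A,Q) \ge 0$ and characterize equality, I would work in the spectral coordinates $A = \sum_j \lambda_j P_j$, where the $P_j$ project onto the distinct eigenspaces $\V_j$, and set $\hat y := y/\|y\|$. In Setting~0, substituting $\rho'(s) = q/s$ and $\rho''(s) = -q/s^2$ collapses $H_\rho(A,Q)$ to $q\int\bigl[\|A\hat y\|^2 - (\hat y^\top A\hat y)^2\bigr]\,Q(dy)$, which is non-negative by Cauchy--Schwarz, with equality forcing $\hat y$ to be an eigenvector of $A$ for $Q$-a.e.\ $y$, that is $Q(\bigcup_j \V_j) = 1$. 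In Setting~2, using $\rho''(s) = \psi'(s)/s - \psi(s)/s^2$ I would rewrite
\begin{equation*}
H_\rho(A,Q) \ = \ \int \psi(\|y\|^2)\bigl[\|A\hat y\|^2 - (\hat y^\top A \hat y)^2\bigr] \, Q(dy) + \int \psi'(\|y\|^2)\|y\|^2 (\hat y^\top A\hat y)^2 \, Q(dy),
\end{equation*}
both integrands being non-negative. Equality in the first integral forces $\hat y$ to be an eigenvector of $A$ for $Q$-a.e.\ $y$ (since $\psi(\|y\|^2) > 0$ for $y \ne 0$ by strict monotonicity of $\psi$ with $\psi(0) = 0$), and equality in the second then forces the corresponding eigenvalue to vanish, yielding $Ay = 0$ for $Q$-a.e.\ $y$, i.e.\ $Q(\mathcal{N}_A) = 1$.
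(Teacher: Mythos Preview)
Your proposal is correct. The paper does not supply its own proof of this lemma---it is stated with a citation to \cite{Duembgen_etal_2015} and no argument appears in Section~\ref{sec:Proofs} or the supplement---so there is nothing in the present paper to compare against directly. Your route (the substitution $y = B^{-1}x$ reducing to $\Sigma = I_q$ and replacing $Q$ by $Q_B$, second-order expansion of $\exp(-A)$ and of $\rho$, remainder domination via the bound $s^2|\rho''(s)| \le (\kappa+1)\psi(s)$ together with Condition~\eqref{eq:existence}, and the Cauchy--Schwarz decomposition of $H_\rho$ in the eigenbasis of $A$) is exactly the natural argument and is the one developed in that reference.

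One small remark on the equality case in Setting~2: your conclusion that the eigenvalue must vanish relies on $\psi'(\|y\|^2) > 0$, whereas Setting~2 literally only asserts that $\psi$ is strictly increasing, which for a $C^1$ function does not rule out $\psi'(s_0) = 0$ at isolated points. This is a cosmetic gap rather than a flaw in your strategy, and the standard examples (Huber, multivariate $t$) all have $\psi' > 0$ throughout.
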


\subsection{Regularization}
\label{subsec:Regularization}

As noted in the introduction, most research on robust estimation of scatter has mainly centered on the unrestricted estimation of the scatter matrix. But the previous results imply that a unique minimizer of $L_\rho(\cdot,Q)$ can only exist if $Q(\V) < 1$ for any proper linear subspace $\V$ of $\R^q$. This excludes empirical distributions $Q_n$ with sample size $n < q$. Some previous work on regularization does exist, with one approach being to introduce a regularization or shrinkage term to the $M$-estimating equations \eqref{eq:Mee}, as is done for the special function $\rho(s) = q \log s$ in  \cite{Chen_etal_2011, Couillet_McKay_2014, Pascal_etal_2014, Wiesel_2012b} and for more general $M$-estimates in \cite{Abramovich_etal_2013, Besson_etal_2013}. Proving existence and/or uniqueness to regularized $M$-estimation equations, though, is not straightforward, and most of the work using this approach does not include conditions to insure such properties. 

Here, we consider a penalized objective function approach, that is we aim to minimize over $\Sigma \in \Rqqsympd$ the function
\begin{equation}
\label{eq:Lpen}
	f_\alpha(\Sigma) \ := \ L_\rho(\Sigma,Q) + \alpha \pi(\Sigma)
\end{equation}
for some tuning parameter $\alpha > 0$ and penalty function $\pi : \Rqqsympd \to \R$. For the special function $\rho(s) = q \log s$, the empirical version of this approach has been considered in \cite{Wiesel_2012b} for certain g-convex penalties, although coercivity is not treated and consequently conditions for existence are not given. The empirical version is also studied in \cite{Ollila_Tyler_2014} for general g-convex $\rho$-functions and general g-convex penalties, but conditions for coercivity are only given for the penalty function $\tr(\Sigma^{-1})$.

\begin{Remark}[The graphical lasso]
A popular penalty function is the $l_1$ penalty on the off-diagonal elements of $\Sigma^{-1}$, i.e.\ when $\pi(\Sigma) = \sum_{i<j} | (\Sigma^{-1})_{ij}|$. In the classical setting, i.e.\ when $L_\rho(\Sigma,Q)$ is taken to be proportional to the multivariate normal negative log-likelihood functional, the problem of minimizing \eqref{eq:Lpen} using this $l_1$ penalty is commonly referred to as a graphical lasso. For this case, as $\alpha$ increases the solutions produce a path of increasing zeros in the off-diagonal elements of $\Sigma^{-1}$. A robust graphical lasso can be constructed by considering general $L_\rho(\Sigma,Q)$, as has been proposed e.g.\ in \cite{Finegold_Drton_2011} for the case when $L_\rho(\Sigma,Q)$ is proportional to the negative log-likelihood of an elliptical t-distribution. One drawback to this approach is that when using $\rho$-functions which yield bounded influence estimators, the function $L_\rho(\Sigma,Q)$ is not convex in $\Sigma^{-1}$ and consequently as $\alpha$ increases the solution path may not yield increasing zeros in the off-diagonal elements of $\Sigma^{-1}$. Moreover, as shown in Supplement~\ref{sec:Auxiliary}, this $l_1$ penalty is not g-convex. So even when $L_\rho(\Sigma,Q)$ is strictly g-convex, the uniqueness of a solution to \eqref{eq:Lpen} is not guaranteed. 
\end{Remark}

Here, we are interested in considering \eqref{eq:Lpen} for the case when both $L_\rho(\Sigma,Q)$ and $\pi(\Sigma)$ are g-convex. Obviously this implies that the penalized objective function $f$ is g-convex, too. Moreover, if either $L_\rho(\Sigma,Q)$ or $\pi(\Sigma)$ are strictly g-convex, then $f$ is strictly g-convex as well.

Note that these considerations apply to the special case when $L_\rho(\Sigma,Q)$ is taken to be proportional to the multivariate normal negative log-likelihood functional, i.e.\ $\rho(s) = s$.  For this case, $L_\rho(\Sigma,Q)$ is not only strictly convex in $\Sigma^{-1}$, it is also strictly g-convex in $\Sigma^{-1}$ and hence in $\Sigma$. Thus, in this classical setting, in addition to penalty functions which are convex in $\Sigma^{-1}$, penalty functions which are g-convex in $\Sigma$ also ensure the uniqueness of a minimum to \eqref{eq:Lpen}, provided a minimum exists.

The existence of a minimizer to \eqref{eq:Lpen} depends on the geodesic coercivity of $f(\Sigma)$, which in turn depends of the behavior of $L_\rho(\Sigma,Q)$ and $\pi(\Sigma)$ as $\|\log(\Sigma)\| \to \infty$. For $L_\rho(\Sigma,Q)$, Proposition~\ref{prop:g-coercivity} provides a complete answer, so it remains to specify and investigate the penalties $\pi(\Sigma)$.

\paragraph{Shrinkage towards $I_q$.}
Functions which penalize deviations from $I_q$ are
\begin{align*}
	\Pi_0(\Sigma) \
	&:= \ \tr(\Sigma) + \tr(\Sigma^{-1})
		\ = \ \sum_{i=1}^q (\sigma_i + \sigma_i^{-1}) , \\
	\Pi_1(\Sigma) \
	&:= \ \log\det(\Sigma) + \tr(\Sigma^{-1})
		\ = \ \sum_{i=1}^q (\log \sigma_i + \sigma_i^{-1}) , \\
	\Pi_2(\Sigma) \
	&:= \ \| \log(\Sigma) \|^2 
		\ = \ \sum_{i=1}^q (\log \sigma_i)^2 ,
\end{align*}
where $\sigma_1 \ge \cdots \ge \sigma_q > 0$ are the eigenvalues of $\Sigma$. In all three cases, $\Sigma = I_q$ is the unique minimizer. Note that $\Pi_2(\Sigma)$ is just the square of the geodesic distance $d_g(I_p,\Sigma)$. While $\Pi_0$ and $\Pi_2$ satisfy the symmetry relation $\Pi(\Sigma^{-1}) = \Pi(\Sigma)$, the penalty $\Pi_1(\Sigma)$ is non-symmetric, penalizing very small eigenvalues more severely than very large ones. It corresponds to the Kullback-Leibler divergence between $\NN_q(0,\Sigma)$ and $\NN_q(0,I_q)$ and has been previously considered in \cite{Sun_etal_2014}. In principle one could also use the penalty $\Pi_1'(\Sigma) = \Pi_1(\Sigma^{-1})$, but from a statistical perspective this seems to be less reasonable.

The next lemma summarizes the essential properties of these penalties.

\begin{Lemma}
\label{lem:all.about.Pi}
For $k = 0,1,2$, the penalty function $\Pi_k$ is twice continuously differentiable and strictly geodesically convex on $\Rqqsympd$ with a unique minimum at $I_q$.

\noindent
Precisely, for any $B \in \Rqqns$, as $\Rqqsym \ni A \to 0$,
\[
	\Pi_k(B \exp(A) B^\top)
	\ = \ \Pi_k(BB^\top) + \langle A, G_k(B)\rangle
		+ 2^{-1} H_k(A,B) + o(\|A\|^2)
\]
with $G_k(B)$ and $H_k(A,B)$ given in the following table:
\[
	\begin{array}{|c||c|c|}
	\hline
	k & G_k(B) & H_k(A,B) \\
	\hline\hline
	0_{\strut}^{\strut}
		& B^\top B - B^{-1} B^{-\top}
		& \langle A^2, B^\top B + B^{-1} B^{-\top} \rangle \\
	\hline
	1_{\strut}^{\strut}
		& I_q - B^{-1} B^{-\top}
		& \langle A^2, B^{-1} B^{-\top} \rangle \\
	\hline
	2_{\strut}^{\strut}
		& 2 \log(B^\top B)
		& 2 \sum_{i,j=1}^q W_{ij}(\lambda) (v_i^\top A v_j)^2 \\
	\hline
	\end{array}
\]
Here $B^\top B = V D(e^\lambda) V^\top$ with $V = [v_1, v_2, \ldots, v_q] \in \Rqqorth$ and $\lambda \in \R^q$, and
\[
	W_{ij}(\lambda)
	\ := \ \frac{(\lambda_i - \lambda_j)/2}{ \tanh((\lambda_i - \lambda_j)/2)}
	\ \ge \ 1
\]
with the convention $0/\tanh(0) := 1$. In particular, $H_k(A,B) > 0$ whenever $A \ne 0$.

\noindent
Moreover, if $A = U D(-\gamma) U^\top$ with $U \in \Rqqorth$ and $\gamma \in \R^q \setminus \{0\}$ such that $\gamma_1 \le \gamma_2 \le \cdots \le \gamma_q$, then
\[
	\lim_{t \to \infty} \, \frac{d}{dt} \Pi_k(\exp(tA))
	\ = \ \begin{cases}
		\infty
			&\text{if} \ k = 0 , \\
		1_{[\gamma_q > 0]} \infty - \sum_{i=1}^q \gamma_i
			&\text{if} \ k = 1 , \\
		\infty
			&\text{if} \ k = 2 .
	\end{cases}
\]
\end{Lemma}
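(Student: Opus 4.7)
The strategy for each $\Pi_k$ is to establish the tabulated second-order expansion of $\Pi_k(B\exp(A)B^\top)$ in $A$ about the origin; once these are in hand, Lemma~\ref{lem:smoothness2} gives twice continuous differentiability of $\Pi_k$, Lemma~\ref{lem:g-convexity} reduces strict g-convexity to $H_k(A,B)>0$ for $A\ne 0$, and the coercivity limits reduce to one-variable calculus on the spectrum of $\exp(tA)$.

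For $\Pi_0$ and $\Pi_1$ the expansions amount to a direct substitution. Using the cyclic identities $\tr(B\exp(A)B^\top)=\tr(B^\top B\exp(A))$ and $\tr((B\exp(A)B^\top)^{-1})=\tr(B^{-1}B^{-\top}\exp(-A))$ together with $\exp(\pm A)=I\pm A+A^2/2+O(\|A\|^3)$ one reads off $G_0, H_0$; combining this with the exact identity $\log\det(B\exp(A)B^\top)=\log\det(BB^\top)+\tr(A)$ from Example~\ref{ex0} yields $G_1, H_1$. Strict g-convexity of $\Pi_0$ then follows from Example~\ref{ex1} and Remark~\ref{rem:Inversion} (since $\tr(\Sigma)$ and $\tr(\Sigma^{-1})$ are each strictly g-convex), and that of $\Pi_1$ from geodesic linearity of $\log\det$ combined with strict g-convexity of $\tr(\Sigma^{-1})$.

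The hard part is the expansion for $\Pi_2(\Sigma)=\|\log\Sigma\|^2$. Using the SVD of $B$ and the orthogonal-invariance identities in Remark~\ref{rem:Orthogonal transformations}, I may reduce to $B=D(e^{\lambda/2})$, so that $BB^\top=B^\top B=D(e^\lambda)$ and $\Pi_2(\Sigma(t))=\|C(t)\|^2$ with $C(t):=\log(D(e^{\lambda/2})\exp(tA)D(e^{\lambda/2}))$. Expanding the matrix logarithm at the diagonal matrix $D(e^\lambda)$ via the Daleckii--Krein formula for its first and second Fréchet derivatives produces $C(t)=D(\lambda)+tC_1+t^2 C_2+O(t^3)$ in which $(C_1)_{ij}=N_{ij}(\lambda)A_{ij}$ with $N_{ij}(\lambda)=((\lambda_i-\lambda_j)/2)/\sinh((\lambda_i-\lambda_j)/2)$ and $N_{ii}(\lambda)=1$, while $C_2$ is an explicit combination of $(A^2)_{ii}$ and second divided differences of $\log$ at the triples $(e^{\lambda_i},e^{\lambda_k},e^{\lambda_j})$. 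Substituting into $\|C(t)\|^2=\|D(\lambda)\|^2+2t\langle D(\lambda),C_1\rangle+t^2(\|C_1\|^2+2\langle D(\lambda),C_2\rangle)+O(t^3)$ collapses the linear coefficient to $\langle A,2D(\lambda)\rangle$, and after the hyperbolic simplification reducing $N_{ij}^2$ plus the divided-difference contribution to $2W_{ij}(\lambda)$ (for $i\ne j$; the diagonal giving $2W_{ii}=2$ matches trivially), the quadratic coefficient becomes $2\sum_{i,j}W_{ij}(\lambda)A_{ij}^2$. Undoing the orthogonal reduction replaces $A_{ij}$ by $v_i^\top Av_j$ and $D(\lambda)$ by $\log(B^\top B)=VD(\lambda)V^\top$, delivering the tabulated $G_2,H_2$.

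It remains to record the consequences. Positivity of $H_0,H_1$ on $A\ne 0$ follows from the identity $\langle A^2,M\rangle=\tr(M^{1/2}A^2M^{1/2})>0$ whenever $M\in\Rqqsympd$ and $A\ne 0$, while positivity of $H_2$ follows at once from $W_{ij}(\lambda)=z\coth(z)\ge 1$ at $z=(\lambda_i-\lambda_j)/2$ (an immediate consequence of $\tanh(z)\le z$ for $z\ge 0$ and odd symmetry), giving the stronger bound $H_2(A,B)\ge 2\|A\|^2$. Uniqueness of the minimizer at $I_q$ is read off the eigenvalue representations $\sum_i(\sigma_i+\sigma_i^{-1})$, $\sum_i(\log\sigma_i+\sigma_i^{-1})$, and $\sum_i(\log\sigma_i)^2$, each a sum of strictly convex scalar functions of the eigenvalues with common minimum at $\sigma_i=1$. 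Finally, for $A=UD(-\gamma)U^\top$ orthogonal invariance of each $\Pi_k$ together with $\exp(tA)=UD(e^{-t\gamma})U^\top$ gives $\Pi_0(\exp(tA))=\sum_i(e^{t\gamma_i}+e^{-t\gamma_i})$, $\Pi_1(\exp(tA))=-t\sum_i\gamma_i+\sum_i e^{t\gamma_i}$, and $\Pi_2(\exp(tA))=t^2\|\gamma\|^2$. Differentiating in $t$ and sending $t\to\infty$ yields the three stated limits; the case split in $\Pi_1$ is exactly whether some $\gamma_i>0$ drives $\sum_i\gamma_i e^{t\gamma_i}$ to $+\infty$, equivalently (since $\gamma_1\le\cdots\le\gamma_q$) whether $\gamma_q>0$.
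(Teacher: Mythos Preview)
Your proposal is correct and follows essentially the same route as the paper: direct power-series expansion of $\tr(\Sigma^{\pm 1})$ for $\Pi_0,\Pi_1$, a Daleckii--Krein / second-order logarithm expansion after SVD reduction for $\Pi_2$ (the content of the paper's Corollary~\ref{cor:penalties}), and one-variable eigenvalue calculus for the coercivity limits. Your argument for $H_0,H_1>0$ via $\langle A^2,M\rangle=\|AM^{1/2}\|^2$ is slightly cleaner than the paper's reference to the diagonalization argument of Example~\ref{ex2}, but otherwise the two proofs coincide.
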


This lemma and Theorem~\ref{thm:Mfunc} together show that using any of the penalties $\Pi_0$, $\Pi_1$ or $\Pi_2$ together with a g-convex function $\rho$ yields an objective function $f$ in \eqref{eq:Lpen} which is strictly g-convex. In particular, by Corollary~\ref{cor:uniqueness.minimizer}, \eqref{eq:Lpen} has a unique minimizer or no minimizer. With $\Pi_0$ or $\Pi_2$ g-coercivity and thus existence of a unique minimizer is guaranteed, regardless of $Q$. This is in contrast to the non-regularized case for which conditions on $Q$ are needed to insure the existence of a minimizer.

Shrinkage towards a different given matrix $\Sigma_o \in \Rqqsympd$ is obtained by replacing $\Sigma$ in $\Pi_k(\Sigma)$ with $\Sigma_o^{-1/2} \Sigma \Sigma_o^{-1/2}$.

\paragraph{Shrinkage towards multiples of $I_q$.}
Functions which penalize large condition numbers $\sigma_1/\sigma_q$ of $\Sigma$ are given by
\begin{align*}
	\pi_0(\Sigma) \
	&:= \ \log\tr(\Sigma) + \log\tr(\Sigma^{-1})
		\ = \ \log \Bigl( \sum_{i=1}^q \sigma_i \Bigr)
			+ \log \Bigl( \sum_{i=1}^q \sigma_i^{-1} \Bigr) , \\
	\pi_1(\Sigma) \
	&:= \ q^{-1} \log\det(\Sigma) + \log\tr(\Sigma^{-1})
		\ = \ q^{-1} \sum_{i=1}^q \log \sigma_i
			+ \log \Bigl( \sum_{i=1}^q \sigma_i^{-1} \Bigr) , \\
	\pi_2(\Sigma) \
	&:= \ \Pi_2(\det(\Sigma)^{-1/q} \Sigma)
		\ = \ \sum_{i=1}^q \Bigl( \log \sigma_i
			- q^{-1} \sum_{j=1}^q \log \sigma_j \Bigr)^2 .
\end{align*}
All three functions are scale-invariant with $\Sigma$ minimizing $\pi_j(\Sigma)$ if, and only if, $\Sigma$ is a positive multiple of $I_q$. Moreover, $\pi_0$ and $\pi_2$ satisfy the symmetry relation $\pi(\Sigma^{-1}) = \pi(\Sigma)$, whereas $\pi_1(\Sigma)$ penalizes relatively small eigenvalues more severely than relatively large ones. Here are the main facts:

\begin{Lemma}
\label{lem:all.about.pi}
For $k = 0,1,2$, the penalty function $\pi_k$ is scale-invariant, twice continuously differentiable and geodesically convex. On $\M^{(q)}$ it is strictly geodesically convex with a unique minimum at $I_q$.

\noindent
Precisely, for any $B \in \Rqqns$, as $\Rqqsym \ni A \to 0$,
\[
	\pi_k(B \exp(A) B^\top)
	\ = \ \pi_k(BB^\top) + \langle A, G_k(B)\rangle
		+ 2^{-1} H_k(A,B) + o(\|A\|^2)
\]
with $G_k(B)$ and $H_k(A,B)$ given in the following table:
\[
	\begin{array}{|c||c|c|}
	\hline
	k & G_k(B) & H_k(A,B) \\
	\hline\hline
	0^{\strut}
		& N(B^\top B) - N(B^{-1} B^{-\top})
		& \langle A^2, N(B^\top B) \rangle
			- \langle A, N(B^\top B)\rangle^2 \\
	\strut_{\strut}
		&
		& + \ \langle A^2, N(B^{-1}B^{-\top})\rangle
			- \langle A, N(B^{-1} B^{-\top})\rangle^2 \\
	\hline
	1_{\strut}^{\strut}
		& q^{-1} I_q - N(B^{-1}B^{-\top})
		& \langle A^2, N(B^{-1} B^{-\top}) \rangle
			- \langle A, N(B^{-1} B^{-\top}) \rangle^2 \\
	\hline
	2_{\strut}^{\strut}
		& 2 \log(B^\top B)^o
		& 2 \sum_{i,j=1}^q W_{ij}(\lambda) (v_i^\top A^o v_j)^2 \\
	\hline
	\end{array}
\]
Here $N(\Sigma) := \tr(\Sigma)^{-1} \Sigma$, $C^o := C - q^{-1} \tr(C) I_q$ for $C \in \Rqqsym$, and $V = [v_1, \ldots, v_q]$, $\lambda$, $W_{ij}(\lambda)$ are defined as in Lemma~\ref{lem:all.about.Pi}. In particular, $H_k(A,B) > 0$ whenever $A \in \W^{(q)} \setminus \{0\}$.

\noindent
Moreover, if $A = U D(-\gamma) U^\top$ with $U \in \Rqqorth$ and $\gamma \in \R^q$ such that $\gamma_1 \le \gamma_2 \le \cdots \le \gamma_q$ and $\gamma_1 < \gamma_q$,
\[
	\lim_{t \to \infty} \, \frac{d}{dt} \pi_k(\exp(tA))
	\ = \ \begin{cases}
		\gamma_q - \gamma_1
			&\text{if} \ k = 0 \\
		\gamma_q - \bar{\gamma}
			&\text{if} \ k = 1 \\
		\infty
			&\text{if} \ k = 2
	\end{cases}
\]
with $\bar{\gamma} := q^{-1} \sum_{i=1}^q \gamma_i$.
\end{Lemma}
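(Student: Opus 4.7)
The plan is to express each $\pi_k$ as a sum of building blocks whose geodesic Taylor expansions are already available, and then assemble. Specifically,
\begin{align*}
\pi_0(\Sigma) &= \log\tr(\Sigma) + \log\tr(\Sigma^{-1}) , \\
\pi_1(\Sigma) &= q^{-1}\log\det(\Sigma) + \log\tr(\Sigma^{-1}) , \\
\pi_2(\Sigma) &= \Pi_2(\Sigma) - q^{-1}(\log\det\Sigma)^2 ,
\end{align*}
where the last identity follows by expanding $\|\log\Sigma - q^{-1}\log\det(\Sigma) I_q\|^2$ and using $\tr\log\Sigma = \log\det\Sigma$. Each summand is g-convex by Example~\ref{ex0} ($\log\det$ is even g-linear), Example~\ref{ex2} and Remark~\ref{rem:Inversion} (for $\log\tr$ and its inversion), and Lemma~\ref{lem:all.about.Pi} (for $\Pi_2$); the non-g-convex summand $-q^{-1}(\log\det)^2$ in $\pi_2$ will be absorbed by $\Pi_2$ via a direct Taylor calculation. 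Scale invariance of each $\pi_k$ is immediate from the eigenvalue formulas in the statement.

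To obtain the Taylor coefficients I would apply Remark~\ref{rem:Inversion2} to transfer the expansion of $f$ to that of $f \circ (\cdot)^{-1}$ (replace $B$ by $B^{-\top}$, negate the gradient), and then add. From Example~\ref{ex2}, $\log\tr(\Sigma)$ has g-gradient $\|B\|^{-2}B^\top B = N(B^\top B)$ and quadratic form $\langle A^2, N(B^\top B)\rangle - \langle A, N(B^\top B)\rangle^2$, which combined with Example~\ref{ex0} yields the first two rows of the table at once. For the third row, Example~\ref{ex0} gives $\log\det(B\exp(A)B^\top) = \log\det(BB^\top) + \tr A$, so subtracting $q^{-1}$ times its square contributes $-2q^{-1}\log\det(BB^\top)\langle I_q, A\rangle$ in first order (converting the gradient $2\log(B^\top B)$ of $\Pi_2$ into its trace-free part $2\log(B^\top B)^o$) and $-2q^{-1}(\tr A)^2$ in the quadratic form. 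Using the convention $W_{ii}(\lambda) = 1$, a short calculation shows that adding this correction to $2\sum_{i,j}W_{ij}(\lambda)(v_i^\top A v_j)^2$ yields exactly $2\sum_{i,j}W_{ij}(\lambda)(v_i^\top A^o v_j)^2$, as claimed.

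For strict g-convexity on $\W^{(q)}$, observe that for $A \in \W^{(q)}\setminus\{0\}$ one has $A^o = A$ and $W_{ij}(\lambda) \ge 1$, hence $H_2(A,B) \ge 2\sum(v_i^\top A v_j)^2 = 2\|A\|^2 > 0$; for $\pi_0$ and $\pi_1$, the $H_k$ are exactly the quadratic forms of Example~\ref{ex2} evaluated at $B$ and/or $B^{-\top}$, which vanish only when $A$ is a multiple of $I_q$, a case excluded by membership in $\W^{(q)}\setminus\{0\}$. Uniqueness of the minimum at $I_q$ on $\M^{(q)}$ then follows from the scale-invariant version of Corollary~\ref{cor:uniqueness.minimizer} (Section~\ref{subsec:Scale-invariance}), after checking directly from the table that $G_0(I_q) = G_1(I_q) = q^{-1}I_q - N(I_q) = 0$ and $G_2(I_q) = 2\log(I_q)^o = 0$.

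Finally, for the coercivity limits, substituting $\exp(tA) = U D(e^{-t\gamma}) U^\top$ makes the one-variable restrictions explicit:
\[
\pi_0(\exp(tA)) = \log\sum_{i=1}^q e^{-t\gamma_i} + \log\sum_{i=1}^q e^{t\gamma_i} , \quad
\pi_1(\exp(tA)) = -t\bar\gamma + \log\sum_{i=1}^q e^{t\gamma_i} ,
\]
and $\pi_2(\exp(tA)) = t^2 \|A^o\|^2$. Standard log-sum-exp dominance as $t \to \infty$ gives derivatives tending to $\gamma_q - \gamma_1$ and $\gamma_q - \bar\gamma$ respectively, while the quadratic in the third case produces a derivative that diverges whenever $A^o \ne 0$, i.e.\ whenever $\gamma_1 < \gamma_q$. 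The main obstacle throughout is the $\pi_2$ computation, where the interplay of the weights $W_{ij}(\lambda)$ with the $-q^{-1}(\tr A)^2$ correction requires the diagonal identity $W_{ii}(\lambda) = 1$ to convert the raw $\Pi_2$ Hessian into its trace-free form; the remaining items are routine assembly of already-established facts.
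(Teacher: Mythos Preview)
Your proposal is correct and follows essentially the same route as the paper: for $\pi_0$ and $\pi_1$ you assemble the expansions from Examples~\ref{ex0} and~\ref{ex2} together with Remark~\ref{rem:Inversion2}, and for $\pi_2$ you use the identity $\pi_2(\Sigma) = \Pi_2(\Sigma) - q^{-1}(\log\det\Sigma)^2$ and the diagonal identity $W_{ii}(\lambda)=1$ to convert the $\Pi_2$ Hessian into its trace-free form---which is exactly what the paper does (via Corollary~\ref{cor:penalties}, whose proof carries out the same subtraction). The only visible difference is how the unique minimum at $I_q$ is pinned down: the paper argues directly via elementary inequalities on the eigenvalues, whereas you invoke the critical-point criterion $G_k(I_q)=0$ together with strict g-convexity on $\M^{(q)}$; for that step the relevant reference is Corollary~\ref{cor:minimizers.g-convex} rather than Corollary~\ref{cor:uniqueness.minimizer}.
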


Of course one could replace any of these penalties $\pi_k(\Sigma)$ with a non-decreasing convex function thereof. As pointed out in Remark~\ref{rem:Transformation}, this would preserve geodesic convexity.

\paragraph{A scale-invariant example.} We consider the special case where $\rho(s) = q \log s$ for $s > 0$ and $Q(\{0\}) = 0$. Since $L_\rho(\Sigma,Q)$ is scale-invariant, it is natural to choose a penalty which is scale-invariant, too, and to treat $f$ as a function on $\M^{(q)}$. If $\pi$ is strictly g-convex on the latter set, then $f$ inherits this property.

As to g-coercivity, let $A = U D(-\gamma) U^\top$ with $U = [u_1, \ldots, u_q] \in \Rqqorth$ and $\gamma \in \R^q$ such that $\gamma_1 \le \cdots \le \gamma_q$ and $\gamma_1 < \gamma_q$. If $\pi = \pi_0$, then
\begin{align*}
	\lim_{t \to \infty} \frac{d}{dt} f(\exp(tA)) \
	&= \ q \sum_{k=1}^{q-1} (k/q - Q(\V_k))(\gamma_{k+1} - \gamma_k) 
		+ \alpha (\gamma_q - \gamma_1) \\
	&= \ q \sum_{k=1}^{q-1} \bigl( (k + \alpha)/q - Q(\V_k) \bigr)
		(\gamma_{k+1} - \gamma_k) .
\end{align*}
Thus $f$ is g-coercive on $\M^{(q)}$ if, and only if,
\[
	Q(\V) \ < \ (\dim(\V) + \alpha)/q
\]
for any subspace $\V$ of $\R^q$ with $1 \le \dim(\V) < q$. If $\pi = \pi_1$, then
\begin{align*}
	\lim_{t \to \infty} \frac{d}{dt} f(\exp(tA)) \
	&= \ q \sum_{k=1}^{q-1} (k/q - Q(\V_k))(\gamma_{k+1} - \gamma_k) 
		+ \alpha (\gamma_q - \bar{\gamma}) \\
	&= \ q \sum_{k=1}^{q-1} (k/q - Q(\V_k))(\gamma_{k+1} - \gamma_k) 
		+ \alpha \sum_{k=1}^{q-1} \frac{k}{q} (\gamma_{k+1} - \gamma_k) \\
	&= \ q \sum_{k=1}^{q-1} \bigl( (k/q) (1 + \alpha/q) - Q(\V_k) \bigr)
		(\gamma_{k+1} - \gamma_k) .
\end{align*}
Thus $f$ is g-coercive on $\M^{(q)}$ if, and only if,
\[
	Q(\V) \ < \ (1 + \alpha/q) \dim(\V) / q
\]
for any subspace $\V$ of $\R^q$ with $1 \le \dim(\V) < q$.

In case of
\[
	\lim_{t \to \infty} \frac{d}{dt} \pi(\exp(tA)) \ = \ \infty
\]
for any fixed $A \in \W^{(q)} \setminus \{0\}$, the function $f$ is g-coercive on $\M^{(q)}$ without further constraints on $Q$. This is the case, for instance, if $\pi(\Sigma) = \pi_2(\Sigma)$ or
\[
	\pi(\Sigma) \ = \ \psi \bigl( \pi_k(\Sigma) - \pi_k(I_q) \bigr)
\]
for $k = 0,1$ with a non-decreasing convex function $\psi : [0,\infty) \to [0,\infty)$ such that $\psi(t)/t \to \infty$ as $t \to \infty$. Explicit examples for such functions $\psi$ are
\begin{align*}
	\psi(s) \ &:= \ (1 + s)^\gamma/\gamma , \quad \gamma > 1 , \\
	\psi(s) \ &:= \ \exp(c s), \quad c > 0 .
\end{align*}

\subsection{Cross validation}

Rather than choose $\alpha$ in \eqref{eq:Lpen} beforehand, one can use data dependent methods for selecting $\alpha$. One possible approach is to use an oracle type estimator for $\alpha$, as is done in \cite{Chen_etal_2011,Ollila_Tyler_2014}. Such an approach is based upon minimizing the mean square error under a specific distribution with the method being dependent on the choice of the penalty $\pi$ and the $\rho$-function. A more universal approach is to use cross-validation. Here we propose a leave-one-out cross validation approach for the current problem as follows. Let $Q_{n,(i)}$ denoted the empirical distribution when the $i$th data point is removed, and for a given $\alpha$ define
\[
	\hat{\Sigma}_{\alpha,(i)}
	\ := \ \argmin \bigl\{ L_\rho(\Sigma,Q_{n,(i)}) + \alpha \pi(\Sigma) \bigr\} ,
\]
with the minimum being taken over $\Sigma \in \Rqqsympd$. Next, define an aggregate robust measure of how well 
$\hat{\Sigma}_{\alpha,(i)}$ reflects the left-out observation $x_i$ by
\[
	\mathrm{CV}(\alpha) := \ \sum_{i=1}^n
		\bigl\{ \rho(x_i^\top \hat{\Sigma}_{\alpha,(i)}^{-1} x_i^{})
		+ \log \det(\hat{\Sigma}_{\alpha,(i)}^{}) \bigr\} .
\]
The objective is to then minimize $\mathrm{CV}(\alpha)$ over $\alpha \ge 0$. In practice, this would be done over over some finite set of values for $\alpha$. Some examples are given in section \ref{sec:Example}. Since the cross validation approach can be computationally intensive, we first discuss algorithms for computing the regularized $M$-estimators of scatter.

\section{Algorithms}
\label{sec:Algorithm}

There is a rich literature on optimization on Riemannian manifolds, see \cite{Ring_Wirth_2012} and the references therein. For the special case of functions on $\Rqqsympd$, \cite{Sra_Hosseini_2013, Sra_Hosseini_2015} propose various fixed-point and gradient descent methods. Newton-Raphson algorithms would be another possibility but may be inefficient due to the high dimension of Hessian operators. For the minimization of a smooth and g-convex function we propose a partial Newton-Raphson algorithm which is similar to a method of \cite{Duembgen_etal_2016} for pure $M$-functionals of scatter. While the latter method has been designed for special settings in which a certain fixed-point algorithm serves as a fallback option with guaranteed convergence, the present approach is more general.

We consider a twice continuously differentiable function $f : \Rqqsympd \to \R$ such that
\[
	H(A,B) \ > \ 0
	\quad\text{for any} \ A \in \Rqqsym \setminus \{0\} \ \text{and} \ B \in \Rqqns .
\]
In particular, $f$ is strictly g-convex. Furthermore we assume that $f$ is g-coercive, so
\[
	\Sigma_* \ := \ \argmin_{\Sigma \in \Rqqsympd} f(\Sigma)
\]
exists. Finally we assume that $G(B)$ and $H(A,B)$ are continuous in $B \in \Rqqns$ for any fixed $A \in \Rqqsym$. Under these conditions on $f$ one can devise an iterative algorithm to compute the minimizer $\Sigma_*$. According to Lemma~\ref{lem:minimizer.g-convex}, this is equivalent to finding a matrix $B_* \in \Rqqns$ such that $G(B_*) = 0$.

\paragraph{Algorithmic mappings.}
To compute $\Sigma_*$ we iterate a certain mapping
\[
	\phi : \Rqqsympd \to \Rqqsympd
\]
such that $\phi(\Sigma_*) = \Sigma_*$ and $f(\phi(\Sigma)) < f(\Sigma)$ whenever $\Sigma \ne \Sigma_*$. If we replace the latter condition by a somewhat stronger constraint, iterating the mapping $\phi$ yields sequences with guaranteed converge to $\Sigma_*$.

\begin{Lemma}
\label{lem:algorithm}
Suppose that $\phi : \Rqqsympd \to \Rqqsympd$ satisfies $\phi(\Sigma_*) = \Sigma_*$ and
\[
	\limsup_{\Sigma \to \Sigma_o} f(\phi(\Sigma)) \ < \ f(\Sigma_o)
	\quad\text{for any} \ \Sigma_o \in \Rqqsympd \setminus \{\Sigma_*\} .
\]
Let $\Sigma_1 \in \Rqqsympd$ be an arbitrary starting point, and define inductively $\Sigma_{k+1} := \phi(\Sigma_k)$ for $k = 1,2,3,\ldots$. Then
\[
	\lim_{k \to \infty} \Sigma_k \ = \ \Sigma_* .
\]
\end{Lemma}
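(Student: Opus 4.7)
The plan is a standard descent-plus-compactness argument: show that $(f(\Sigma_k))$ decreases monotonically, confine the iterates to a compact subset via $g$-coercivity, and then use the uniform strict-descent hypothesis to rule out any accumulation point other than $\Sigma_*$.

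First I would show that $(f(\Sigma_k))$ is non-increasing. Reading the hypothesis at $\Sigma_o = \Sigma_k \ne \Sigma_*$ as providing some $\epsilon > 0$ and a neighborhood $U$ of $\Sigma_k$ on which $f\circ \phi \le f(\Sigma_k) - \epsilon$, one has in particular $f(\Sigma_{k+1}) = f(\phi(\Sigma_k)) < f(\Sigma_k)$; if instead $\Sigma_k = \Sigma_*$, then $\Sigma_{k+1} = \Sigma_*$ via the fixed-point property $\phi(\Sigma_*) = \Sigma_*$. So $(f(\Sigma_k))$ is non-increasing, bounded below by $f(\Sigma_*)$, and converges to some $f_\infty \ge f(\Sigma_*)$.

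Next I would use $g$-coercivity to confine the iterates to a compact set. Let $K := \{\Sigma \in \Rqqsympd : f(\Sigma) \le f(\Sigma_1)\}$; by the preceding monotonicity, $\Sigma_k \in K$ for every $k$. Continuity of $f$ makes $K$ closed in $\Rqqsympd$, and $g$-coercivity yields some $R > 0$ such that $\|\log \Sigma\| \le R$ for all $\Sigma \in K$. Hence $K$ is contained in $\exp(\{A \in \Rqqsym : \|A\| \le R\})$, which is the continuous image of a compact ball in $\Rqqsym$ and therefore compact in $\Rqqsympd$; as a closed subset of a compact set, $K$ itself is compact.

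For the final step, let $\Sigma_{k_j} \to \Sigma_\infty \in K$ be any subsequential limit. Continuity of $f$ gives $f(\Sigma_\infty) = \lim_j f(\Sigma_{k_j}) = f_\infty$. Assume for contradiction that $\Sigma_\infty \ne \Sigma_*$. The hypothesis then supplies $\epsilon > 0$ and a neighborhood $U$ of $\Sigma_\infty$ with $f(\phi(\Sigma)) \le f(\Sigma_\infty) - \epsilon$ for every $\Sigma \in U$; for $j$ large enough $\Sigma_{k_j} \in U$, whence $f(\Sigma_{k_j+1}) \le f_\infty - \epsilon$, contradicting $f(\Sigma_k) \downarrow f_\infty$. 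Thus every subsequential limit equals $\Sigma_*$, and combined with compactness of $K$ this forces $\Sigma_k \to \Sigma_*$. The only delicate point I foresee is the interpretation of the $\limsup$ hypothesis as producing a \emph{uniform} strict drop on a full neighborhood of $\Sigma_o$ (rather than merely along a particular approaching sequence); once that reading is in hand, the descent-compactness-subsequence machinery is essentially mechanical.
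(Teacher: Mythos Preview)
Your proof is correct and follows essentially the same descent--compactness route as the paper: establish that $(f(\Sigma_k))_k$ is non-increasing, confine the iterates to the compact sublevel set $\{f \le f(\Sigma_1)\}$ via g-coercivity, extract a convergent subsequence with limit $\Sigma_o \ne \Sigma_*$, and derive a contradiction from the $\limsup$ hypothesis. The paper's version is a touch more compact---it invokes the inequality $\limsup_\ell f(\phi(\Sigma_{k(\ell)})) \le \limsup_{\Sigma \to \Sigma_o} f(\phi(\Sigma)) < f(\Sigma_o)$ directly rather than unpacking it into an $\epsilon$-neighborhood---but the argument is the same, and your remark about reading the $\limsup$ over a full (not punctured) neighborhood is exactly the convention the paper tacitly uses when it asserts monotonicity ``by definition.''
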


This lemma belongs to the folklore in optimization theory. For the reader's convenience we provide its short proof in Section~\ref{sec:Auxiliary}.

\paragraph{Construction of $\phi$.}
Let $\Sigma = BB^\top$ with $B \in \Rqqns$ be our current candidate for $\Sigma_*$. Note that the quadratic term $H(A,B)$ may be rewritten as
\[
	H(A,B) \ = \ \langle A, H_B A\rangle
\]
for a self-adjoint linear operator $H_B : \Rqqsym \to \Rqqsym$ with strictly positive eigenvalues. Thus a promising new candidate for $\Sigma_*$ would be
\[
	\phi_{\rm fN}(\Sigma) \ := \ B \exp(A_{\rm fN}) B^\top
\]
with
\[
	A_{\rm fN}
	\ := \ \argmin_{A \in \Rqqsym}
		\bigl( \langle A, G(B)\rangle + 2^{-1} H(A, B) \bigr)
	\ = \ - H_B^{-1} G(B) ,
\]
a full Newton step in local geodesic coordinates.

Computing $A_{\rm fN}$ would require substantial memory and computation time, though. Alternatively one could try a gradient descent step:
\[
	\phi_{\rm G}(\Sigma) \ := \ B \exp(A_{\rm G}) B^\top
\]
with
\[
	A_{\rm G}
	\ := \ \argmin_{A \in \{t G(B) : t \in \R\}}
		\bigl( \langle A, G(B)\rangle + 2^{-1} H(A, B) \bigr)
	\ = \ - \frac{\|G(B)\|^2}{H(G(B),B)} \, G(B) .
\]

As a compromise between a full Newton and a mere gradient step we propose a partial Newton step: To this end we consider a spectral decomposition
\[
	G(B) \ = \ U D(\lambda) U^\top
\]
with an orthogonal matrix $U = U(B) \in \Rqq$ and a vector $\lambda = \lambda(B) \in \R^q$. Then we define
\[
	\phi_{\rm pN}(\Sigma) \ := \ B \exp(A_{\rm pN}) B^\top
\]
with
\[
	A_{\rm pN} = A_{\rm pN}(B,U)
	\ := \ \argmin_{A \in \{U D(x) U^\top : x \in \R^q\}}
		\bigl( \langle A, G(B)\rangle + 2^{-1} H(A, B) \bigr) .
\]
This may be computed explicitly: Since
\[
	\langle U D(x) U^\top, G(B)\rangle + 2^{-1} H(U D(x) U^\top, B)
	\ = \ x^\top \lambda(B) + 2^{-1} x^\top \underline{H}(BU) x
\]
for a certain matrix $\underline{H}(BU) \in \Rqqsympd$, we may write
\[
	A_{\rm pN} \ = \ - U D \bigl( \underline{H}(BU)^{-1} \lambda(B) \bigr) U^\top .
\]

If $\Sigma = BB^\top$ is far from $\Sigma_*$, the matrix $\phi_{\rm pN}(\Sigma)$ need not be better than $\Sigma$ itself. To avoid poor steps we introduce a simple step size correction and define finally
\[
	\phi(\Sigma) \ := \ B \exp \bigl( 2^{-m(BU)} A_{\rm pN} \bigr) B^\top
	\ = \ BU D \bigl(
		\exp \bigl( - 2^{-m(BU)} \underline{H}(BU)^{-1} \lambda(B) \bigr) \bigr)
		(BU)^\top
\]
with $m(BU)$ being the smallest integer $m \ge 0$ such that
\[
	f \bigl( B \exp(2^{-m} A_{\rm pN}) B^\top \bigr) - f(\Sigma)
	\ \le \ 2^{-m} \langle A_{\rm pN}, G(B) \rangle / C
\]
for a given $C > 2$. The rationale behind this definition is the fact that
\[
	\min_{x \in \R^q}
		\bigl( \langle UD(x)U^\top, G(B)\rangle + 2^{-1} H(UD(x)U^\top, B) \bigr)
	\ = \ \langle A_{\rm pN}, G(B)\rangle / 2
\]
and
\[
	\lim_{m \to \infty}
		\frac{f \bigl( B \exp(2^{-m} A_{\rm pN}) B^\top \bigr) - f(\Sigma)}{2^{-m}}
	\ = \ \langle A_{\rm pN}, G(B)\rangle .
\]
Note that $\phi(\Sigma) = \Sigma$ whenever $G(B) = 0$, which is equivalent to $\Sigma = \Sigma_*$. Otherwise
\[
	\langle A_{\rm pN}, G(B)\rangle
	\ = \ - \lambda(B)^\top \underline{H}(BU)^{-1} \lambda(B)
	\ < \ 0 .
\]

This algorithmic mapping $\phi$ has the desired properties, no matter how the factor $B$ of $\Sigma = BB^\top$ and the orthogonal matrix $U$ in the spectral decomposition $G(B) = U D(\lambda) U^\top$ are chosen.

\begin{Theorem}
\label{thm:phi}
The algorithmic mapping just defined has the properties described in Lemma~\ref{lem:algorithm}. Moreover, if $\Sigma = BB^\top$ is sufficiently close to $\Sigma_*$, then the number $m(BU)$ in the step size correction equals $0$, whence $\phi(\Sigma) = \phi_{\rm pN}(\Sigma) = B \exp(A_{\rm pN}) B^\top$.
\end{Theorem}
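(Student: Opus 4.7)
The plan is to verify both hypotheses of Lemma~\ref{lem:algorithm} and then establish the local full-step property. The fixed-point identity $\phi(\Sigma_*) = \Sigma_*$ is immediate: at any factorization $\Sigma_* = B_* B_*^\top$, Corollary~\ref{cor:minimizers.g-convex} gives $G(B_*) = 0$, so $\lambda(B_*) = 0$, $A_{\rm pN} = 0$, and the acceptance condition is satisfied vacuously with $m = 0$. Throughout I will exploit the identities
\[
	\langle A_{\rm pN}, G(B)\rangle \ = \ -\lambda(B)^\top \underline{H}(BU)^{-1}\lambda(B)
	\quad\text{and}\quad
	H(A_{\rm pN},B) \ = \ -\langle A_{\rm pN}, G(B)\rangle ,
\]
which follow directly from $A_{\rm pN} = -UD(\underline{H}(BU)^{-1}\lambda(B))U^\top$.

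For any $\Sigma = BB^\top \ne \Sigma_*$, one has $G(B) \ne 0$, and positive definiteness of $\underline{H}(BU)$ yields $\eta := -\langle A_{\rm pN}, G(B)\rangle > 0$. The expansion \eqref{eq:Taylor2} applied with $A = tA_{\rm pN}$ gives
\[
	t^{-1}\bigl[ f(B\exp(tA_{\rm pN})B^\top) - f(\Sigma) \bigr]
	\ = \ \langle A_{\rm pN}, G(B)\rangle + \tfrac{t}{2}H(A_{\rm pN},B) + o(t)
\]
as $t \to 0+$, whose limit $-\eta$ is strictly smaller than $\langle A_{\rm pN}, G(B)\rangle/C = -\eta/C$ since $C > 2 > 1$. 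Hence the acceptance inequality holds for every sufficiently large $m$, so $m(BU)$ is finite, and by construction $f(\phi(\Sigma)) - f(\Sigma) \le -2^{-m(BU)}\eta/C < 0$.

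For the semicontinuity at $\Sigma_o \ne \Sigma_*$, I consider any sequence $\Sigma_n \to \Sigma_o$ with associated factors $\Sigma_n = B_n B_n^\top$ and spectral decompositions $G(B_n) = U_n D(\lambda_n) U_n^\top$. The matrices $B_n$ are bounded (their Frobenius norms are controlled by $\tr(\Sigma_n)$), and $U_n \in \Rqqorth$ is compact, so along a subsequence $B_n \to B_\infty$ with $B_\infty B_\infty^\top = \Sigma_o$ and $U_n \to U_\infty$. Continuity of $G$ and of $(B,U) \mapsto \underline{H}(BU)$ gives $\eta_n \to \eta_\infty > 0$, while $C^2$-regularity of $f$ provides a Taylor remainder estimate in \eqref{eq:Taylor2} that is uniform on a compact neighbourhood of $B_\infty$. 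This in turn yields a uniform finite bound $m^*$ on $m(B_n U_n)$ along the subsequence. Consequently $\limsup_n \bigl[ f(\phi(\Sigma_n)) - f(\Sigma_n) \bigr] \le -\eta_\infty 2^{-m^*}/C < 0$, and continuity of $f$ closes the argument.

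Finally, to prove $m(BU) = 0$ for $\Sigma$ close to $\Sigma_*$, note that $\lambda(B_*) = 0$ and $\underline{H}(B_* U_*) \succ 0$ imply, by continuity, that $A_{\rm pN} \to 0$ as $\Sigma \to \Sigma_*$, together with a uniform bound $\underline{H}(BU) \succeq \kappa I_q$ on a neighbourhood. The spectral inequality $\mu^{-1} \ge \kappa\mu^{-2}$ on the eigenvalues $\mu \ge \kappa$ of $\underline{H}(BU)$ converts to
\[
	\eta \ = \ \lambda(B)^\top \underline{H}(BU)^{-1} \lambda(B)
	\ \ge \ \kappa \| \underline{H}(BU)^{-1} \lambda(B)\|^2
	\ = \ \kappa \|A_{\rm pN}\|^2 .
\]
Using $H(A_{\rm pN},B) = -\eta$, the Taylor expansion at $t = 1$ becomes $f(\phi_{\rm pN}(\Sigma)) - f(\Sigma) = -\eta/2 + R$ with $R = o(\|A_{\rm pN}\|^2)$; the bound above then forces $|R|/\eta \to 0$ as $\Sigma \to \Sigma_*$, so for $\Sigma$ close enough one has $|R| \le (1/2 - 1/C)\eta$, which is exactly the $m = 0$ acceptance inequality $f(\phi_{\rm pN}(\Sigma)) - f(\Sigma) \le -\eta/C$. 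The main technical obstacle throughout is the uniform control of the $o(\|A\|^2)$ Taylor remainders across the ambiguous choices of $B$ and $U$; this is handled by the $C^2$ regularity of $f$ together with Remark~\ref{rem:Orthogonal transformations}, which shows that the eigenvalues of $G(B)$ and the quadratic form $H(\cdot,B)$ are invariant under right-multiplication of $B$ by orthogonal matrices.
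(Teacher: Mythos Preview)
Your argument is correct and shares the paper's core ingredients: the second-order geodesic Taylor expansion, positive definiteness of $H(\cdot,B)$, continuity of $G$ and $H$, and the invariance of the relevant spectral quantities under $B \mapsto BU$ (Remark~\ref{rem:Orthogonal transformations}). The identity $H(A_{\rm pN},B) = -\langle A_{\rm pN},G(B)\rangle$ and the resulting exact value $-\eta/2$ for the quadratic approximation at the full step are used in both proofs.

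Where you differ is in how the uniformity over the non-unique choices of $B$ and $U$ is obtained. The paper packages the Taylor remainders into two explicit moduli $\Lambda_{\max}(\Sigma,\delta)$ and $N(\Sigma,\delta)$ (Lemma~\ref{lem:remainders}) that depend only on $\Sigma$ and $\delta$, together with $\Lambda_{\min}(\Sigma)$; it then builds continuous functions $R_1(\Sigma)$ and $R_{2,m}(\Sigma)$ of $\Sigma$ alone that bound $\|A_{\rm pN}\|$ and the acceptance gap, so the $\limsup$ condition follows directly without passing to subsequences. Your route replaces this by a compactness argument: extract a convergent subsequence of $(B_n,U_n)$ and invoke uniform $C^2$ control on a compact neighbourhood. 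Both are valid; the paper's version is more quantitative and makes the factorization-independence completely explicit, while yours is shorter but leaves the ``uniform Taylor remainder on a compact set'' step as an appeal to $C^2$ regularity rather than an explicit inequality.
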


\paragraph{Pseudo-code for $\phi(\cdot)$.}
One may interpret our algorithmic mapping $\phi$ such that the factor $B$ of our current candidate $\Sigma = BB^\top$ for $\Sigma_*$ is replaced with a new matrix
\[
	B_{\rm new} \ = \ B U \exp(2^{-m(BU)} A_{\rm pN}/2) ,
\]
and $\phi(\Sigma) = B_{\rm new}^{} B_{\rm new}^\top$. Here is corresponding pseudo-code for the computation of $B_{\rm new}$:

\begin{align*}
	&(U,\lambda) \ \leftarrow \ \text{eigen}(G(B)) \\
	&a \ \leftarrow \ H(BU)^{-1} g(BU) \\
	&\epsilon \ \leftarrow \ a^\top g(BU) \\
	&\text{while} \ f(BB^\top) - f(B D(\exp(-a)) B^\top) < \epsilon/C \ \text{do} \\
	&\quad a \ \leftarrow \ a/2 \\
	&\quad \epsilon \ \leftarrow \ \epsilon/2 \\
	&\text{end while} \\
	&B_{\rm new} \ \leftarrow \ B U D(\exp(- a/2))
\end{align*}

\section{Numerical Example}
\label{sec:Example}

We illustrate the proposed methods in case of $\rho(s) = q \log s$ and
\[
	\pi(\Sigma) \ := \ \exp(\pi_1(\Sigma) - \pi_1(I_q))
	\ = \ \det(\Sigma)^{1/q} \tr(\Sigma^{-1})/q .
\]
The resulting functional $f_\alpha(\Sigma) = L_\rho(\Sigma,Q) + \alpha \pi(\Sigma)$ is strictly g-convex and g-coercive on $\M^{(q)}$ for any value $\alpha > 0$.

Precisely, we chose $q = 50$ and simulated a random sample of size $n = 30$ from the multivariate Cauchy distribution with center $0$ and scatter matrix
\[
	\Sigma \ = \ D(10,5,3,2,1,1,\ldots,1)^2 .
\]
Then we computed the minimizer $\hat{\Sigma}(\alpha)$ of $f_\alpha$ with $Q$ being the empirical distribution of this sample for $\alpha = 2^z$ with $z = 1,2,\ldots,15$. Table~\ref{tab:CV.errors} shows the resulting values $\mathrm{CV}(\alpha)$ and the following estimation errors:
\begin{align*}
	\epsilon_0(\alpha) \ &: \quad
		\text{Euclidean distance between first eigenvectors of} \
		\Sigma, \hat{\Sigma}(\alpha) , \\
	\epsilon_1(\alpha) \ &: \quad
		\text{Euclidean distance between} \
		\log \lambda(S), \log \lambda(\hat{S}(\alpha)) , \\
	\epsilon_2(\alpha) \ &: \quad
		\text{geodesic distance between} \
		S, \hat{S}(\alpha) ,
\end{align*}
where $S := \det(\Sigma)^{-1/q} \Sigma$, $\hat{S}(\alpha) := \det(\hat{\Sigma}(\alpha))^{-1/q} \hat{\Sigma}(\alpha)$, and $\lambda(B)$ refers to the vector of the ordered eigenvalues of a symmetric matrix $B$. Note that our cross-validation criterion yields $\alpha = 2^7$, which is a reasonable choice in view of the estimation errors. Figure~\ref{fig0} shows a bar plot of the log-transformed eigenvalues of $S$ and of $\hat{S}(2^7)$.

\begin{table}[h!]
\[
	\begin{array}{|c||c||c|c|c|}
	\hline
	\log_2(\alpha)
		& \mathrm{CV}(\alpha)
			& \epsilon_0(\alpha)
				& \epsilon_1(\alpha)
					& \epsilon_2(\alpha) \\
	\hline\hline
	 1 & 11670.248 & 0.164 & 20.817 & 118.797 \\
	\hline
	 2 & 10658.798 & 0.164 & 16.985 &  74.729 \\
	\hline
	 3 &  9704.005 & 0.163 & 13.278 &  46.696 \\
	\hline
	 4 &  8883.141 & 0.160 &  9.793 &  28.871 \\
	\hline
	 5 &  8282.730 & 0.158 &  6.660 &  17.781 \\
	\hline
	 6 &  7933.924 & 0.158 &  4.141 &  11.518 \\
	\hline
	 7 &  7816.171 & 0.160 &  2.899 &   8.787 \\
	\hline
	 8 &  7883.674 & 0.165 &  3.307 &   8.098 \\
	\hline
	 9 &  8079.799 & 0.173 &  4.260 &   8.137 \\
	\hline
	10 &  8321.868 & 0.183 &  5.035 &   8.295 \\
	\hline
	11 &  8515.666 & 0.190 &  5.499 &   8.407 \\
	\hline
	12 &  8633.030 & 0.194 &  5.740 &   8.467 \\
	\hline
	13 &  8695.983 & 0.196 &  5.859 &   8.497 \\
	\hline
	14 &  8728.327 & 0.197 &  5.918 &   8.513 \\
	\hline
	15 &  8744.677 & 0.198 &  5.947 &   8.520 \\
	\hline
	\end{array}
\]
\caption{Cross-validation criterion and estimation errors for one data matrix. }
\label{tab:CV.errors}
\end{table}

\begin{figure}[b!]
\centering
\includegraphics[width=0.99\textwidth]{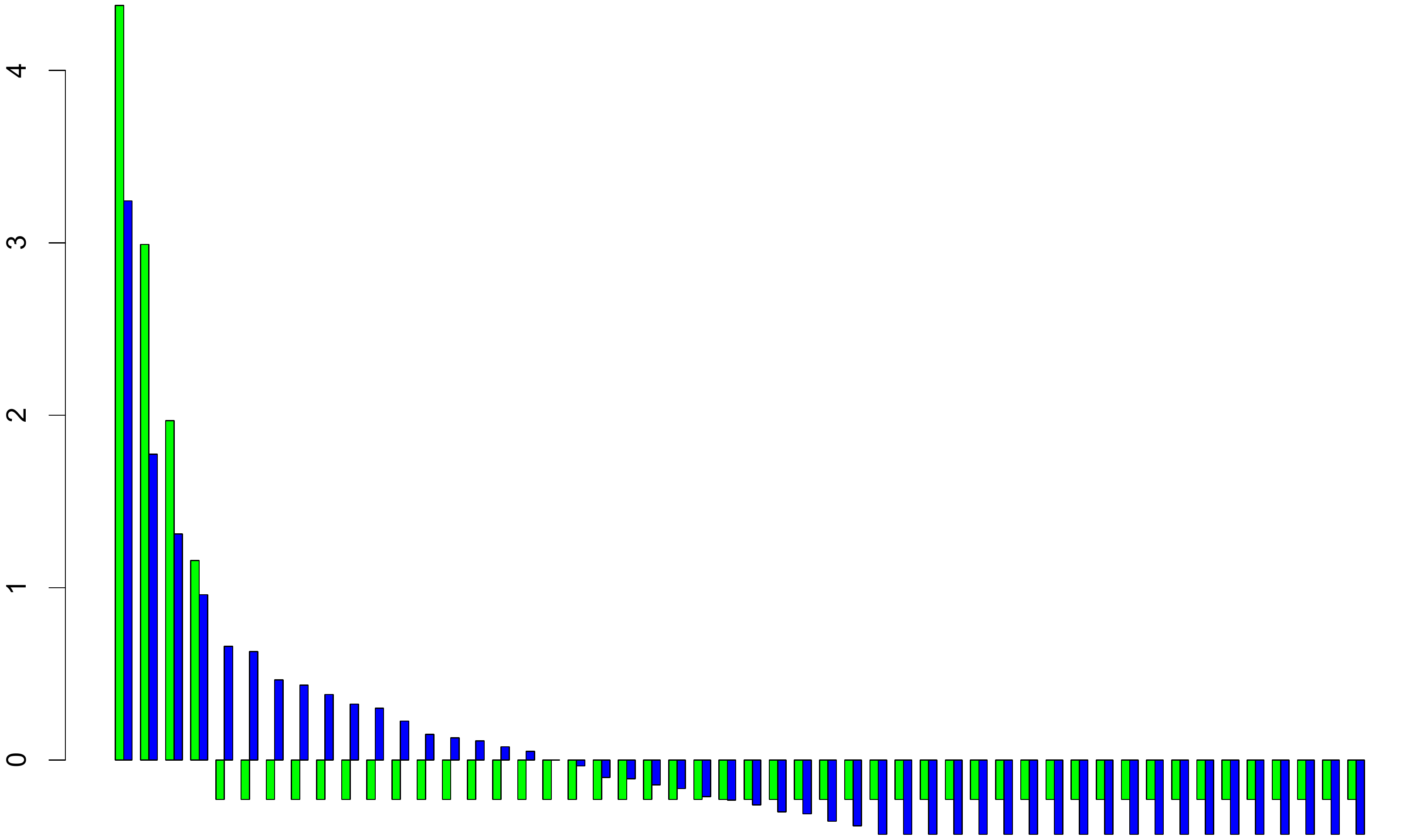}
\caption{Log-eigenvalues of $S$ (green) and $\hat{S}(2^7)$ (blue).}
\label{fig0}
\end{figure}

This simulation was repeated 100 times, and in all cases the minimizer of $\mathrm{CV}(\alpha)$ on the given grid turned out to be $2^7 = 128$. Figure~\ref{fig1} shows box plots of $\mathrm{CV}(\alpha)$ and the estimation errors $\epsilon_0(\alpha)$, $\epsilon_1(\alpha)$, $\epsilon_2(\alpha)$ for these simulations.

\begin{figure}[b!]
\includegraphics[width=0.49\textwidth]{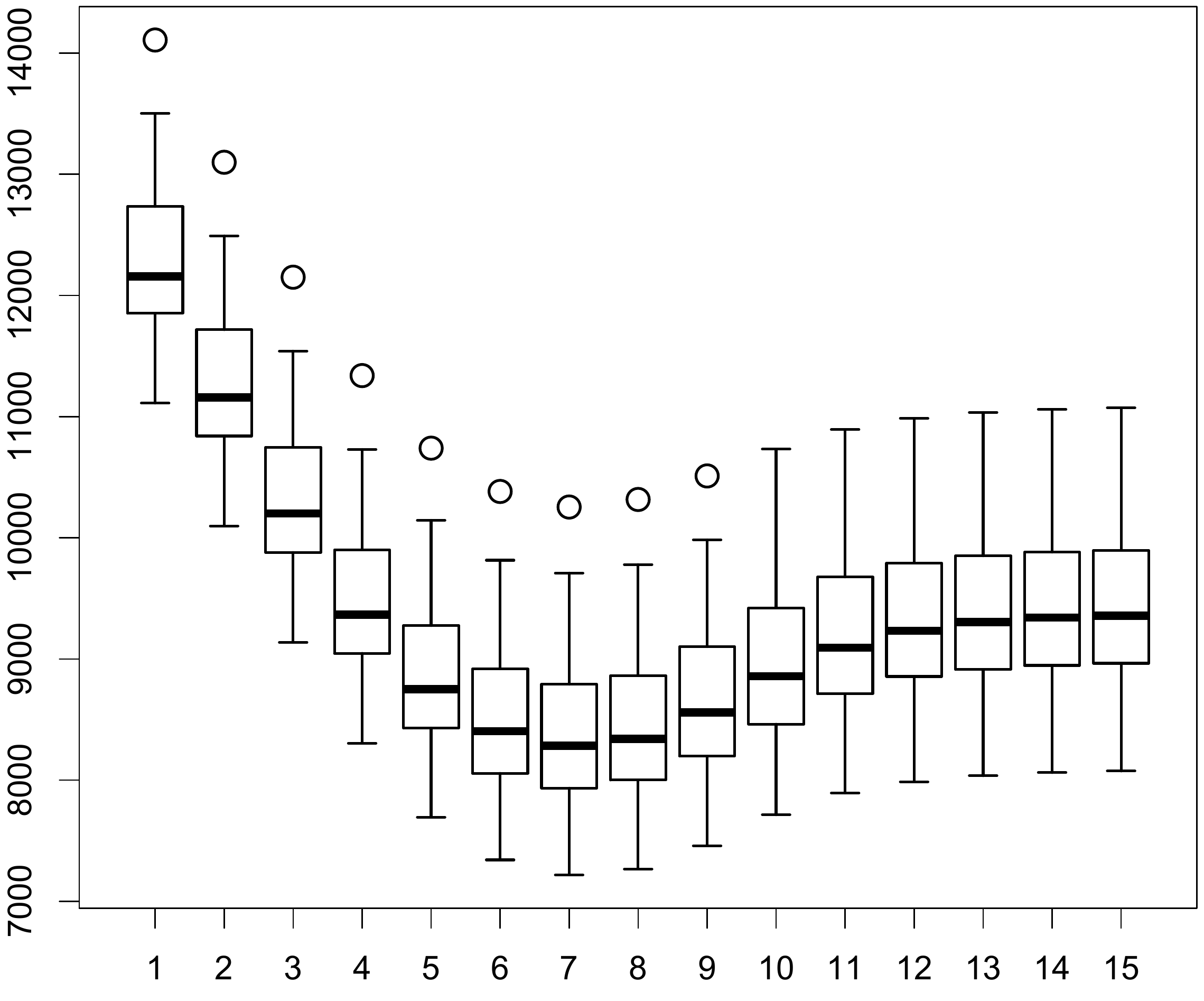}
\hfill
\includegraphics[width=0.49\textwidth]{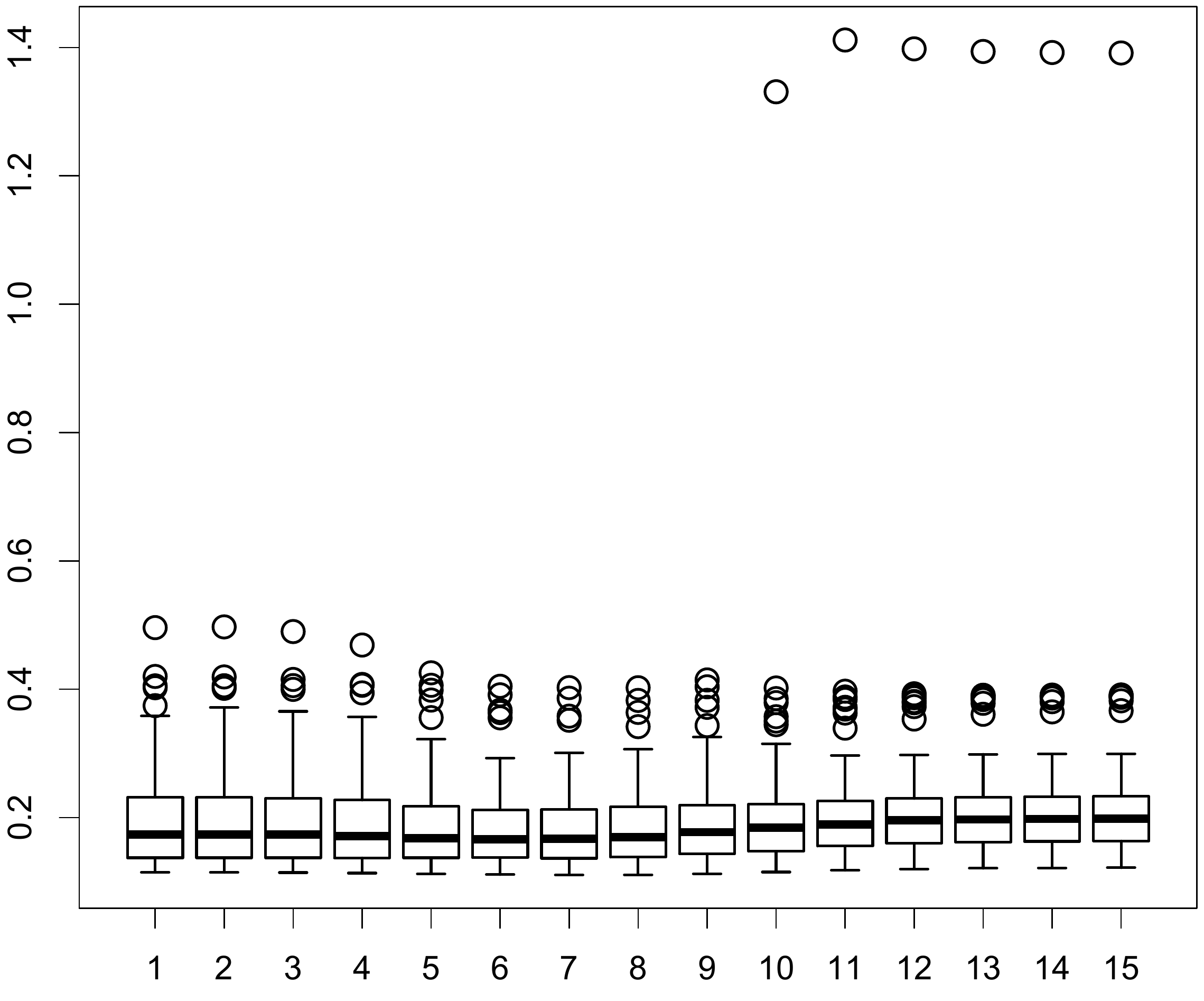}

\includegraphics[width=0.49\textwidth]{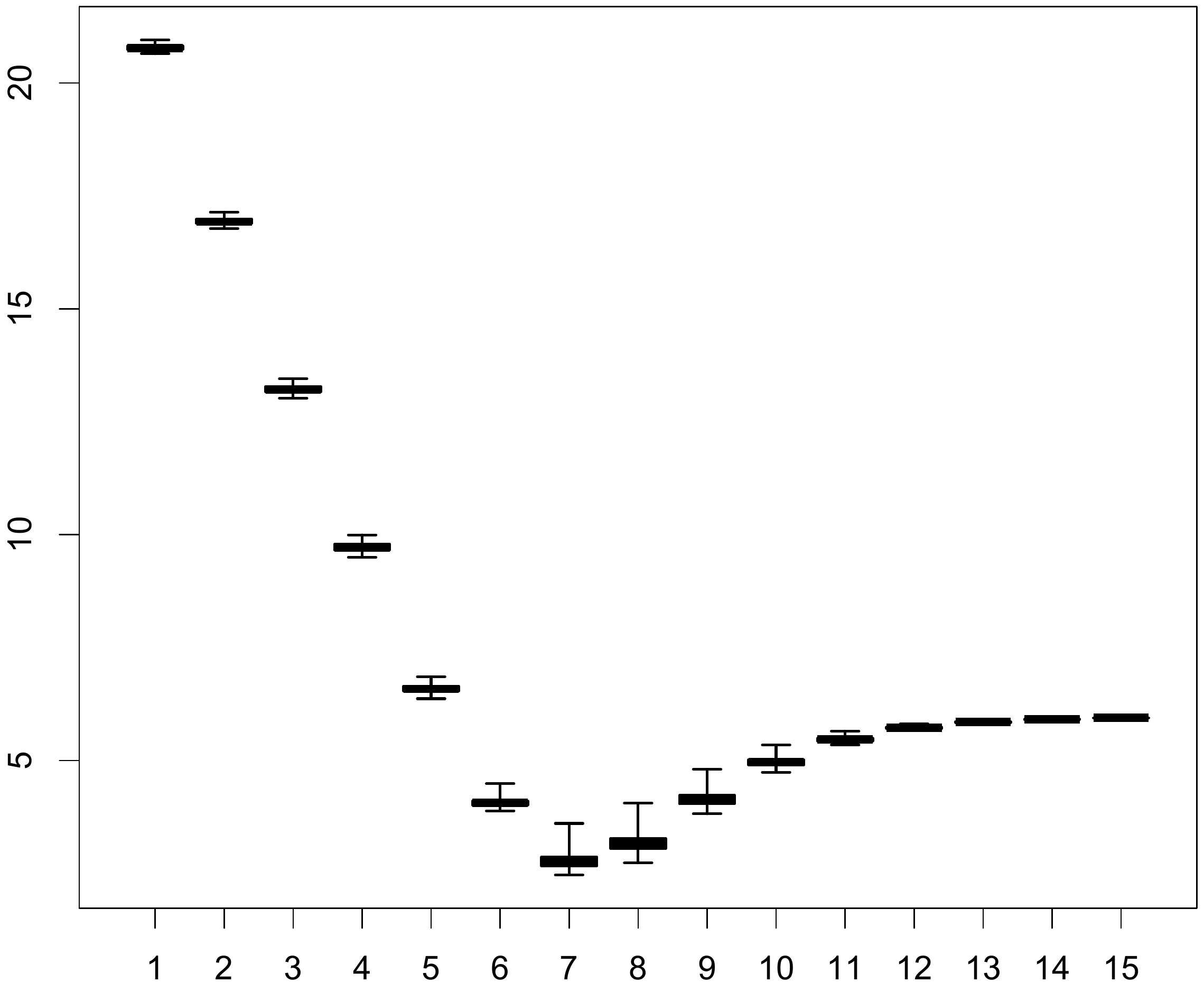}
\hfill
\includegraphics[width=0.49\textwidth]{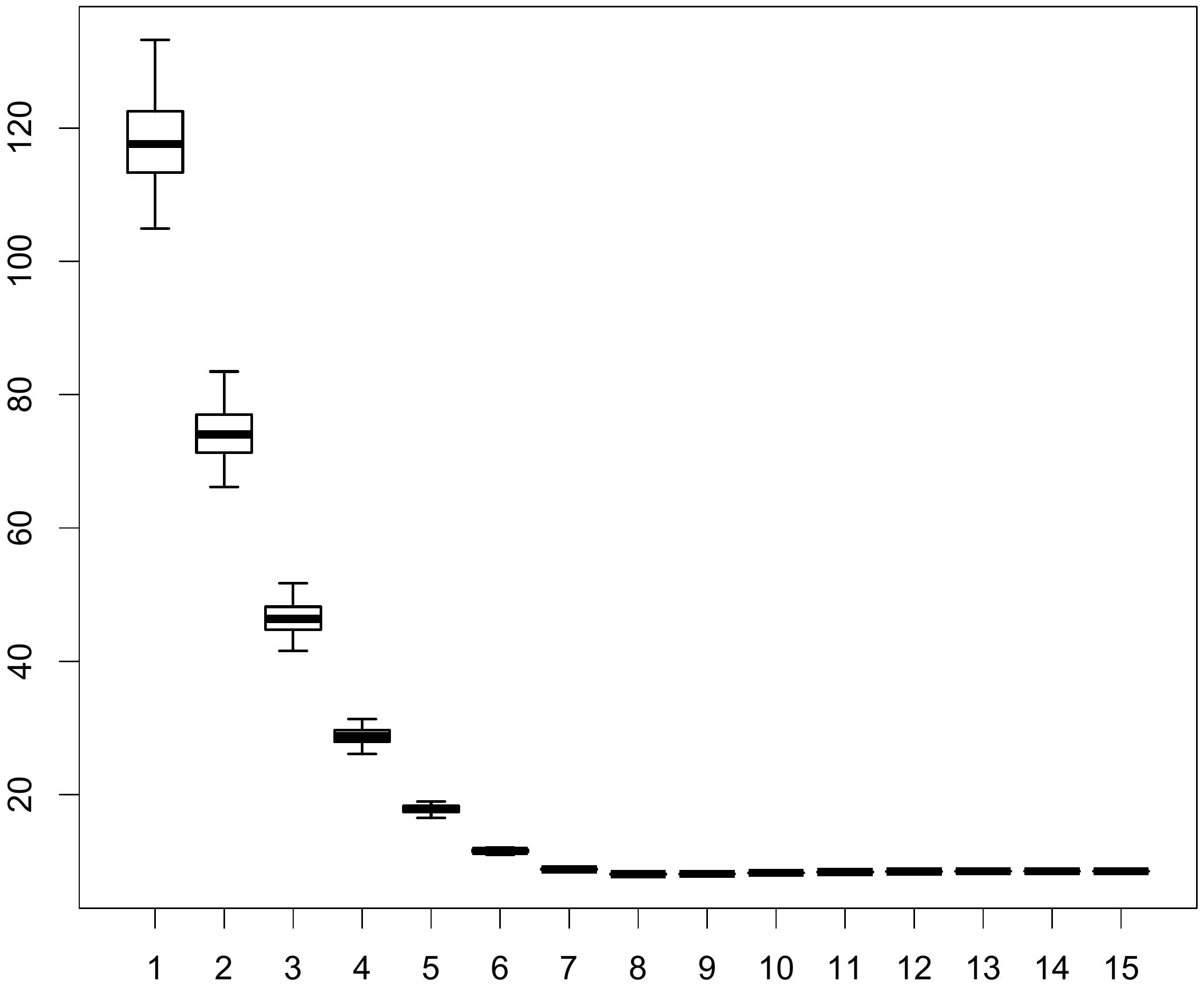}

\caption{Cross-validation measures $\mathrm{CV}(\alpha)$ (upper left) and estimation errors $\epsilon_0(\alpha)$ (upper right), $\epsilon_1(\alpha)$ (lower left), $\epsilon_2(\alpha)$ (lower right) versus $\log_2(\alpha)$.}
\label{fig1}
\end{figure}

\section{Proofs}
\label{sec:Proofs}

\subsection{Proofs for Section~\ref{sec:G-Convexity}}

\begin{proof}[\bf Proof of Lemma~\ref{lem:geodesic.curves}]
For $B \in \Rqqns$ and $A_0,A_1 \in \Rqqsym$ define $\Sigma_j := B \exp(A_j) B^\top$. Then $\Sigma_0 = B_0^{} B_0^\top$ with $B_0 := B \exp(A_0/2)$, and this implies that
\[
	\Sigma_0^{1/2} \ = \ B_0 V \ = \ V^\top B_0^\top
\]
for some $V \in \Rqqorth$. Hence
\[
	\Sigma_0^{-1/2} \ = \ V^\top B_0^{-1} \ = \ B_0^{-\top} V ,
\]
and for $u \in [0,1]$,
\begin{align*}
	\Sigma_0^{1/2}
	& (\Sigma_0^{-1/2} \Sigma_1 \Sigma_0^{-1/2})_{}^u \Sigma_0^{1/2} \\
	&= \ B_0 V \,
		(V^\top B_0^{-1} \Sigma_1^{} B_0^{-\top} V)_{}^u \,
			V^\top B_0^{-1} \\
	&= \ B_0 \, (B_0^{-1} \Sigma_1^{} B_0^{-\top})_{}^u \, B_0^{-1} \\
	&= \ B \exp(A_0/2)
		\bigl( \exp(- A_0/2) \exp(A_1) \exp(-A_0/2) \bigr)_{}^u
		\exp(A_0/2) B^\top .
\end{align*}
If $A_0A_1 = A_1A_0$, the right hand side may be simplified further and we obtain
\begin{align*}
	\Sigma_0^{1/2}
		(\Sigma_0^{-1/2} \Sigma_1 \Sigma_0^{-1/2})_{}^u \Sigma_0^{1/2} \
	&= \ B \exp(A_0/2) \exp(A_1 - A_0)^u \exp(-A_0/2) B^\top \\
	&= \ B \exp(A_0/2) \exp(uA_1 - uA_0) \exp(A_0/2) B^\top \\
	&= \ B \exp((1 - u)A_0 + u A_1) B^\top .
\end{align*}
This may be applied to the curve $t \mapsto \Sigma(t)$ with $A_j = t_j A$ as well as to the surface $x \mapsto \Gamma(x)$ with $A_j = D(x_j)$.
\end{proof}

\begin{proof}[\bf Proof of Lemma~\ref{lem:minimizer.g-convex}]
If $\Sigma = BB^\top$ minimizes $f$, then obviously \eqref{eq:minimizer.g-convex} has to hold true. On the other hand, suppose that $\Sigma = BB^\top$ is not a minimizer of $f$. That means, $f(B \exp(A) B^\top) < f(BB^\top)$ for some $A \in \Rqqsym$. But $h(t) := f(B \exp(t A) B^\top)$ is a convex function of $t \in \R$, so
\[
	\lim_{t \to 0\,+} \frac{h(t) - h(0)}{t} \ \le \ h(1) - h(0) \ < \ 0 .
\]\\[-5ex]
\end{proof}

\begin{proof}[\bf Proof of Lemma~\ref{lem:g-coercivity}]
The result and its proof generalize Proposition~5.5 in \cite{Duembgen_etal_2015}. Recall first that for any $A \in \Rqqsym$, the function $\R \ni t \mapsto f(\exp(tA))$ is convex with right-sided derivative
\[
	g(t,A) \ := \ \lim_{u \to t\,+} \frac{f(\exp(uA)) - f(\exp(tA))}{u - t} .
\]
Moreover, $g(t,A)$ is non-decreasing in $t \in \R$ with limit $g(\infty,A) \in (-\infty,\infty]$ as $t \to \infty$. Thus we have to show that $f$ is g-coercive if, and only if, $g(\infty,A) > 0$ for any $A \in \Rqqsym \setminus \{0\}$.

Suppose that $f$ is not g-coercive. Then there exists a sequence $(A_k)_k$ in $\Rqqsym$ such that $\lim_{k \to \infty} \|A_k\| = \infty$ but $f(\exp(A_k)) \le C$ for all indices $k$ and some real constant $C$. Writing $A_k = \|A_k\| N_k$ for a matrix $N_k$ with norm one, we may even assume that $\lim_{k \to \infty} N_k = N$ with $N \in \Rqqsym$, $\|N\| = 1$. Now for any fixed $t > 0$,
\begin{align*}
	g(t,N) \
	&\le \ f(\exp((t+1)N)) - f(\exp(tN)) \\
	&= \ \lim_{k \to \infty}
		\bigl( f(\exp((t+1)N_k)) - f(\exp(tN_k)) \bigr) \\
	&\le \ \limsup_{k \to \infty}
		\frac{ f(\exp(\|A_k\|N_k)) - f(\exp(t N_k))}{\|A_k\| - t} \\
	&\le \ \limsup_{k \to \infty} \frac{ C - f(\exp(t N_k))}{\|A_k\| - t} \\
	&\le \ 0 .
\end{align*}
In the first and third step we used convexity of $f(\exp(tN_{(k)}))$ in $t \in \R$, the second and last step rely on continuity of $f$ and the choice of $(A_k)_k$. These considerations show that $g(\infty,N) \le 0$.

On the other hand, suppose that $f$ is g-coercive. Then for any $A \in \Rqqsym \setminus \{0\}$ and sufficiently large $r > 0$,
\[
	0
	\ < \ \frac{f(\exp(rA)) - f(I_q)}{r}
	\ = \ \frac{f(\exp(rA)) - f(\exp(0A))}{r}
	\ \le \ g(r,A)
	\ \le \ g(\infty,A) .
\]\\[-5ex]
\end{proof}

\begin{proof}[\bf Proof of Lemma~\ref{lem:existence.minimizers}]
By continuity of $f$, the set $\mathcal{S}_*$ is closed, and by g-convexity of $f$ it is g-convex.

Obviously, the set $\mathcal{S}_*$ is identical with the set of minimizers of $f$ on the closed set $\mathcal{K} := \bigl\{ \Sigma \in \Rqqsympd : f(\Sigma) \le f(I_q) \bigr\}$. If $f$ is also g-coercive, the set $\mathcal{K}$ is even compact, and $\mathcal{S}_*$ is a nonvoid and closed subset of $\mathcal{K}$, so it is compact itself.

Now suppose that $f$ has a minimizer $\Sigma_* = BB^\top$, $B \in \Rqqns$. Note that g-coercivity is equivalent to
\[
	f(B \exp(A) B^\top) \ \to \ \infty
	\quad\text{as} \ \|A\| \to \infty .
\]
This follows from the inequality
\begin{equation}
\label{eq:triangle?}
	\bigl| \|\log(B\exp(A)B^\top)\| - \|A\| \bigr|
	\ \le \ \|\log(\Sigma_*)\|
\end{equation}
which will be proved later. Now suppose that $f$ is minimal at $\Sigma_*$ but not g-coercive. That means, there exists a sequence $(A_k)_k$ in $\Rqqsym$ with $\lim_{k \to \infty} \|A_k\| = \infty$ but $f(B \exp(A_k) B^\top) \le C$ for all indices $k$ and some real constant $C$. Writing $A_k = \|A_k\| N_k$ for a matrix $N_k$ with norm one, we may even assume that $\lim_{k \to \infty} N_k = N$ with $N \in \Rqqsym$, $\|N\| = 1$. Since $h_k(t) := f(B \exp(t N_k) B^\top)$ is convex in $t \in \R$, we may conclude that for any fixed $t > 0$,
\begin{align*}
	\frac{f(B \exp(tN) B^\top) - f(\Sigma_*)}{t} \
	&= \ \lim_{k \to \infty} \frac{f(B \exp(tN_k) B^\top) - f(\Sigma_*)}{t} \\
	&= \ \lim_{k \to \infty} \frac{h_k(t) - h_k(0)}{t} \\
	&\le \ \limsup_{k \to \infty} \frac{h_k(\|A_k\|) - h_k(0)}{\|A_k\|} \\
	&= \ \limsup_{k \to \infty} \frac{f(B \exp(A_k) B^\top) - f(\Sigma_*)}{\|A_k\|} \\
	&\le \ 0 .
\end{align*}
This implies that $f(B \exp(tN) B^\top) = f(\Sigma_*)$ for all $t > 0$, so $\mathcal{S}_*$ is geodesically unbounded.

It remains to prove inequality \eqref{eq:triangle?} which is related to geodesic distances. On the one hand,
\begin{align*}
	\| \log(B \exp(A) B^\top) \| \
	&= \ d_g(I_q, B \exp(A) B^\top) \\
	&\le \ d_g(I_q,BB^\top) + d_g(BB^\top, B \exp(A) B^\top) \\
	&= \ \|\log(\Sigma_*)\| + d_g(I_q, \exp(A)) \\
	&= \ \|\log(\Sigma_*)\| + \|A\| .
\end{align*}
On the other hand,
\begin{align*}
	\|A\| \
	&= \ d_q(I_q, \exp(A)) \\
	&\le \ d_q(I_q, (B^\top B)^{-1}) + d_g((B^\top B)^{-1}, \exp(A)) \\
	&= \ d_q(I_q, B^\top B) + d_g(B^{-1} B^{-\top}, \exp(A)) \\
	&= \ \|\log(B^\top B)\| + d_g(I_q, B\exp(A)B^\top) \\
	&= \ \|\log(\Sigma_*)\| + \|\log(B\exp(A)B^\top)\| .
\end{align*}
In the last step we utilized that $B^\top B$ and $BB^\top = \Sigma_*$ have the same eigenvalues, which follows from the singular value decomposition of $B$.
\end{proof}

\begin{proof}[\bf Proof of Lemma~\ref{lem:g-convexity}]
This criterion follows from the fact that for $t,\delta \in \R$,
\[
	B \exp((t+\delta)A) B^\top
	\ = \ B_t^{} \exp(\delta A) B_t^\top
	\quad\text{with} \ B_t := B \exp((t/2)A) ,
\]
so
\[
	f \bigl( B \exp((t+\delta)A) B^\top \bigr)
	\ = \ f(B_t^{}B_t^\top)
		+ \langle A, G(B_t)\rangle \delta
		+ 2^{-1} H(A,B_t) \delta^2 + o(\delta^2)
\]
as $\delta \to 0$. By means of Lemma~\ref{lem:convexity} in Supplement~\ref{sec:Auxiliary}, this shows that $f(B \exp(tA) B^\top)$ is convex in $t \in \R$, provided that $H(A,B_t) \ge 0$ for all $t \in \R$. This convexity is strict if $H(A,B_t) > 0$ for all $t \in \R$.

If $H(A,B) < 0$ for some $B \in \Rqqns$ and $A \in \Rqqsym$, then for sufficiently small $\delta > 0$,
\[
	f(B \exp(\pm \delta A) B^\top)
	\ < \ f(BB^\top) \pm \delta \langle A, G(B)\rangle .
\]
Hence
\[
	f(BB^\top) = f(B \exp(0A) B^\top)
	\ > \ 2^{-1} f(B \exp(-\delta A) B^\top)
		+ 2^{-1} f(B \exp(\delta A) B^\top) .
\]
Thus $f(B \exp(tA) B^\top)$ is not convex in $t \in \R$, so $f$ is not geodesically convex.
\end{proof}

\subsection{Proofs for Section~\ref{subsec:Regularization}}

\begin{proof}[\bf Proof of Lemma~\ref{lem:all.about.Pi}]
That $\Sigma = I_q$ is the unique minimizer of $\Pi_k(\Sigma)$ follows from the fact that $x + x^{-1} > 2$, $\log x + x^{-1} > 1$, $x - \log x > 1$ and $(\log x)^2 > 0$ for $x \in \R_+ \setminus \{1\}$.

Note first that $f(\Sigma) := \tr(\Sigma)$ satisfies the expansion
\begin{align*}
	f(B \exp(A) B^\top) \
	&= \ f(BB^\top) + \tr(BAB^\top) + 2^{-1} \tr(BA^2 B^\top)
		+ o(\|A\|^2) \\
	&= \ f(BB^\top) + \langle A, B^\top B\rangle + 2^{-1} \langle A^2, B^\top B\rangle
		+ o(\|A\|^2) .
\end{align*}
This and Remark~\ref{rem:Inversion2} implies that $G_0(B) = B^\top B - B^{-1} B^{-\top}$ while $H_0(A,B)$ is given by $\langle A^2, B^\top B + B^{-1} B^{-\top}\rangle$. The inequality $H_0(A,B) > 0$ for $A \ne 0$ can be proved similarly as the inequality $H(A,B) \ge 0$ in Example~\ref{ex2}. In case of $A = U D(-\gamma) U^\top$ with an orthogonal matrix $U$ and a vector $\gamma \in \R^q$ with non-decreasing componnents,
\[
	\frac{d}{dt} \Pi_0(\exp(tA))
	\ = \ \sum_{i=1}^q \gamma_i (e^{t\gamma_i} - e^{-t \gamma_i})
	\ \to \ \infty
\]
as $t \to \infty$, unless $\gamma = 0$.

As to $\Pi_1$, it follows from the previous considerations and Example~\ref{ex0} that $G_1(B) = I_q - B^{-1} B^{-\top}$ and $H_1(A,B) = \langle A^2, B^{-1} B^{-\top}\rangle$. Again $H_1(A,B) > 0$ for $A \ne 0$. Moreover, if $A = U D(-\gamma) U^\top$ as before, as $t \to \infty$,
\[
	\frac{d}{dt} \Pi_1(\exp(tA))
	\ = \ \sum_{i=1}^q \gamma_i (e^{t \gamma_i} - 1)
	\ \to \ 1_{[\gamma_q > 0]}^{} \infty - \sum_{i=1}^q \gamma_i .
\]

For $\Pi_2$ the expansion is a consequence of Corollary~\ref{cor:penalties} in Supplement~\ref{sec:Auxiliary}. Just note that we may write $B = U D(\mu)^{1/2} V^\top$ with $U, V \in \Rqqorth$ and $\mu = e^\lambda$, $\lambda \in \R^q$, and
\[
	\Pi_2(B \exp(A) B^\top)
	\ = \ \Pi_2 \bigl( D(\mu)^{1/2} \exp(V^\top A V) D(\mu)^{1/2} \bigr) .
\]
Moreover, $\Pi_2(\exp(tA)) = t^2 \|A\|^2$, so $d \Pi_2(\exp(tA)) / dt = 2 t \|A\|^2$.
\end{proof}

\begin{proof}[\bf Proof of Lemma~\ref{lem:all.about.pi}]
Elementary considerations reveal that all penalty functions $\pi_k$ are scale-invariant. Next we show that a matrix $\Sigma \in \Rqqsympd$ with eigenvalues $\sigma_1 \ge \cdots \ge \sigma_q > 0$ minimizes $\pi_k(\Sigma)$ if, and only if, $\sigma_1/\sigma_q = 1$. On the one hand,
\[
	\pi_0(\Sigma)
	\ = \ \log \Bigl( \sum_{i=1}^q \sigma_i^{} \sum_{j=1}^q \sigma_j^{-1} \Bigr)
	\ = \ \log \Bigl( \frac{1}{2}
		\sum_{i,j=1}^q
			\bigl( \frac{\sigma_i}{\sigma_j} + \frac{\sigma_j}{\sigma_i} \Bigr)
		\Bigr)
	\ \ge \ \log(q^2)
\]
with equality if, and only if, $\sigma_i/\sigma_j = 1$ for all indices $i,j$. This follows from $x + x^{-1} > 2$ for arbitrary $x \in \R_+ \setminus \{1\}$. In case of $\pi_1(\Sigma)$, note that by Jensen's inequality and strict concavity of $\log$ on $\R_+$,
\[
	\pi_1(\Sigma)
	\ = \ - q^{-1} \sum_{i=1}^q \log(\sigma_i^{-1})
		+ \log \Bigl( q^{-1} \sum_{i=1}^q \sigma_i^{-1} \Bigr) + \log q
	\ \ge \ \log(q)
\]
with strict inequality unless all $\sigma_i$ are identical. Finally,
\[
	\pi_2(\Sigma)
	\ = \ \sum_{i=1}^q
		\Bigl( \log \sigma_i - q_{}^{-1} \sum_{j=1}^q \log \sigma_j \Bigr)^2
	\ \ge \ 0
\]
with equality if, and only if, all $\sigma_i$ are identical.

Next we verify the geodesic second order Taylor expansions of $\pi_k(\Sigma)$. It follows from Examples~\ref{ex0} and \ref{ex2} and Remark~\ref{rem:Inversion2} that
\begin{align*}
	G_0(B) \
	&= \ N(B^\top B) - N(B^{-1} B^{-\top}) , \\
	G_1(B) \
	&= \ q^{-1} I_q - N(B^{-1} B^{-\top}) ,
\end{align*}
and
\begin{align*}
	H_0(A,B) \
	&= \ \langle A^2, N(B^\top B)\rangle
			- \langle A, N(B^\top B)\rangle^2
		+ \langle A^2, N(B^{-1}B^{-\top})\rangle
			- \langle A, N(B^{-1}B^{-\top})\rangle^2 , \\
	H_1(A,B) \
	&= \ \langle A^2, N(B^{-1}B^{-\top})\rangle
			- \langle A, N(B^{-1}B^{-\top})\rangle^2
\end{align*}
with $N(\Sigma) := \tr(\Sigma)^{-1} \Sigma$. The considerations to Example~\ref{ex2} reveal that both $H_0(A,B)$ and $H_1(A,B)$ are strictly positive whenever $A \not\in \{t I_q : t \in \R\}$. The expansion for $\pi_2$ follows from Corollary~\ref{cor:penalties} with the same arguments as in the proof of Lemma~\ref{lem:all.about.Pi}. In particular,
\[
	H_2(A,B) \ = \ \sum_{i,j=1}^q W_{ij}(\lambda) (v_i^\top A^o v_j)^2
	\ \ge \ \|A^o\|^2
\]
with $A^o = A - q^{-1}\tr(A) I_q$.

Concerning coercivity, let $A = V D(-\gamma) V^\top$ with $\gamma_1 \le \ldots \le \gamma_q$ and $\gamma_q > \gamma_1$. Then for $\xi = \pm 1$,
\[
	\frac{d}{dt} q^{-1} \log \det(\exp(tA)^{\xi})
	\ = \ - \xi \bar{\gamma}
\]
and
\[
	\frac{d}{dt} \log \tr(\exp(tA)^{\xi})
	\ = \ - \xi \sum_{i=1}^q \gamma_i e^{- \xi t\gamma_i}
		\Big/ \sum_{i=1}^q e^{- \xi t\gamma_i}
		\ \to \ \begin{cases}
			- \gamma_1 & \text{if} \ \xi = +1 , \\
			\ \gamma_q & \text{if} \ \xi = -1 ,
		\end{cases}
\]
as $t \to \infty$. This implies for $k = 0,1$ the asserted limits of $d \pi_k(\exp(tA)) / dt$. For $k = 2$ the claim follows from
\[
	\pi_2(\exp(tA))
	\ = \ t^2 \sum_{i=1}^q (\gamma_i - \bar{\gamma})^2 .
\]\\[-5ex]
\end{proof}

\subsection{Proofs for Section~\ref{sec:Algorithm}}

Our proof of Theorem~\ref{thm:phi} is based on two elementary inequalities for the accuracy of Taylor expansions of $f$ which are derived in Supplement~\ref{sec:Auxiliary}:

\begin{Lemma}
\label{lem:remainders}
For $\Sigma \in \Rqqsympd$ and $\delta > 0$ let
\begin{align*}
	\Lambda_{\rm max}(\Sigma,\delta) \
	&:= \ \max_{A, C \in \Rqqsym \,:\, \|A\| \le 1, \|C\| \le \delta}
		H(A, \Sigma^{1/2} \exp(C/2)) , \\
	N(\Sigma,\delta) \
	&:= \ \max_{A, C \in \Rqqsym \,:\, \|A\| \le 1, \|C\| \le \delta}
		\bigl| H(A, \Sigma^{1/2} \exp(C/2)) - H(A, \Sigma^{1/2}) \bigr| .
\end{align*}
For arbitrary $\Sigma = BB^\top$ with $B \in \Rqqns$ and $A \in \Rqqsym \setminus \{0\}$,
\[
	f(B \exp(A) B^\top) - f(\Sigma) - \langle A, G(B)\rangle
	\ \le \ 2^{-1} \|A\|^2 \Lambda_{\rm max}(\Sigma,\|A\|)
\]
and
\[
	\bigl| f(B \exp(A) B^\top) - f(\Sigma)
		- \langle A, G(B)\rangle - 2^{-1} H(A,B) \bigr|
	\ \le \ 2^{-1} \|A\|^2 N(\Sigma,\|A\|) .
\]
\end{Lemma}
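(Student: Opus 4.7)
The plan is to reduce both estimates to the integral remainder formula for a one-dimensional Taylor expansion. Introduce $\phi(t) := f(B\exp(tA)B^\top)$ on $\R$ and exploit the reparametrization identity $B\exp((t+\delta)A)B^\top = B_t^{}\exp(\delta A)B_t^\top$ with $B_t := B\exp((t/2)A)$, already used in the proof of Lemma~\ref{lem:g-convexity}. Plugging this into the second-order geodesic expansion of $f$ based at $B_t^{}B_t^\top$ gives
\[
\phi(t+\delta) - \phi(t)
\ = \ \langle A, G(B_t)\rangle \delta + 2^{-1} H(A, B_t) \delta^2 + o(\delta^2)
\]
as $\delta \to 0$. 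Since $t \mapsto B_t$ is continuous and $H(A,\cdot)$ is continuous on $\Rqqns$ by hypothesis, Lemma~\ref{lem:smoothness2} (or a direct verification) identifies $\phi$ as a $C^2$-function with $\phi'(t) = \langle A, G(B_t)\rangle$ and $\phi''(t) = H(A, B_t)$.

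With $\phi$ in hand, Taylor's theorem with integral remainder gives
\[
\phi(1) - \phi(0) - \phi'(0)
\ = \ \int_0^1 (1-t) H(A, B_t) \, dt
\]
and
\[
\phi(1) - \phi(0) - \phi'(0) - 2^{-1}\phi''(0)
\ = \ \int_0^1 (1-t) \bigl( H(A, B_t) - H(A, B) \bigr) \, dt ,
\]
whose left-hand sides are exactly the remainders to be bounded. It therefore suffices to estimate $H(A, B_t)$ from above by $\|A\|^2 \Lambda_{\max}(\Sigma,\|A\|)$ and $|H(A, B_t) - H(A, B)|$ by $\|A\|^2 N(\Sigma,\|A\|)$ uniformly in $t \in [0,1]$, and then to integrate against $\int_0^1 (1-t)\,dt = 1/2$.

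For this last step the key tool is the orthogonal invariance from Remark~\ref{rem:Orthogonal transformations}. Writing $B = \Sigma^{1/2} V$ for some $V \in \Rqqorth$ and setting $\tilde A := VAV^\top$, one checks that $\|\tilde A\| = \|A\|$ and $B_t = \Sigma^{1/2} \exp((t/2)\tilde A) V$, whence
\[
H(A, B_t) \ = \ H(\tilde A, \Sigma^{1/2} \exp((t/2)\tilde A))
\quad\text{and}\quad
H(A, B) \ = \ H(\tilde A, \Sigma^{1/2}) .
\]
Since $\|(t/2)\tilde A\| \le \|A\|$ for $t \in [0,1]$, and since $H(\cdot, \Sigma^{1/2}\exp(C/2))$ is a quadratic form, the definitions of $\Lambda_{\max}$ and $N$ immediately yield $H(A, B_t) \le \|A\|^2 \Lambda_{\max}(\Sigma,\|A\|)$ and $|H(A, B_t) - H(A, B)| \le \|A\|^2 N(\Sigma,\|A\|)$. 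Substituting into the two integral identities above gives both asserted inequalities. The only step that is not purely mechanical is the invariance argument replacing the arbitrary factor $B$ by the canonical root $\Sigma^{1/2}$; once this is in place, Taylor's theorem and the definitions of $\Lambda_{\max}$, $N$ do the rest.
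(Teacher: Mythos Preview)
Your proof is correct and follows essentially the same route as the paper: define the one-variable function $\phi(t)=f(B\exp(tA)B^\top)$, identify $\phi''(t)=H(A,B\exp((t/2)A))$, apply the integral form of the Taylor remainder, and then use the orthogonal invariance $B=\Sigma^{1/2}V$ from Remark~\ref{rem:Orthogonal transformations} to rewrite $H(A,B_t)$ as $H(\tilde A,\Sigma^{1/2}\exp((t/2)\tilde A))$ with $\tilde A=VAV^\top$, so that the definitions of $\Lambda_{\max}$ and $N$ apply. The only cosmetic difference is that the paper normalizes $\tilde A$ to unit norm upfront, whereas you pull out the factor $\|A\|^2$ at the end via the quadratic-form property of $H(\cdot,B)$; these are equivalent. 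One small slip: you need $\|t\tilde A\|\le\|A\|$ (i.e.\ the $C$ in $\exp(C/2)$ is $t\tilde A$), not $\|(t/2)\tilde A\|\le\|A\|$, but the required bound $t\|A\|\le\|A\|$ holds for $t\in[0,1]$ anyway.
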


\begin{proof}[\bf Proof of Theorem~\ref{thm:phi}]
One can deduce from continuity of $H(A,B)$ in $B \in \Rqqns$ for fixed $A \in \Rqqsym$ and $\Rqqsym$ being finite-dimensional that both $\Lambda_{\rm max}(\Sigma,\delta)$ and $N(\Sigma,\delta)$ are continuous in $(\Sigma,\delta) \in \Rqqsympd \times [0,\infty)$, where $N(\Sigma,0) = 0$. Additional quantities we shall use repeatedly are
\[
	\Lambda_{\rm min}(\Sigma)
	\ := \ \min \bigl\{ H(A, \Sigma^{1/2}) : A \in \Rqqsym, \|A\| = 1 \bigr\} > 0
\]
and $\|G(\Sigma^{1/2})\|$. Both are continuous in $\Sigma$.

For arbitrary $\Sigma = BB^\top$, $B \in \Rqqns$, we can say that
\[
	\|A_{\rm pN}\|
	\ = \ \| \underline{H}(BU)^{-1} \lambda(B) \|
	\ \le \ \frac{\|\lambda(B)\|}{\lambda_{\rm min}(\underline{H}(BU))}
	\ \le \ \frac{\|G(\Sigma^{1/2})\|}{\Lambda_{\rm min}(\Sigma)}
	\ =: \ R_1(\Sigma) ,
\]
because $\|\lambda(B)\| = \|G(B)\| = \|G(\Sigma^{1/2})\|$ and
\[
	\lambda_{\rm min}(\underline{H}(BU))
	\ \ge \ \min \bigl\{ H(A, BU) : A \in \Rqqsym, \|A\| = 1 \bigr\}
	\ = \ \Lambda_{\rm min}(\Sigma) .
\]
On the other hand,
\[
	\langle A_{\rm pN}, G(B)\rangle
	\ = \ - \lambda(B)^\top \underline{H}(BU)^{-1} \lambda(B)
	\ \le \ - \frac{\|\lambda(B)\|^2}{\lambda_{\rm max}(\underline{H}(BU))}
	\ \le \ - \frac{\|G(\Sigma^{1/2})\|^2}{\Lambda_{\rm max}(\Sigma,0)} .
\]
Hence it follows from Lemma~\ref{lem:remainders} that for any fixed integer $m \ge 0$,
\begin{align*}
	f(B & \exp(2^{-m}A_{\rm pN}) B^\top) - f(\Sigma)
		- \langle 2^{-m} A_{\rm pN}, G(B)\rangle/C \\
	&\le \ 2^{-2m-1} \|A_{\rm pN}\|^2 \Lambda_{\rm max}(\Sigma,2^{-m} \|A_{\rm pN}\|)
		+ (1 - C^{-1}) \langle 2^{-m} A_{\rm pN}, G(B)\rangle \\
	&\le \ 2^{-m} \|G(\Sigma^{1/2})\|^2 \Bigl(
		\frac{\Lambda_{\rm max}(\Sigma, 2^{-m} R_1(\Sigma))}
			{2^{m+1} \Lambda_{\rm min}(\Sigma)^2}
		- \frac{1}{\Lambda_{\rm max}(\Sigma,0)} \Bigr) \\
	&=: \ R_{2,m}(\Sigma) .	
\end{align*}
Note that $R_{2,m}(\Sigma)$ is continuous in $\Sigma$. Moreover, for any fixed $\Sigma_o \ne \Sigma_*$ there is an integer $m_o \ge 0$ such that $R_{2,m_o}(\Sigma_o) < 0$. Consequently, if $\Sigma$ is sufficiently close to $\Sigma_o$, then the integer $m(BU)$ in $\phi(\Sigma)$ satisfies $m(BU) \le m_o$, and
\[
	f(\phi(\Sigma)) - f(\Sigma)
	\ \le \ 2_{}^{-m_o} \langle A_{\rm pN}, G(B)\rangle / C
	\ \le \ - \frac{\|G(\Sigma^{1/2})\|^2}{2_{}^{m_o} \Lambda_{\rm max}(\Sigma,0) C} .
\]
This shows that
\[
	\limsup_{\Sigma \to \Sigma_o} f(\phi(\Sigma)) - f(\Sigma_o)
	\ \le \ - \frac{\|G(\Sigma_o^{1/2})\|^2}{2_{}^{m_o} \Lambda_{\rm max}(\Sigma_o,0) C}
	\ < \ 0 .
\]

For $\Sigma$ close to $\Sigma_*$ we only consider $m = 0$ and utilize the second bound in Lemma~\ref{lem:remainders}. Namely,
\begin{align*}
	f(B \exp(A_{\rm pN}) B^\top) - f(\Sigma) \
	&= \ 2^{-1} \langle A_{\rm pN}, G(B)\rangle +
		2^{-1} \|A_{\rm pN}\|^2 N(\Sigma, \|A_{\rm pN}\|) \\
	&\le \ 2^{-1} \langle A_{\rm pN}, G(B)\rangle
		+ \|G(\Sigma^{1/2})\|^2 \frac{N(\Sigma, R_1(\Sigma))}
			{2 \Lambda_{\rm min}(\Sigma)^2} .
\end{align*}
Consequently,
\begin{align*}
	f(B & \exp(A_{\rm pN}) B^\top) - f(\Sigma)
		- \langle A_{\rm pN}, G(B)\rangle/C \\
	&\le \ (2^{-1} - C^{-1}) \langle A_{\rm pN}, G(B)\rangle
		+ \|G(\Sigma^{1/2})\|^2 \frac{N(\Sigma, R_1(\Sigma))}
			{2 \Lambda_{\rm min}(\Sigma)^2} \\
	&\le \ \|G(\Sigma^{1/2})\|^2 \Bigl(
		\frac{N(\Sigma, R_1(\Sigma))}{2 \Lambda_{\rm min}(\Sigma)^2}
		- \frac{2^{-1} - C^{-1}}{\Lambda_{\rm max}(\Sigma,0)} \Bigr) .
\end{align*}
But $R_1(\Sigma) \to 0$ as $\Sigma \to \Sigma_*$ and $N(\Sigma_*,0) = 0$, so
\[
	\lim_{\Sigma \to \Sigma_*} \Bigl(
		\frac{N(\Sigma, R_1(\Sigma))}{2 \Lambda_{\rm min}(\Sigma)^2}
		- \frac{2^{-1} - C^{-1}}{\Lambda_{\rm max}(\Sigma,0)} \Bigr)
	\ = \ - \frac{2^{-1} - C^{-1}}{\Lambda_{\rm max}(\Sigma_*,0)}
	\ < \ 0 .
\]
Consequently, $m(BU) = 0$ if $\Sigma$ is sufficiently close to $\Sigma_*$.
\end{proof}

\addcontentsline{toc}{section}{References}


\clearpage

\appendix

\section{Further Proofs and Auxiliary Results}
\label{sec:Auxiliary}

\subsection{Various expansions for matrix exponentials and logarithms}

The next three lemmas provide expansions and inequalities for matrix exponentials and logarithms. They involve the auxiliary function $J : \R \times \R \to \R$ given by
\[
	J(x,y) \ := \ \int_0^1 e_{}^{(1 - u)x + uy} \, du
	\ = \ \begin{cases}
		(e^y - e^x)/(y - x) & \text{if} \ x \ne y , \\
		e^x                 & \text{if} \ x = y .
	\end{cases}
\]
One may also write $J(x,y) = \Ex e^{(1 - U)x + Uy}$ with a random variable $U$ which is uniformly distributed on $[0,1]$. Convexity of the exponential function on $\R$ and Jensen's inequality imply that
\begin{equation}
\label{ineq:J(x,y)}
	e_{}^{(x + y)/2} \ \le \ J(x,y) \ \le \ (e^x + e^y)/2 .
\end{equation}

\begin{Lemma}[1st order Taylor expansions of matrix exponentials and logarithms]
\label{lem:derivatives.exp.log}
For a vector $\lambda \in \R^q$ and a matrix $V = [v_1,v_2,\ldots,v_q] \in \Rqqorth$ let $A = V D(\lambda) V^\top$ and $B = \exp(A) = V D(e^\lambda) V^\top$. Then as $\Rqqsym \ni \Delta \to 0$,
\begin{align*}
	\exp(A + \Delta) \
	&= \ \exp(A) + V \,
		\bigl( J(\lambda_i,\lambda_j) \, v_i^\top \Delta_{ij}^{} v_j^{} \bigr)_{i,j=1}^q
		\, V^\top
		+ o(\|\Delta\|)
\intertext{and}
	\log(B + \Delta) \
	&= \ \log(B)
		+ V \, \Bigl( \frac{v_i^\top \Delta v_j^{}}{J(\lambda_i,\lambda_j)}
			\Bigr)_{i,j=1}^q \, V^\top
		+ o(\|\Delta\|) .
\end{align*}
\end{Lemma}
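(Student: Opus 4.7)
The plan is to prove the expansion for $\exp$ by direct power-series manipulation in the eigenbasis of $A$, and then deduce the expansion for $\log$ from the inverse function theorem. Throughout, let $\tilde{\Delta} := V^\top \Delta V$, so $\tilde{\Delta}_{ij} = v_i^\top \Delta v_j$.

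For the first assertion, I would start from $\exp(A+\Delta) = \sum_{k \ge 0} (A+\Delta)^k/k!$ and expand each power as $(A+\Delta)^k = A^k + \sum_{j=0}^{k-1} A^j \Delta A^{k-1-j} + R_k$, where $R_k$ collects all terms containing two or more factors of $\Delta$. A standard counting bound gives
\[
	\|R_k\| \ \le \ (\|A\|+\|\Delta\|)^k - \|A\|^k - k\|A\|^{k-1}\|\Delta\| ,
\]
from which $\sum_k R_k/k! = O(\|\Delta\|^2) = o(\|\Delta\|)$. Hence
\[
	\exp(A+\Delta) - \exp(A)
	\ = \ \sum_{k=1}^\infty \frac{1}{k!} \sum_{j=0}^{k-1} A^j \Delta A^{k-1-j}
		+ o(\|\Delta\|) .
\]
Since $A^s = V D(\lambda^s) V^\top$, the $(r,s)$-entry of $V^\top A^j \Delta A^{k-1-j} V$ equals $\lambda_r^j \tilde{\Delta}_{rs} \lambda_s^{k-1-j}$, so the $(r,s)$-entry of the first-order correction is $\tilde{\Delta}_{rs} \cdot S_{rs}$ with
\[
	S_{rs} \ := \ \sum_{k=1}^\infty \frac{1}{k!}
		\sum_{j=0}^{k-1} \lambda_r^j \lambda_s^{k-1-j} .
\]
Treating $\lambda_r = \lambda_s$ and $\lambda_r \ne \lambda_s$ separately (using $\sum_{j=0}^{k-1} \lambda_r^j \lambda_s^{k-1-j} = (\lambda_s^k - \lambda_r^k)/(\lambda_s - \lambda_r)$ in the latter case) gives $S_{rs} = J(\lambda_r,\lambda_s)$, which upon conjugating back by $V$ yields the first expansion.

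For the second assertion, the first part identifies the Fréchet derivative of $\exp$ at $A$ as the linear map $L_A : E \mapsto V \bigl( J(\lambda_i,\lambda_j) v_i^\top E v_j \bigr)_{i,j=1}^q V^\top$. Because $J(x,y) \ge e^{(x+y)/2} > 0$ by \eqref{ineq:J(x,y)}, $L_A$ is a bijection of $\Rqqsym$ with inverse
\[
	L_A^{-1} : \Delta \ \mapsto \ V \,
		\Bigl( \frac{v_i^\top \Delta v_j}{J(\lambda_i,\lambda_j)} \Bigr)_{i,j=1}^q V^\top .
\]
Recall from the paper that $\exp : \Rqqsym \to \Rqqsympd$ is a smooth bijection with inverse $\log$. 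The inverse function theorem then gives that $\log$ is differentiable at $B = \exp(A)$ with derivative $L_A^{-1}$, which is precisely the claimed second expansion.

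The main technical obstacle is the uniform control of the power-series remainder $R_k$ needed to justify the $o(\|\Delta\|)$ bound, which reduces to summability of $\sum_k k^2 c^k/k!$ for $c = \|A\|+1$ and is entirely elementary. Everything else is bookkeeping in the eigenbasis of $A$, with the only nontrivial piece being the identification of the combinatorial sum $S_{rs}$ with the integral $J(\lambda_r,\lambda_s) = \int_0^1 e^{(1-u)\lambda_r + u\lambda_s} du$.
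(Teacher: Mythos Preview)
Your argument is correct and its overall architecture matches the paper's: compute the Fr\'echet derivative of $\exp$ at $A$, observe it is invertible because $J(\cdot,\cdot) > 0$, and then invoke the inverse function theorem to get the derivative of $\log$ at $B = \exp(A)$.

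The one genuine difference is how the derivative of $\exp$ is obtained. The paper quotes the integral formula
\[
	G_A(\Delta) \ = \ \int_0^1 \exp((1-u)A)\,\Delta\,\exp(uA)\,du
\]
from \cite{Duembgen_etal_2015} and then evaluates it in the eigenbasis of $A$, where the integral becomes $\int_0^1 e^{(1-u)\lambda_i + u\lambda_j}\,du = J(\lambda_i,\lambda_j)$ by definition. You instead expand the power series of $\exp(A+\Delta)$ directly, isolate the linear-in-$\Delta$ part, and identify the resulting combinatorial sum $S_{rs} = \sum_{k\ge 1} k!^{-1} \sum_{j=0}^{k-1} \lambda_r^j \lambda_s^{k-1-j}$ with $J(\lambda_r,\lambda_s)$ via a geometric-series/telescoping computation. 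Your route is more self-contained (no external reference needed) and makes the $o(\|\Delta\|)$ remainder fully explicit via the clean bound $\sum_k \|R_k\|/k! \le e^{\|A\|+\|\Delta\|} - e^{\|A\|} - e^{\|A\|}\|\Delta\| = O(\|\Delta\|^2)$; the paper's route is shorter once the integral formula is granted and has the advantage that $J(\lambda_i,\lambda_j)$ drops out immediately rather than through a separate summation identity.
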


These expansions may be viewed as special cases of the Daleckii-Krein formula; cf.\ Chapter~V of \cite{Bhatia_1997} and Chapter~2 of \cite{Bhatia_2007}. We provide a more direct proof starting from a particular series expansion of matrix exponentials in \cite{Duembgen_etal_2015}. The explicit formula for the derivative of the exponential transform of $\Rqqsym$ implies local Lipschitz constants.

\begin{Lemma}[Lipschitz properties of matrix exponentials and logarithms]
\label{lem:Lipschitz.exp.log}
For arbitrary different matrices $A,B \in \Rqqsym$,
\[
	\frac{\bigl\| \exp(B) - \exp(A) \bigr\|}{\|B - A\|} \
	\begin{cases}
		\le \ J \bigl( \lambda_{\rm max}(A), \lambda_{\rm max}(B) \bigr)
		\ \le \ \max \bigl\{ e_{}^{\lambda_{\rm max}(A)},
			e_{}^{\lambda_{\rm max}(B)} \bigr\} ,
		\\[1ex]
		\ge \ J \bigl( \lambda_{\rm min}(A), \lambda_{\rm min}(B) \bigr)
		\ \ge \ \min \bigl\{ e_{}^{\lambda_{\rm min}(A)},
			e_{}^{\lambda_{\rm min}(B)} \bigr\} .
	\end{cases}
\]
For arbitrary different matrices $A, B \in \Rqqsympd$,
\[
	\frac{\bigl\| \log(B) - \log(A) \bigr\|}{\|B - A\|} \
	\begin{cases}
		\displaystyle
		\le \ \frac{1}{J \bigl( \log\lambda_{\rm min}(A), \log\lambda_{\rm min}(B) \bigr)}
		\ \le \ \max \Bigl\{ \frac{1}{\lambda_{\rm min}(A)},
			\frac{1}{\lambda_{\rm min}(B)} \Bigr\} ,
		\\[2.5ex]
		\displaystyle
		\ge \ \frac{1}{J \bigl( \log\lambda_{\rm max}(A), \log\lambda_{\rm max}(B) \bigr)}
		\ \ge \ \min \Bigl\{ \frac{1}{\lambda_{\rm max}(A)},
			\frac{1}{\lambda_{\rm max}(B)} \bigr\} .
	\end{cases}
\]
\end{Lemma}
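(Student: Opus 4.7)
The strategy is to integrate the derivative formula of Lemma~\ref{lem:derivatives.exp.log} along the straight segment $C_t := (1-t)A + tB$, $t \in [0,1]$. Writing $C_t = V_t^{} D(\mu_t) V_t^\top$, Lemma~\ref{lem:derivatives.exp.log} identifies the Fr\'echet derivative of $\exp$ at $C_t$ as the linear map
\[
    \Phi_{C_t}(\Delta) \ := \ V_t^{} \bigl[ \bigl( J(\mu_{t,i}, \mu_{t,j}) \bigr)_{i,j} \odot V_t^\top \Delta V_t^{} \bigr] V_t^\top
    \quad \text{on } \Rqqsym,
\]
where $\odot$ denotes entrywise multiplication. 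Since the scaling matrix $(J(\mu_{t,i}, \mu_{t,j}))$ is real symmetric and strictly positive, $\Phi_{C_t}$ is a positive self-adjoint operator on $\Rqqsym$ whose eigenvalues are exactly $\{J(\mu_{t,i}, \mu_{t,j}) : 1 \le i \le j \le q\}$. The representation $J(x,y) = \Ex\, e^{(1-U)x + Uy}$ with $U$ uniform on $[0,1]$ shows that $J$ is jointly increasing in its arguments, so the extreme eigenvalues of $\Phi_{C_t}$ are exactly $e^{\lambda_{\min}(C_t)}$ and $e^{\lambda_{\max}(C_t)}$.

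The fundamental theorem of calculus then gives $\exp(B) - \exp(A) = \int_0^1 \Phi_{C_t}(B - A)\, dt$. For the upper bound, taking Frobenius norms under the integral sign and using the operator-norm bound yields $\|\exp(B) - \exp(A)\| \le \|B - A\| \int_0^1 e^{\lambda_{\max}(C_t)}\, dt$. Convexity of $\lambda_{\max}$ on $\Rqqsym$ (as a pointwise supremum of the linear maps $A \mapsto v^\top A v$ over unit $v$) gives $\lambda_{\max}(C_t) \le (1-t)\lambda_{\max}(A) + t\lambda_{\max}(B)$, and monotonicity of $\exp$ together with the defining integral of $J$ reduce the $t$-integral to $J(\lambda_{\max}(A), \lambda_{\max}(B))$. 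The weaker upper bound $\max\{e^{\lambda_{\max}(A)}, e^{\lambda_{\max}(B)}\}$ is then immediate from $J(x,y) \le (e^x + e^y)/2 \le \max\{e^x, e^y\}$.

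For the lower bound the triangle inequality applied to the integral points the wrong way, so one pairs instead with $\Delta := B - A$. Self-adjointness and the spectral lower bound on $\Phi_{C_t}$ give
\[
    \langle \Delta, \exp(B) - \exp(A)\rangle \ = \ \int_0^1 \langle \Delta, \Phi_{C_t}(\Delta)\rangle \, dt \ \ge \ \|\Delta\|^2 \int_0^1 e^{\lambda_{\min}(C_t)} \, dt ,
\]
while Cauchy--Schwarz bounds the left side above by $\|\Delta\| \cdot \|\exp(B) - \exp(A)\|$. Concavity of $\lambda_{\min}$ (an infimum of linear functionals) together with monotonicity of $\exp$ now yields the lower bound $J(\lambda_{\min}(A), \lambda_{\min}(B))$ for the $t$-integral, and the weaker estimate follows from $J(x,y) \ge e^{(x+y)/2} \ge \min\{e^x, e^y\}$.

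Finally, the logarithm inequalities follow by applying the already-proved exponential inequalities to $A' := \log A$, $B' := \log B \in \Rqqsym$ and inverting: the eigenvalues of $\log A$ are the logarithms of those of $A$, so $\lambda_{\min}(\log A) = \log \lambda_{\min}(A)$ and $\lambda_{\max}(\log A) = \log \lambda_{\max}(A)$, and the bounds on $\|\exp(B') - \exp(A')\| / \|B' - A'\|$ become bounds on $\|B - A\| / \|\log B - \log A\|$ with $J$ in the denominator. The main conceptual step is the first paragraph's spectral analysis of $\Phi_C$; once $\Phi_C$ is correctly identified as self-adjoint with spectral range $[e^{\lambda_{\min}(C)}, e^{\lambda_{\max}(C)}]$, the rest of the proof is a routine marriage of convexity/concavity of the extreme-eigenvalue functions with the integral definition of $J$.
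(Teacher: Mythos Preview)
Your proof is correct and follows essentially the same route as the paper's: both integrate the Fr\'echet derivative $G_{C_t}$ along the segment $C_t = (1-t)A + tB$, bound the operator norm (respectively the quadratic form $\langle \Delta, G_{C_t}(\Delta)\rangle$) by $e^{\lambda_{\max}(C_t)}$ (respectively $e^{\lambda_{\min}(C_t)}$), invoke convexity/concavity of the extreme eigenvalues to linearize the exponent, and recognize the resulting $t$-integral as $J$. Your phrasing via the spectrum of the self-adjoint Schur-multiplier operator $\Phi_{C_t}$ is a clean repackaging of exactly the same computation the paper does coordinatewise.
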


In connection with two particular penalties we need second order Taylor expansions of matrix exponentials and logarithms. In addition to the bivariate function $J(\cdot,\cdot)$ these involve the trivariate function $J : \R \times \R \times \R \to \R$ with
\[
	J(x,y,z)
	\ := \ \int_{\{u \in [0,1]^2 : u_1 + u_2 \le 1\}}
		\exp(u_0 x + u_1 y + u_2 z) \, du \qquad (\text{with} \ u_0 := 1 - u_1 - u_2) .
\]
One may also write $J(x,y,z) = 2^{-1} \Ex e^{U_0 x + U_1 y + U_2 z}$, where $(U_0,U_1,U_2)$ is uniformly distributed on the unit simplex of all triples $(u_0,u_1,u_2) \in [0,1]^3$ with $u_0 + u_1 + u_2 = 1$. Again one can deduce from convexity of the exponential function and Jensen's inequality that
\begin{equation}
\label{ineq:J(x,y,z)}
	e_{}^{(x+y+z)/3}/2 \ \le \ J(x,y,z) \ \le \ (e^x + e^y + e^z)/6 .
\end{equation}
Another useful identity which will be used later is
\begin{equation}
\label{eq:J(x,y,z)}
	J(x,y,z) \ = \ \frac{J(x,z) - J(y,z)}{x - y}
	\quad\text{if} \ x \ne y .
\end{equation}
For
\begin{align*}
	J(x,y,z) \
	&= \ \int_0^1 \int_0^{1-u} \exp((1 - u - v) x + v y + u z) \, dv
		\, du \\
	&= \ \int_0^1 \Bigl( \frac{\exp((1 - u - v)x + v y + uz)}{y - x} \Bigr) \Big|_{v=0}^{1-u}
		\, du \\
	&= \ \int_0^1 \frac{\exp((1 - u)y + uz) - \exp((1 - u)x + uz)}{y - x}
		\, du \\
	&= \ \frac{J(y,z) - J(x,z)}{y - x} .
\end{align*}

\begin{Lemma}[2nd order Taylor expansions of matrix exponentials and logarithms]
\label{lem:derivatives.exp.log.2}
Let $\lambda \in \R^q$ and $\mu = e^\lambda \in \R_+^q$. Then, as $\Rqqsym \ni \Delta \to 0$,
\begin{align*}
	\exp( &D(\lambda) + \Delta) \\
	&= \ D(\mu) + \bigl( J(\lambda_i,\lambda_j) \, \Delta_{ij} \bigr)_{i,j=1}^q
		+ \sum_{z=1}^q  \bigl( J(\lambda_i,\lambda_z,\lambda_j)
		\, \Delta_{iz} \Delta_{zj} \bigr)_{i,j=1}^q
		+ O(\|\Delta\|^3)
\intertext{and}
	\log( &D(\mu) + \Delta) \\
	&= \ D(\lambda) + \Bigl( \frac{\Delta_{ij}}{J(\lambda_i,\lambda_j)} \Bigr)_{i,j=1}^q
		- \sum_{z=1}^q \Bigl(
			\frac{J(\lambda_i,\lambda_z,\lambda_j) \, \Delta_{iz} \Delta_{zj}}
				{J(\lambda_i,\lambda_j) J(\lambda_i,\lambda_z) J(\lambda_z,\lambda_j)}
			\Bigr)_{i,j=1}^q
		+ O(\|\Delta\|^3) .
\end{align*}
\end{Lemma}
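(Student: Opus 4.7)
The plan is to derive the exponential expansion first using the Duhamel (Volterra) series, and then invert it to obtain the logarithm expansion.

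The starting point is the series representation
\[
\exp(A + \Delta) \ = \ \sum_{k=0}^\infty \int_{\Delta_k} \exp(u_0 A)\, \Delta\, \exp(u_1 A) \cdots \Delta\, \exp(u_k A)\, du_1 \cdots du_k ,
\]
where $\Delta_k := \{(u_1,\ldots,u_k) \in [0,1]^k : u_1 + \cdots + u_k \le 1\}$ and $u_0 := 1 - u_1 - \cdots - u_k$. This is the particular series alluded to in the proof of Lemma~\ref{lem:derivatives.exp.log}, obtainable by iterating the identity $e^{A+\Delta} = e^A + \int_0^1 e^{t(A+\Delta)} \Delta\, e^{(1-t)A}\, dt$. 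Specializing to $A = D(\lambda)$, every $\exp(u_i A) = D(e^{u_i \lambda})$ is diagonal, so each matrix product in the integrand can be read off entrywise.

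The $k=1$ term produces the $(i,j)$-entry $\int_0^1 e^{(1-u)\lambda_i + u\lambda_j}\, du \cdot \Delta_{ij} = J(\lambda_i,\lambda_j)\, \Delta_{ij}$, recovering the first-order term in Lemma~\ref{lem:derivatives.exp.log}. The $k=2$ term, after summing over the intermediate index $z$, becomes $\sum_z J(\lambda_i,\lambda_z,\lambda_j)\, \Delta_{iz}\, \Delta_{zj}$, which is precisely the second-order term claimed. For $k \ge 3$, submultiplicativity of the Frobenius norm against the operator norm, combined with $\mathrm{vol}(\Delta_k) = 1/k!$ and the bound $\|\exp(u_i A)\|_{\rm op} \le e^{\lambda_{\rm max}}$ on the simplex, dominates the $k$-th summand by $C(\lambda)\, \|\Delta\|^k / k!$, so the tail sums to $O(\|\Delta\|^3)$ as $\Delta \to 0$.

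For the logarithm, I would invert. Set $\tilde\Delta := \log(D(\mu) + \Delta) - D(\lambda)$, so that $\exp(D(\lambda) + \tilde\Delta) = D(\mu) + \Delta$; Lemma~\ref{lem:Lipschitz.exp.log} guarantees $\tilde\Delta = O(\|\Delta\|)$. Writing $\tilde\Delta = L_1 + L_2 + O(\|\Delta\|^3)$ with $L_1$ linear and $L_2$ quadratic in $\Delta$, substituting into the exp expansion just established, and matching coefficients order by order, first-order matching yields
\[
(L_1)_{ij} \ = \ \Delta_{ij}/J(\lambda_i,\lambda_j),
\]
and second-order matching demands
\[
J(\lambda_i,\lambda_j)\, (L_2)_{ij} + \sum_z J(\lambda_i,\lambda_z,\lambda_j)\, (L_1)_{iz}\, (L_1)_{zj} \ = \ 0 ,
\]
which forces $(L_2)_{ij}$ to be exactly the expression stated in the lemma.

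The main technical obstacle is the remainder bookkeeping for the log inversion: converting the $O(\|\tilde\Delta\|^3)$ error in the exp series into an $O(\|\Delta\|^3)$ error in the log expansion requires a short bootstrap, using the linear estimate $\tilde\Delta = O(\|\Delta\|)$ together with the local invertibility of $\exp$ on $\Rqqsym$. Apart from this, the argument is pure index bookkeeping in the simplex integrals defining $J(x,y)$ and $J(x,y,z)$; the identity~\eqref{eq:J(x,y,z)} and inequality~\eqref{ineq:J(x,y,z)} are not needed for the derivation itself but provide useful consistency checks on the final formulas.
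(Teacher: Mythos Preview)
Your proposal is correct and follows essentially the same route as the paper. For the exponential, the paper likewise invokes the Duhamel-type series (citing \cite{Duembgen_etal_2015}) and reads off the linear and quadratic terms $G_A(\Delta)$, $H_A(\Delta)$ componentwise when $A = D(\lambda)$. For the logarithm, the paper organizes the inversion as ``construct and verify'' rather than ``assume and match'': it explicitly sets $E := G_A^{-1}(\Delta) - G_A^{-1}\bigl(H_A(G_A^{-1}(\Delta))\bigr)$, checks that $\exp(A+E) = D(\mu) + \Delta + O(\|\Delta\|^3)$, and then applies the Lipschitz bound of Lemma~\ref{lem:Lipschitz.exp.log} to conclude $\log(D(\mu)+\Delta) = A + E + O(\|\Delta\|^3)$---which is exactly your bootstrap step made explicit.
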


\begin{Corollary}[Geodesic 2nd order Taylor expansion of matrix logarithms]
\label{cor:g-derivatives.log}
Let $\lambda \in \R^q$ and $\mu = e^\lambda \in \R_+^q$. Then, as $\Rqqsym \ni A \to 0$,
\begin{align*}
	\log \bigl( D(\mu)^{1/2} & \exp(A) D(\mu)^{1/2} \bigr) \\
	\ = \ D(\lambda) \
	&+ \ \biggl( \frac{\sqrt{\mu_i\mu_j} \, A_{ij}}{J(\lambda_i,\lambda_j)}
			\biggr)_{i,j=1}^q \\
	&+ \ \sum_{z=1}^q \biggl( \frac{\sqrt{\mu_i\mu_j} \, A_{iz}A_{zj}}
			{J(\lambda_i,\lambda_j)}
			\Bigl( \frac{1}{2} - \frac{J(\lambda_i,\lambda_z,\lambda_j) \mu_z}
				{J(\lambda_i,\lambda_z) J(\lambda_z,\lambda_j)} \Bigr)
		\biggr)_{i,j=1}^q + O(\|A\|^3) .
\end{align*}
\end{Corollary}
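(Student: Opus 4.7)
The plan is to derive the expansion by direct substitution into the second formula of Lemma~\ref{lem:derivatives.exp.log.2}. First, I would write $\exp(A) = I_q + A + A^2/2 + O(\|A\|^3)$ and sandwich this between the two factors $D(\mu)^{1/2}$ to obtain
\[
	D(\mu)^{1/2} \exp(A) D(\mu)^{1/2} \ = \ D(\mu) + \Delta_1 + \Delta_2 + O(\|A\|^3) ,
\]
where $\Delta_1 := D(\mu)^{1/2} A\, D(\mu)^{1/2}$ has entries $(\Delta_1)_{ij} = \sqrt{\mu_i\mu_j}\, A_{ij}$, and $\Delta_2 := \tfrac12 D(\mu)^{1/2} A^2 D(\mu)^{1/2}$ has entries $(\Delta_2)_{ij} = \tfrac12 \sqrt{\mu_i\mu_j}\sum_{z=1}^q A_{iz}A_{zj}$. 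Setting $\Delta := \Delta_1 + \Delta_2 + O(\|A\|^3)$, note that $\|\Delta\| = O(\|A\|)$, so $\Delta \to 0$ as $A \to 0$.

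Second, I would apply Lemma~\ref{lem:derivatives.exp.log.2} to $\log(D(\mu) + \Delta)$. Its $O(\|\Delta\|^3)$ remainder is automatically $O(\|A\|^3)$, and any cross-terms involving $\Delta_1 \Delta_2$ or $\Delta_2 \Delta_2$ in the quadratic-in-$\Delta$ piece are already $O(\|A\|^3)$ and can be absorbed. So only the pure $\Delta_1 \cdot \Delta_1$ contribution survives at second order.

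Third, I would collect terms by order in $\|A\|$. The first-order term comes entirely from the linear-in-$\Delta$ piece applied to $\Delta_1$, namely $\sqrt{\mu_i\mu_j}\, A_{ij}/J(\lambda_i,\lambda_j)$, matching the first claimed correction. The second-order term has two contributions: the linear-in-$\Delta$ piece applied to $\Delta_2$ yields
\[
	\frac{(\Delta_2)_{ij}}{J(\lambda_i,\lambda_j)}
	\ = \ \frac{1}{2}\,
		\frac{\sqrt{\mu_i\mu_j}\sum_z A_{iz}A_{zj}}{J(\lambda_i,\lambda_j)} ,
\]
and the quadratic-in-$\Delta$ piece applied to $\Delta_1$, using $(\Delta_1)_{iz}(\Delta_1)_{zj} = \mu_z \sqrt{\mu_i\mu_j}\, A_{iz}A_{zj}$, yields
\[
	- \sum_z \frac{J(\lambda_i,\lambda_z,\lambda_j)\, \mu_z \sqrt{\mu_i\mu_j}\, A_{iz}A_{zj}}
		{J(\lambda_i,\lambda_j)\, J(\lambda_i,\lambda_z)\, J(\lambda_z,\lambda_j)} .
\]
Adding these produces the common prefactor $\sqrt{\mu_i\mu_j}\, A_{iz}A_{zj}/J(\lambda_i,\lambda_j)$ multiplied by $\bigl(\tfrac12 - J(\lambda_i,\lambda_z,\lambda_j)\mu_z/(J(\lambda_i,\lambda_z) J(\lambda_z,\lambda_j))\bigr)$, which is exactly the stated second-order correction.

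There is no serious obstacle here beyond careful bookkeeping: the key observation is that the piece $\Delta_2$ of $\Delta$ is already quadratic in $A$, so it only interacts with the linear-in-$\Delta$ part of the log expansion, while the quadratic-in-$\Delta$ part of the log expansion only needs to be evaluated on $\Delta_1$. The slightly tricky step is tracking the three $J$-factors in the denominator of the quadratic part and combining them cleanly with the single $J(\lambda_i,\lambda_j)$ from the linear part, but both contributions share the factor $1/J(\lambda_i,\lambda_j)$, which is what makes the final expression factor as stated.
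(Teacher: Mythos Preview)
Your proposal is correct and follows essentially the same route as the paper's proof: both expand $\exp(A)$ to second order, write $D(\mu)^{1/2}\exp(A)D(\mu)^{1/2} = D(\mu) + \Delta$ with $\Delta = \Delta_1 + \Delta_2 + O(\|A\|^3)$, plug into the second formula of Lemma~\ref{lem:derivatives.exp.log.2}, and then separate the linear-in-$\Delta$ piece (evaluated on $\Delta_1$ and $\Delta_2$) from the quadratic-in-$\Delta$ piece (evaluated on $\Delta_1$ only, since the $\Delta_1\Delta_2$ cross-terms are $O(\|A\|^3)$). Your bookkeeping of the $\mu_z$ factor arising from $(\Delta_1)_{iz}(\Delta_1)_{zj} = \mu_z\sqrt{\mu_i\mu_j}\,A_{iz}A_{zj}$ and the common $1/J(\lambda_i,\lambda_j)$ factor matches the paper exactly.
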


\begin{Corollary}[Two particular penalties]
\label{cor:penalties}
For $\Sigma \in \Rqqsympd$ let $\Pi(\Sigma) := \|\log(\Sigma)\|^2$ and $\pi(\Sigma) := \Pi \bigl( (\det \Sigma)^{-1/q} \Sigma \bigr) = \Pi(\Sigma) - (\log \det(\Sigma))^2/q$. For arbitrary vectors $\mu = e^\lambda$ with $\lambda \in \R^q$ and matrices $A \in \Rqqsym$, as $A \to 0$,
\begin{align*}
	\Pi \bigl( D(\mu)^{1/2} \exp(A) D(\mu)^{1/2} \bigr) \
	&= \ \|\lambda\|^2 + 2 \sum_{i=1}^q \lambda_i A_{ii}
		+ \sum_{i,j=1}^q W_{ij}(\lambda) \, A_{ij}^2
		+ O(\|A\|^3) \\
\intertext{and}
	\pi \bigl( D(\mu)^{1/2} \exp(A) D(\mu)^{1/2} \bigr) \
	&= \ \|\lambda^o\|^2 + 2 \sum_{i=1}^q \lambda_i^o A_{ii}^o
		+ \sum_{i,j=1}^q W_{ij}(\lambda) (A_{ij}^o)^2
		+ O(\|A\|^3) ,
\end{align*}
where $\lambda^o := (\lambda_i - \bar{\lambda})_{i=1}^q$ with $\bar{\lambda} := q^{-1} \sum_{i=1}^q \lambda_i$, $A^o := A - (\tr(A)/q) I_q$, and
\[
	W_{ij}(\lambda)
	\ := \ \frac{\mu_i + \mu_j}{2 J(\lambda_i,\lambda_j)}
	\ \ge \ 1 .
\]
\end{Corollary}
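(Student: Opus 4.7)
The strategy is to substitute the geodesic Taylor expansion of Corollary~\ref{cor:g-derivatives.log} into $\Pi(\Sigma) = \|\log\Sigma\|^2 = \langle\log\Sigma,\log\Sigma\rangle$ and then to pass to $\pi$ by a short bookkeeping step involving $\log\det$.

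Write $L(A) := \log(D(\mu)^{1/2}\exp(A)D(\mu)^{1/2})$. Corollary~\ref{cor:g-derivatives.log} gives
$L(A) = D(\lambda) + L_1(A) + L_2(A) + O(\|A\|^3)$, where $L_1$ has entries $\sqrt{\mu_i\mu_j}\,A_{ij}/J(\lambda_i,\lambda_j)$ and $L_2$ is quadratic in $A$. Expanding the squared Frobenius norm,
\[
\Pi = \|D(\lambda)\|^2 + 2\langle D(\lambda),L_1(A)\rangle + 2\langle D(\lambda),L_2(A)\rangle + \|L_1(A)\|^2 + O(\|A\|^3).
\]
The constant term is $\|\lambda\|^2$. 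Using $J(\lambda_i,\lambda_i) = e^{\lambda_i} = \mu_i$, only the diagonal of $L_1$ contributes to $\langle D(\lambda),L_1(A)\rangle$, yielding $\sum_i\lambda_i A_{ii}$; this gives the linear term $2\sum_i\lambda_iA_{ii}$.

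The core of the argument is to identify the quadratic part with $\sum_{i,j}W_{ij}(\lambda)A_{ij}^2$. Since $A_{ij}=A_{ji}$, only the symmetrized coefficient matters, so one checks
\[
\tilde q_{ij} \;:=\; \frac{\mu_i\mu_j}{J(\lambda_i,\lambda_j)^2} + \frac{\lambda_i C_{ij}+\lambda_j C_{ji}}{2} \;=\; W_{ij}(\lambda),
\qquad C_{ij} := 1 - \frac{2J(\lambda_i,\lambda_z,\lambda_j)\mu_j}{J(\lambda_i,\lambda_j)^2}\Big|_{z=i}.
\]
The diagonal case $i=j$ is immediate: $J(\lambda_i,\lambda_i,\lambda_i)=\mu_i/2$ makes $C_{ii}=0$ and $\mu_i^2/J(\lambda_i,\lambda_i)^2 = 1 = W_{ii}(\lambda)$. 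For $i\neq j$, apply the divided-difference identity \eqref{eq:J(x,y,z)} and the symmetry of $J(\cdot,\cdot,\cdot)$ to obtain $J(\lambda_i,\lambda_j,\lambda_i) = (\mu_i - J(\lambda_i,\lambda_j))/(\lambda_i-\lambda_j)$ and analogously for $J(\lambda_i,\lambda_j,\lambda_j)$. A short algebraic manipulation then yields
\[
\lambda_i J(\lambda_i,\lambda_j,\lambda_i)\mu_j + \lambda_j J(\lambda_i,\lambda_j,\lambda_j)\mu_i \;=\; \mu_i\mu_j + \frac{\lambda_j\mu_i - \lambda_i\mu_j}{\lambda_i-\lambda_j}\,J(\lambda_i,\lambda_j),
\]
so the $\mu_i\mu_j/J(\lambda_i,\lambda_j)^2$ term cancels and $\tilde q_{ij}$ reduces to $(\lambda_i+\lambda_j)/2 + (\lambda_i\mu_j-\lambda_j\mu_i)/[(\lambda_i-\lambda_j)J(\lambda_i,\lambda_j)]$. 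Setting $a=(\lambda_i-\lambda_j)/2$, $b=(\lambda_i+\lambda_j)/2$, so that $J(\lambda_i,\lambda_j)=e^b\sinh(a)/a$ and $\mu_i+\mu_j=2e^b\cosh(a)$, one verifies this equals $a/\tanh(a) = (\mu_i+\mu_j)/(2J(\lambda_i,\lambda_j)) = W_{ij}(\lambda)$. This is the main obstacle --- it is not deep, but the reduction must be written out carefully.

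For $\pi$, observe that $\log\det(D(\mu)^{1/2}\exp(A)D(\mu)^{1/2}) = q\bar\lambda + \tr(A)$, so
\[
\pi(\cdot) = \Pi(\cdot) - \tfrac{1}{q}(q\bar\lambda+\tr A)^2 = \|\lambda^o\|^2 + 2\sum_i\lambda_i^o A_{ii} + \sum_{i,j}W_{ij}(\lambda)A_{ij}^2 - \tfrac{1}{q}\tr(A)^2 + O(\|A\|^3),
\]
using $\|\lambda\|^2 = q\bar\lambda^2+\|\lambda^o\|^2$ and $\sum_i\lambda_i=q\bar\lambda$. Now $\sum_i\lambda_i^o A_{ii} = \sum_i\lambda_i^o A_{ii}^o$ because $\sum_i\lambda_i^o=0$, and a direct expansion using $W_{ii}(\lambda)=1$ yields $\sum_{i,j}W_{ij}(\lambda)A_{ij}^2 - \sum_{i,j}W_{ij}(\lambda)(A_{ij}^o)^2 = \tr(A)^2/q$. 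Substituting closes the identity.
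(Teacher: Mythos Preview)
Your proposal is correct and follows essentially the same route as the paper: expand $\|\log(\cdot)\|^2$ using Corollary~\ref{cor:g-derivatives.log}, identify the quadratic coefficient via \eqref{eq:J(x,y,z)}, and pass to $\pi$ by subtracting $(\log\det)^2/q$. The only cosmetic difference is that the paper finishes the off-diagonal computation by invoking $(\lambda_i-\lambda_j)J(\lambda_i,\lambda_j)=\mu_i-\mu_j$ directly, whereas you recast things via $a=(\lambda_i-\lambda_j)/2$, $b=(\lambda_i+\lambda_j)/2$ and hyperbolic functions; both arrive at $a/\tanh(a)=(\mu_i+\mu_j)/(2J(\lambda_i,\lambda_j))$. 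One notational quibble: your definition of $C_{ij}$ with the ``$|_{z=i}$'' device is a little opaque---what actually enters $\langle D(\lambda),L_2\rangle$ is the diagonal $L_2(A)_{ii}$, giving $C_{ij}=1-2J(\lambda_i,\lambda_j,\lambda_i)\mu_j/J(\lambda_i,\lambda_j)^2$; your expression agrees only because $J(\cdot,\cdot,\cdot)$ is fully symmetric, so it would be cleaner to write that directly.
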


\noindent
An alternative expression for $W_{ij}(\lambda)$ is
\[
	W_{ij}(\lambda)
	\ = \ \frac{(\lambda_i - \lambda_j)/2}{\tanh((\lambda_i - \lambda_j)/2)}
\]
with the convention $0/\tanh(0) := 1$.

\begin{proof}[\bf Proof of Lemma~\ref{lem:derivatives.exp.log}]
It is wellknown that the mapping $\exp : \Rqqsym \to \Rqqsympd$ is bijective with inverse function $\log : \Rqqsympd \to \Rqqsym$. Moreover, the exponential mapping is continuously differentiable with derivative $G_A(\cdot)$ at $A \in \Rqqsym$, where $G_A(\cdot)$ denotes the linear mapping
\[
	\Rqqsym \ni \Delta
	\ \mapsto \ G_A(\Delta) := \int_0^1 \exp((1-u)A) \Delta \exp(A) \, du ,
\]
see \cite{Duembgen_etal_2015}. By means of the spectral representation $A = V D(\lambda) V^\top$ one may write
\begin{align*}
	G_A(\Delta) \
	&= \ \int_0^1
		V \exp((1-u)D(\lambda)) V^\top \Delta V \exp(u D(\lambda)) V^\top \, du \\
	&= \ V \, \int_0^1 \bigl( e_{}^{(1 - u)\lambda_i + u \lambda_j} \,
		v_i^\top \Delta v_j^{} \bigr)_{i,j=1}^q \, du \, V^\top \\
	&= \ V \,
		\bigl( J(\lambda_i,\lambda_j) \, v_i^\top\Delta v_j^\top \bigr)_{i,j=1}^q
		\, V^\top .
\end{align*}
Since $J(x,y) > 0$ for arbitrary $x,y \in \R$, this representation shows that $G_A(\cdot)$ is a non-singular linear transformation of $\Rqqsym$ with inverse
\[
	G_A^{-1}(\Delta) \ = \ V \, \Bigl(
		\frac{v_i^\top \Delta v_j^{}}{J(\lambda_i,\lambda_j)}
			\Bigr)_{i,j=1}^q \, V^\top .
\]
By the inverse function theorem, the function $\log : \Rqqsympd \to \Rqqsym$ is also continuously differentiable with
\[
	\log(B + \Delta) \ = \ \log(B)
		+ G_{\log(B)}^{-1}(\Delta) + o(\|\Delta\|)
	\quad\text{as} \ \Delta \to 0 ,
\]
and $G_{\log(B)}^{-1}(\Delta) = G_{A}^{-1}(\Delta)$.
\end{proof}

\begin{proof}[\bf Proof of Lemma~\ref{lem:Lipschitz.exp.log}]
We first prove the inequalities for $\exp(B) - \exp(A)$. With $\Delta := B - A$ it follows from Lemma~\ref{lem:derivatives.exp.log} and its proof that
\[
	\exp(B) - \exp(A)
	\ = \ \int_0^1 \frac{d}{dt} \exp(A + t\Delta) \, dt
	\ = \ \int_0^1 G_{A + t\Delta}(\Delta) \, dt .
\]
Writing $A + t\Delta = V D(\lambda) V^\top$ with a vector $\lambda \in \R^q$ and a matrix $V = [v_1, v_2, \ldots, v_q] \in \Rqqorth$,
\[
	G_{A + t\Delta}(\Delta)
	\ = \ V \, \bigl( J(\lambda_i,\lambda_j) \tilde{\Delta}_{ij} \bigr)_{i,j=1}^q \, V^\top
\]
with $\tilde{\Delta} := V^\top \Delta V$. On the one hand, the latter representation of $G_{A + t\Delta}(\Delta)$ and \eqref{ineq:J(x,y)} imply that
\[
	\|G_{A + t\Delta}(\Delta)\|^2
	\ = \ \sum_{i,j=1}^q J(\lambda_i,\lambda_j)^2 \tilde{\Delta}_{ij}^2
	\ \le \ e_{}^{2\lambda_{\rm max}(A + t\Delta)} \|\tilde{\Delta}\|^2
	\ = \ e_{}^{2\lambda_{\rm max}(A + t\Delta)} \|\Delta\|^2 ,
\]
and
\[
	\lambda_{\rm max}(A + t\Delta)
	\ = \ \max_{v \in \R^q \,:\, \|v\|=1} v^\top ((1 - t) A + tB) v
	\ \le \ (1 - t) \lambda_{\rm max}(A) + t \lambda_{\rm max}(B) .
\]
Consequently,
\begin{align*}
	\|\exp(B) - \exp(A)\| \
	&\le \ \int_0^1 \|G_{A + t\Delta}(\Delta)\| \, dt \\
	&\le \ \int_0^1 e_{}^{(1 - t)\lambda_{\rm max}(A) + t \lambda_{\rm max}(B)} \, dt
		\, \|\Delta\| \\
	&= \ J(\lambda_{\rm max}(A),\lambda_{\rm max}(B)) \|\Delta\|
		\ \le \ e_{}^{\max\{\lambda_{\rm max}(A),\lambda_{\rm max}(B)\}} \|\Delta\| .
\end{align*}
On the other hand, the explicit representation of $G_{A + t\Delta}(\Delta)$ and \eqref{ineq:J(x,y)} imply that
\[
	\langle G_{A + t\Delta}(\Delta), \Delta\rangle
	\ = \ \sum_{i,j=1}^q J(\lambda_i,\lambda_j) \tilde{\Delta}_{ij}^2
	\ \ge \ e_{}^{\lambda_{\rm min}(A + t\Delta)} \|\tilde{\Delta}\|^2
	\ = \ e_{}^{\lambda_{\rm min}(A + t\Delta)} \|\Delta\|^2 ,
\]
and
\[
	\lambda_{\rm min}(A + t\Delta)
	\ = \ \min_{v \in \R^q \,:\, \|v\|=1} v^\top ((1 - t) A + tB) v
	\ \ge \ (1 - t) \lambda_{\rm min}(A) + t \lambda_{\rm min}(B) .
\]
Hence
\begin{align*}
	\|\exp(B) - \exp(A)\| \
	&\ge \ \|\Delta\|^{-1} \langle \exp(B) - \exp(A), \Delta\rangle \\
	&= \ \|\Delta\|^{-1} \int_0^1 \langle G_{A + t\Delta}(\Delta), \Delta\rangle \, dt \\
	&\ge \ \int_0^1 e_{}^{(1 - t)\lambda_{\rm min}(A) + t \lambda_{\rm min}(B)} \, dt
		\, \|\Delta\| \\
	&= \ J(\lambda_{\rm min}(A),\lambda_{\rm min}(B)) \|\Delta\|
		\ \ge \ e_{}^{\min\{\lambda_{\rm min}(A),\lambda_{\rm min}(B)\}} \|\Delta\| .
\end{align*}

The inequalities for $\exp(B) - \exp(A)$ imply the inequalities for $\log(B) - \log(A)$, because $\tilde{A} := \log(A)$ and $\tilde{B} := \log(B)$ satisfy $A = \exp(\tilde{A})$, $\lambda_{{\rm min}/{\rm max}}(\tilde{A}) = \log \lambda_{{\rm min}/{\rm max}}(A)$ and $B = \exp(\tilde{B})$, $\lambda_{{\rm min}/{\rm max}}(\tilde{B}) = \log \lambda_{{\rm min}/{\rm max}}(B)$.
\end{proof}

\begin{proof}[\bf Proof of Lemma~\ref{lem:derivatives.exp.log.2}]
As shown in \cite{Duembgen_etal_2015},
\[
	\exp(A + \Delta) \ = \ \exp(A) + G_A(\Delta) + H_A(\Delta) + O(\|\Delta\|^3) ,
\]
where $G_A(\Delta)$ is defined as in the proof of Lemma~\ref{lem:derivatives.exp.log}, and
\[
	H_A(\Delta) \ := \ \int_{\{u \in [0,1]^2 : u_1 + u_2 \le 1\}}
		\exp(u_0 A) \Delta \exp(u_1 A) \Delta \exp(u_2 A) \, du
\]
with $u_0 := 1 - u_1 - u_2$. In the special case of a diagonal matrix $A = D(\lambda)$, the matrix $G_A(\Delta)$ equals $\bigl( J(\lambda_i,\lambda_j) \Delta_{ij} \bigr)_{i,j=1}^q$, and the matrix $\exp(u_0 A) \Delta \exp(u_1 A) \Delta \exp(u_2 A)$ may be written as
\[
	\sum_{z=1}^q \bigl(
		\exp(u_0 \lambda_i + u_1 \lambda_z + u_2 \lambda_j)
			\Delta_{iz} \Delta_{zj} \bigr)_{i,j=1}^q ,
\]
so
\[
	H_A(\Delta) \ = \ \sum_{z=1}^q \bigl( J(\lambda_i,\lambda_z,\lambda_j)
		\Delta_{iz} \Delta_{zj} \bigr)_{i,j=1}^q .
\]
This proves the second order Taylor expansion for $\exp(A + \Delta)$.

Concerning the expansion of $\log(B + \Delta)$ with $B = \exp(A) = D(\mu)$, we determine a matrix $E = E(A,\Delta) \in \Rqqsym$ such that
\[
	\exp(A + E) \ = \ B + \Delta + O(\|\Delta\|^3) .
\]
To this end, recall that
\[
	\exp(A + E)
	\ = \ B + G_A(E) + H_A(E) + O(\|E\|^3)
\]
as $E \to 0$. Thus we set
\[
	E \ := \ G_A^{-1}(\Delta) - G_A^{-1} \bigl( H_A(G_A^{-1}(\Delta)) \bigr)
\]
and note that
\begin{align*}
	G_A^{-1}(\Delta) \
	&= \ \Bigl(
		\frac{\Delta_{ij}}{J(\lambda_i,\lambda_j)} \Bigr)_{i,j=1}^q
		\ = \ O(\|\Delta\|) , \\
	G_A^{-1} \bigl( H_A(G_A^{-1}(\Delta)) \bigr) \
	&= \ \sum_{z=1}^q \Bigl(
		\frac{J(\lambda_i,\lambda_z,\lambda_j) \, \Delta_{iz}\Delta_{zj}}
			{J(\lambda_i,\lambda_j) J(\lambda_i,\lambda_z) J(\lambda_z,\lambda_j)}
				\Bigr)_{i,j=1}^q
		\ = \ O(\|\Delta\|^2) ,
\end{align*}
so
\[
	E \ = \ G_A^{-1}(\Delta) + O(\|\Delta\|^2) \ = \ O(\|\Delta\|) .
\]
Moreover, one can easily verify that $H_A \bigl( G_A^{-1}(\Delta) + O(\|\Delta\|^2) \bigr) = H_A(G_A^{-1}(\Delta)) + O(\|\Delta\|^3)$, whence
\begin{align*}
	\exp(A + E) \
	&= \ B + G_A(E) + H_A(E) + O(\|\Delta\|^3) \\
	&= \ B + G_A(E) + H_A(G_A^{-1}(\Delta)) + O(\|\Delta\|^3) \\
	&= \ B + \Delta + O(\|\Delta\|^3) .
\end{align*}
In other words,
\[
	\exp(A + E) - \exp(\log(B + \Delta)) \ = \ O(\|\Delta\|^3) .
\]
But now it follows from Lemma~\ref{lem:Lipschitz.exp.log} and the continuity of eigenvalues that
\[
	\log(B + \Delta) \ = \ A + E + O(\|\Delta\|^3) .
\]\\[-5ex]
\end{proof}

\begin{proof}[\bf Proof of Corollary~\ref{cor:g-derivatives.log}]
This expansion follows essentially from Lemma~\ref{lem:derivatives.exp.log.2} with
\begin{align*}
	\Delta \
	&= \ D(\mu)^{1/2} (\exp(A) - I_q) D(\mu)^{1/2} \\
	&= \ D(\mu)^{1/2}A D(\mu)^{1/2} + 2^{-1} D(\mu)^{1/2} A^2 D(\mu)^{1/2} + O(\|A\|^3) \\
	&= \ \bigl( \sqrt{\mu_i\mu_j} \, A_{ij} \bigr)_{i,j=1}^q
		+ 2_{}^{-1} \sum_{z=1}^q \bigl( \sqrt{\mu_i\mu_j} \, A_{iz}A_{zj} \bigr)_{i,j=1}^q
		+ O(\|A\|^3) \\
	&= \ \bigl( \sqrt{\mu_i\mu_j} \, A_{ij} \bigr)_{i,j=1}^q
		+ O(\|A\|^2) \ = \ O(\|A\|) .
\end{align*}
So $D(\mu)^{1/2} \exp(A) D(\mu)^{1/2} = D(\mu) + \Delta$, and the Taylor expansion in Lemma~\ref{lem:derivatives.exp.log.2} involves matrices with entries
\begin{align*}
	\frac{\Delta_{ij}}{J(\lambda_i,\lambda_j)} \
	&= \ \frac{\sqrt{\mu_i\mu_j} \, A_{ij}}{J(\lambda_i,\lambda_j)}
		+ \sum_{z=1}^q
			\frac{\sqrt{\mu_i\mu_j} \, A_{iz}A_{zj}}
				{J(\lambda_i,\lambda_j)}
			\cdot \frac{1}{2}
		+ O(\|A\|^3) , \\
	\frac{J(\lambda_i,\lambda_z,\lambda_j) \, \Delta_{iz}\Delta_{zj}}
			{J(\lambda_i,\lambda_j) J(\lambda_i,\lambda_z) J(\lambda_z,\lambda_j)} \
	&= \ \frac{\sqrt{\mu_i\mu_j} \, A_{iz}A_{zj}}
			{J(\lambda_i,\lambda_j)}
		\cdot \frac{J(\lambda_i,\lambda_z,\lambda_j) \mu_z}
			{J(\lambda_i,\lambda_z) J(\lambda_z,\lambda_j)}
		+ O(\|A\|^3) .
\end{align*}\\[-5ex]
\end{proof}

\begin{proof}[\bf Proof of Corollary~\ref{cor:penalties}]
According to Corollary~\ref{cor:g-derivatives.log},
\[
	\log \bigl( D(\mu)^{1/2} \exp(A) D(\mu)^{1/2} \bigr)
	\ = \ D(\lambda) + L(\lambda,A) + Q(\lambda,A) + O(\|A\|^3)
\]
with
\[
	L(\lambda,A)_{ij}
	\, = \, \frac{\sqrt{\mu_i\mu_j} \, A_{ij}}{J(\lambda_i,\lambda_j)}
	\quad\text{and}\quad
	Q(\lambda,A)_{ij}
	\, = \, \sum_{z=1}^q \frac{\sqrt{\mu_i\mu_j} \, A_{iz}A_{zj}}
			{J(\lambda_i,\lambda_j)}
			\Bigl( \frac{1}{2} - \frac{J(\lambda_i,\lambda_z,\lambda_j) \mu_z}
				{J(\lambda_i,\lambda_z) J(\lambda_z,\lambda_j)} \Bigr) .
\]
In particular, since $J(\lambda_i,\lambda_i) = \mu_i$,
\[
	L(\lambda,A)_{ii}
	\ = \ A_{ii}
	\quad\text{and}\quad
	Q(\lambda,A)_{ii}
	\ = \ \sum_{z=1}^q A_{iz}^2
		\Bigl( \frac{1}{2} - \frac{J(\lambda_i,\lambda_z,\lambda_i) \mu_z}
				{J(\lambda_i,\lambda_z)^2} \Bigr) .
\]
Hence
\begin{align*}
	\Pi \bigl( & D(\mu)^{1/2} \exp(A) D(\mu)^{1/2} \bigr) \\
	&= \ \|D(\lambda)\|^2
		+ 2 \langle D(\lambda), L(\lambda,A)\rangle
		+ \|L(\lambda,A)\|^2
		+ 2 \langle D(\lambda), Q(\lambda,A)\rangle
		+ O(\|A\|^3) \\
	&= \ \|\lambda\|^2
		+ 2 \sum_{i=1}^q \lambda_i A_{ii}
		+ \|L(\lambda,A)\|^2
		+ 2 \langle D(\lambda), Q(\lambda,A)\rangle
		+ O(\|A\|^3) .
\end{align*}
Moreover,
\begin{align*}
	\|L(\lambda,A)\|^2
		+ 2 \langle D(\lambda), Q(\lambda,A)\rangle
	&= \ \sum_{i,j=1}^q \frac{\mu_i\mu_j A_{ij}^2}{J(\lambda_i,\lambda_j)^2}
		+ \sum_{i,z=1}^q A_{iz}^2 \lambda_i
			\Bigl( 1 - \frac{2 J(\lambda_i,\lambda_z,\lambda_i) \mu_z}
				{J(\lambda_i,\lambda_z)^2} \Bigr) \\
	&= \ \sum_{i,j=1}^q W_{ij}(\lambda) A_{ij}^2
\end{align*}
with
\begin{align*}
	W_{ij}(\lambda) \
	:=& \ \frac{\mu_i\mu_j}{J(\lambda_i,\lambda_j)^2}
		+ \lambda_i \Bigl( \frac{1}{2} - \frac{J(\lambda_i,\lambda_j,\lambda_i) \mu_j}
				{J(\lambda_i,\lambda_j)^2} \Bigr)
		+ \lambda_j \Bigl( \frac{1}{2} - \frac{J(\lambda_j,\lambda_i,\lambda_j) \mu_i}
				{J(\lambda_i,\lambda_j)^2} \Bigr)
\end{align*}
Now we have to show that
\begin{equation}
\label{eq:Wij}
	W_{ij}(\lambda)
	\ = \ \frac{\mu_i + \mu_j}{2 J(\lambda_i,\lambda_j)}
	\ \ge \ 1 .
\end{equation}
The inequality is just a consequence of \eqref{ineq:J(x,y)}. In case of $\lambda_i = \lambda_j$, the equation in \eqref{eq:Wij} follows from $J(\lambda_i,\lambda_i) = \mu_i$ and $J(\lambda_i,\lambda_i,\lambda_i) = \mu_i/2$, and here $W_{ij}(\lambda) = 1$. In case of $\lambda_i \ne \lambda_j$ we use \eqref{eq:J(x,y,z)} and obtain
\begin{align*}
	W_{ij}(\lambda) \
	&= \ \frac{\mu_i\mu_j}{J(\lambda_i,\lambda_j)^2}
		+ \lambda_i \Bigl( \frac{1}{2} - \frac{(\mu_i - J(\lambda_i,\lambda_j)) \mu_j}
				{(\lambda_i - \lambda_j) J(\lambda_i,\lambda_j)^2} \Bigr)
		+ \lambda_j \Bigl( \frac{1}{2} - \frac{(J(\lambda_i,\lambda_j) - \mu_j) \mu_i}
				{(\lambda_i - \lambda_j) J(\lambda_i,\lambda_j)^2} \Bigr) \\
	&= \ \lambda_i \Bigl( \frac{1}{2} + \frac{\mu_j}
				{(\lambda_i - \lambda_j) J(\lambda_i,\lambda_j)} \Bigr)
		+ \lambda_j \Bigl( \frac{1}{2} - \frac{\mu_i}
				{(\lambda_i - \lambda_j) J(\lambda_i,\lambda_j)} \Bigr) \\
	&= \ \lambda_i \Bigl( \frac{1}{2} + \frac{\mu_j}{\mu_i - \mu_j} \Bigr)
		+ \lambda_j \Bigl( \frac{1}{2} - \frac{\mu_i}{\mu_i - \mu_j} \Bigr) \\
	&= \ \frac{\mu_i + \mu_j}{2 J(\lambda_i,\lambda_j)} .
\end{align*}

Concerning the function $\pi(\cdot)$, note first that
\[
	\log \det \bigl( D(\mu)^{1/2} \exp(A) D(\mu)^{1/2} \bigr)
	\ = \ \sum_{i=1}^q \lambda_i + \tr(A)
	\ = \ q \bar{\lambda} + \tr(A) ,
\]
so
\begin{align*}
	\pi \bigl( & D(\mu)^{1/2} \exp(A) D(\mu)^{1/2} \bigr) \\
	&= \ \Pi \bigl( D(\mu)^{1/2} \exp(A) D(\mu)^{1/2} \bigr)
		- q \bar{\lambda}^2 - 2 \bar{\lambda} \tr(A) - \tr(A)^2/q \\
	&= \ \|\lambda\|^2 + 2 \sum_{i=1}^q \lambda_i A_{ii}
		+ \sum_{i,j=1}^q W_{ij}(\lambda) A_{ij}^2
		- q \bar{\lambda}^2 - 2 \bar{\lambda} \tr(A) - \tr(A)^2/q + O(\|A\|^3) \\
	&= \ \|\lambda^o\|^2 + 2 \sum_{i=1}^q \lambda_i^o A_{ii}^o
		+ \sum_{i,j=1}^q W_{ij}(\lambda) (A_{ij}^o)^2
		+ O(\|A\|^3) .
\end{align*}
The last step follows from elementary algebra and the facts that $W_{ii}(\lambda) = 1$ and $A_{ij}^o = A_{ij}$ whenever $i \ne j$.
\end{proof}

\subsection{Proof of Theorem~\ref{thm:geodesics}}

The following arguments are similar to the ones of \cite{Bhatia_2007}. In case of $\Sigma_0 = \Sigma_1$, the assertion is trivial, so we only consider the case $\Sigma_0 \ne \Sigma_1$. Without loss of generality let $\Sigma_0 = I_q$, otherwise consider the path $M_B$ with $B = \Sigma_0^{-1/2}$. Now let
\[
	A \ := \ \|\log(\Sigma_1)\|^{-1} \log(\Sigma_1) .
\]
Then we may write
\begin{align*}
	\|\log(\Sigma_1)\| \
	&= \ \bigl\langle A, \log(M(1)) - \log(M(0)) \bigr\rangle \\
	&= \ \int_0^1 \Bigl\langle A, \frac{d}{dt} \log(M(t)) \Bigr\rangle \, dt \\
	&\le \ \int_0^1 \Bigl\| \frac{d}{dt} \log(M(t)) \Bigr\| \, dt
\end{align*}
by virtue of the Cauchy-Schwarz inequality.

Equality holds in the latter display if, and only if, the derivative of $\log(M(t))$ is a non-negative multiple of $A$ for almost all $t \in [0,1]$. Since $\log(M(t))$ is continuously differentiable by assumption, we may rephrase this as
\[
	\frac{d}{dt} \log(M(t))
	\ = \ \dot{u}(t) \log(\Sigma_1)
\]
for some bounded function $\dot{u} : [0,1] \to [0,\infty)$ with at most finitely many discontinuities. Since $\log(M(0)) = 0$ and $\log(M(1)) = \log(\Sigma_1)$, we know that $u(t) := \int_0^t \dot{u}(s) \, ds$ defines a nondecreasing, piecewise continuously differentiable function $u : [0,1] \to \R$ with $u(0) = 0$, $u(1) = 1$ and $M(t) = \exp(u(t) \log(\Sigma_1))$ for $t \in [a,b]$. Note also that in this special case
\[
	\dot{M}(t) \ = \ \dot{u}(t) \log(\Sigma_1) M(t)
	\ = \ M(t)_{}^{1/2} \bigl( \dot{u}(t) \log(\Sigma_1) \bigr) M(t)_{}^{1/2} ,
\]
so $L(M) = \|\log(\Sigma_1)\|$.

Hence it suffices to show that for a general path $M$ and any $t \in [0,1]$,
\[
	\Bigl\| \frac{d}{dt} \log(M(t)) \Bigr\|
	\ \le \ \|\dot{M}(t)\|_{M(t)} .
\]
To this end we write $M(t) = V D(\mu) V^\top$ with an orthogonal matrix $V = [v_1,v_2,\ldots,v_q] \in \Rqq$ and a vector $\mu \in \R_+^q$. Then it follows from Lemma~\ref{lem:derivatives.exp.log} that
\[
	\frac{d}{dt} \log(M(t))
	\ = \ V
		\Bigl( \frac{v_i^\top \dot{M}(t) v_j^{}}{J(\log\mu_i,\log\mu_j)}
			\Bigr)_{i,j=1}^q V^\top
\]
with $\sqrt{\mu_i\mu_j} \le J(\log\mu_i,\log\mu_j) \le (\mu_i + \mu_j)/2$. On the other hand,
\[
	M(t)^{-1/2} \dot{M}(t) M(t)^{-1/2}
	\ = \ V
		\Bigl( \frac{v_i^\top \dot{M}(t) v_j^{}}{\sqrt{\mu_i \mu_j}}
			\Bigr)_{i,j=1}^q V^\top .
\]
Consequently,
\begin{align*}
	\Bigl\| \frac{d}{dt} \log(M(t)) \Bigr\|^2 \
	&= \ \sum_{i,j=1}^q
		\frac{(v_i^\top \dot{M}(t) v_j^{})^2}{J(\log\mu_i,\log\mu_j)^2} \\
	&\le \ \sum_{i,j=1}^q
		\frac{(v_i^\top \dot{M}(t) v_j^{})^2}{\sqrt{\mu_i\mu_j}^2} \\
	&= \ \bigl\| M(t)^{-1/2} \dot{M}(t) M(t)^{-1/2} \bigr\|^2
		\ = \ \|\dot{M}(t)\|_{M(t)}^2 .
\end{align*}\\[-5ex]
\strut \hfill $\Box$

\subsection{Basic considerations about convexity and smoothness}

\begin{Lemma}[A criterion for convexity]
\label{lem:convexity}
Let $\TT$ be a real interval and $f : \TT \to \R$ such that for any fixed $t \in \TT$ there exist real numbers $g(t), h(t)$ such that
\[
	f(t + \delta) \ = \ f(t) + g(t) \delta + h(t) \delta^2/2 + o(\delta^2)
	\quad\text{as} \ \delta \to 0 .
\]
If $h(t) \ge 0$ for all $t \in \TT$, then $f$ is convex. If $h(t) > 0$ for all $t \in \TT$, then $f$ is strictly convex.
\end{Lemma}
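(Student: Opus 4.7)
The plan is to first prove the strict case directly, via a local-maximum argument, and then to deduce the non-strict case by a small quadratic perturbation.

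First I would note that the assumed second order expansion at $t$ implies, in particular, a first order expansion, so $f$ is differentiable at every $t \in \TT$ with $f'(t) = g(t)$; in particular $f$ is continuous on $\TT$. This continuity will be used to guarantee the existence of the maximizers appearing below.

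For the strict case, suppose towards contradiction that $f$ is not strictly convex. Then there exist $a < b < c$ in $\TT$ such that, writing $L$ for the affine function on $[a,c]$ with $L(a) = f(a)$ and $L(c) = f(c)$, we have $f(b) \ge L(b)$. Set $\phi := f - L$ on $[a,c]$, so that $\phi(a) = \phi(c) = 0$ and $\phi(b) \ge 0$. By continuity $\phi$ attains its maximum on $[a,c]$ at some point $t^*$, which by $\phi(b) \ge 0$ can be chosen in the open interval $(a,c)$ (taking $t^* = b$ if $\phi(b) = 0$ and the max happens to be attained only on $\{a,b,c\}$). Since $\phi$ inherits the second order expansion of $f$ with the same $h$, one has
\[
\phi(t^* + \delta) \ = \ \phi(t^*) + (g(t^*) - L') \delta + h(t^*) \delta^2/2 + o(\delta^2)
\]
as $\delta \to 0$. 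The local maximum property forces the linear coefficient $g(t^*) - L'$ to vanish (otherwise $\phi(t^* + \delta) > \phi(t^*)$ for some sign of $\delta$), and then forces $h(t^*) \le 0$, contradicting $h(t^*) > 0$. Hence $f$ is strictly convex.

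For the non-strict case, for $\epsilon > 0$ I would define $f_\epsilon(t) := f(t) + \epsilon t^2 /2$. A direct computation shows that $f_\epsilon$ satisfies the hypothesis with linear coefficient $g(t) + \epsilon t$ and quadratic coefficient $h_\epsilon(t) := h(t) + \epsilon > 0$. By the strict case just proved, $f_\epsilon$ is strictly convex on $\TT$. Since $f_\epsilon(t) \to f(t)$ pointwise as $\epsilon \to 0^+$, and the convexity inequality $f_\epsilon((1-\lambda) s + \lambda t) \le (1-\lambda) f_\epsilon(s) + \lambda f_\epsilon(t)$ survives pointwise limits, $f$ itself is convex on $\TT$.

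The only mildly delicate point is arranging for the maximizer $t^*$ of $\phi$ to be an interior point of $[a,c]$ so that the two-sided expansion applies; this is handled by the case distinction above (either some interior value of $\phi$ is strictly positive, in which case the max is automatically interior, or $\phi$ peaks at $b \in (a,c)$ with value $0$). Everything else is elementary once the strict case is in hand.
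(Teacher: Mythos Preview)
Your argument is correct and takes a genuinely different route from the paper. You argue by contradiction via a local maximum: if strict convexity fails, the gap function $\phi = f - L$ attains an interior maximum on $[a,c]$, and the second-order expansion there forces $h(t^*) \le 0$. The non-strict case then follows by the quadratic perturbation $f_\epsilon = f + \epsilon t^2/2$ and passing to the limit. The paper instead uses a bisection argument: for the midpoint second difference $h(t_0,t_1,t_2) := 8\bigl((f(t_0)+f(t_2))/2 - f(t_1)\bigr)/(t_2-t_0)^2$ it shows that at least one of the three half-length subtriples has a second difference no larger, and iterating produces a sequence of nested triples shrinking to a point $t_*$ with $h(t_0,t_1,t_2) \ge h(t_*)$. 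Your approach is more elementary and reads as a direct second-derivative test, at the price of splitting into strict and non-strict cases and needing the perturbation trick; the paper's argument handles both cases in one stroke and yields the slightly sharper pointwise comparison $h(t_0,t_1,t_2) \ge \inf_{[t_0,t_2]} h$ without any perturbation, but the nested-interval machinery is a bit heavier. Both are valid.
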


\begin{Remark}
The second order Taylor expansion in Lemma~\ref{lem:convexity} implies that $f(t + \delta) = f(t) + g(t)\delta + o(\delta)$ as $\delta \to 0$. Thus $f$ is differentiable with $f' = g$. However, it does \textsl{not} imply that $f$ is twice differentiable. As a counterexample consider $\TT = \R$ and
\[
	f(x) \ := \ \begin{cases}
		0 & \text{for} \ x = 0 , \\
		x^3 \sin(1/x^2) & \text{for} \ x \ne 0 .
	\end{cases}
\]
This function $f$ is obviously infinitely often differentiable on $\R \setminus \{0\}$, and $f(\delta) = O(\delta^3)$ as $\delta \to 0$, so $g(0) = h(0) = 0$. But for $x \ne 0$, the first derivative $f'(x) = 3 x^2 \sin(1/x^2) - 2 \cos(1/x^2)$ has no limit as $x \to 0$.
\end{Remark}

\begin{proof}[\bf Proof of Lemma~\ref{lem:convexity}].
Since $f$ is continuous, it suffices to show that for arbitrary points $t_0 < t_2$ in $\TT$ and their midpoint $t_1 := (t_0 + t_2)/2$, the value $f(t_1)$ is not greater than (strictly smaller than) $(f(t_0) + f(t_2))/2$. Note that there exists a unique quadratic function $g = g_{t_0,t_1,t_2} : \R \to \R$ such that $g(t_j) = f(t_j)$ for $j = 0,1,2$, namely,
\[
	g(t) \ = \ f(t_0) + (t - t_0) \frac{f(t_2) - f(t_0)}{t_2 - t_0}
		- (t - t_0)(t_2 - t) h(t_0,t_1,t_2) / 2
\]
with
\[
	h(t_0,t_1,t_2) \ := \ \frac{8}{(t_2 - t_0)^2}
		\Bigl( \frac{f(t_0) + f(t_2)}{2} - f(t_1) \Bigr) .
\]
Note also that $g''(t) = h(t_0,t_1,t_2)$ for all $t$. But $h(t_0,t_1,t_2)$ is greater or equal to the minimum of $h(t_0',t_1',t_2')$ when $(t_0',t_1',t_2')$ runs through $(t_0,t_{0.5},t_1)$, $(t_{0.5},t_1,t_{1.5})$ and $(t_1, t_{1.5}, t_2)$ with the midpoints $t_{0.5} := (t_0+t_1)/2$ and $t_{1.5} := (t_1 + t_2)/2$. For if $f(t_{0.5}) > g(t_{0.5})$, then $h(t_0,t_{0.5},t_1) < h(t_0,t_1,t_2)$, and if $f(t_{1.5}) > g(t_{1.5})$, then $h(t_1,t_{1.5},t_2) < h(t_0,t_1,t_2)$. But $f(t_{0.5}) \le g(t_{0.5})$ and $f(t_{1.5}) \le g(t_{1.5})$ together imply that $h(t_{0.5},t_1,t_{1.5}) \le h(t_0,t_1,t_2)$.

Consequently there exist triplets $(t_{n,0}, t_{n,1}, t_{n,2})$ for $n = 0,1,2,\ldots$ such that $(t_{0,0},t_{0,1},t_{0,2}) = (t_0,t_1,t_0)$, and
\[
	\left.\begin{array}{cl}
		h(t_{n,0},t_{n,1},t_{n,2})
			& \text{is non-increasing} \\
		t_{n,0}
			& \text{is non-decreasing} \\
		t_{n,2}
			& \text{is non-increasing}
	\end{array}\right\}
	\ \text{in} \ n
\]
with $t_{n,1} = (t_{n,0} + t_{n,2})/2$ and $t_{n,2} - t_{n,0} = 2^{-n} (t_2 - t_0)$. In particular, the three sequences $(t_{n,0})_n$, $(t_{n,1})_n$ and $(t_{n,2})_n$ converge to the same point $t_* \in [t_0,t_2]$, and
\[
	f(t_{n,j}) \ = \ f(t_*) + g(t_*) (t_{n,j} - t_*) + h(t_*) (t_{n,j} - t_*)^2/2
		+ o((t_{n,2} - t_{n,0})^2)
\]
for $j=0,1,2$. But then elementary calculations show that
\[
	\lim_{n \to \infty} h(t_{n,0},t_{n,1},t_{n,2})
	\ = \ h(t_*) ,
\]
whence $h(t_0,t_1,t_2) \ge h(t_*)$.
\end{proof}

Existence of second order Taylor expansions is equivalent to twice continuous differentiability, provided that the quadratic term depends continuously on the location:

\begin{Lemma}[2nd order Taylor expansions and differentiability]
\label{lem:2nd.order.Taylor}
Let $\Omega$ be an open subset of $\R^d$, and let $f : \Omega \to \R$ have the following property: For each $x \in \Omega$ there exist a vector $g(x) \in \R^d$ and a matrix $H(x) \in \R^{d\times d}_{\rm sym}$ such that
\[
	f(x + v) \ = \ f(x) + g(x)^\top v + 2^{-1} v^\top H(x) v + o(\|v\|^2)
	\quad\text{as} \ v \to \infty .
\]
Further suppose that $H : \Omega \to \R^{d\times d}_{\rm sym}$ is continuous. Then $f$ is twice continuously differentiable with $g_i(x) = \partial f(x) / \partial x_i$ and $H_{ij}(x) = \partial^2 f(x) / (\partial x_i \partial x_j)$.
\end{Lemma}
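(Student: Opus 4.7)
The first-order part follows immediately: since $o(\|v\|^2) = o(\|v\|)$, the hypothesis implies that $f$ is differentiable throughout $\Omega$ with $\nabla f(x) = g(x)$. The remaining task is to show that $g$ is itself differentiable at every $x \in \Omega$ with Jacobian $H(x)$; together with the assumed continuity of $H$, this yields $g \in C^1$ and hence $f \in C^2$ with $\nabla f = g$ and $D^2 f = H$, as asserted.

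Fix $x \in \Omega$ and set $R(h) := f(x+h) - f(x) - g(x)^\top h - \tfrac12 h^\top H(x) h$. The hypothesis gives that for every $\epsilon > 0$ there exists $\delta(\epsilon, x) > 0$ with $|R(h)| \le \epsilon \|h\|^2$ for $\|h\| < \delta(\epsilon, x)$, and since $\nabla f = g$, the function $R$ is differentiable with $\nabla R(h) = g(x+h) - g(x) - H(x) h =: \phi(h)$. The goal is to show $\phi(h) = o(\|h\|)$. Combining the hypothesis expansion at the base point $x+h$ with the definition of $R$ yields the auxiliary second-order expansion
\begin{equation*}
R(h+v) - R(h) = \phi(h)^\top v + \tfrac12 v^\top [H(x+h) - H(x)] v + r(h, v),
\end{equation*}
valid for each fixed $h$ as $v \to 0$, with $r(h, v) = o(\|v\|^2)$.

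To extract a quantitative bound on $\|\phi(h)\|$, take $v := -s\,\phi(h)/\|\phi(h)\|$ for a scale $s > 0$ (assuming $\phi(h) \ne 0$); this makes $\phi(h)^\top v = -s\|\phi(h)\|$. Using $|R(h)|, |R(h+v)| \le \epsilon(\|h\|^2 + s^2)$-type bounds together with $\eta(h) := \|H(x+h) - H(x)\|_{\rm op} \to 0$ by continuity of $H$, and dividing by $s$, one obtains
\begin{equation*}
\|\phi(h)\| \le \frac{3\epsilon \|h\|^2}{s} + \Bigl(2\epsilon + \tfrac{\eta(h)}{2}\Bigr) s + \frac{|r(h, v)|}{s}.
\end{equation*}
For each fixed $h$ with $\|h\|$ small, select $s$ below the ($h$-dependent) threshold on which $|r(h,v)| \le \mu s^2$ is valid for a chosen $\mu > 0$; balancing $\epsilon\|h\|^2/s$ against $(\epsilon + \eta(h) + \mu)\,s$ by the choice $s \sim \|h\|\sqrt{\epsilon/(\epsilon + \eta(h) + \mu)}$ yields $\|\phi(h)\|/\|h\| \le C\sqrt{\epsilon(\epsilon + \eta(h) + \mu)}$ for a universal constant $C$. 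Letting first $\mu \to 0$ (along a sequence compatible with the smallness constraint on $s$), then $h \to 0$ (so that $\eta(h) \to 0$), and finally $\epsilon \to 0$ forces $\|\phi(h)\|/\|h\| \to 0$.

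The principal obstacle is the non-uniformity of $r(h, v)$: the $o(\|v\|^2)$ estimate coming from the pointwise expansion at $x+h$ holds only as $v \to 0$ for each fixed $h$, with no a priori uniform control in $h$. This is handled by choosing the scale $s$ beneath the $h$-dependent threshold at which $|r(h, v)| \le \mu \|v\|^2$ holds, and letting $\mu \to 0$ via a diagonal procedure as $h \to 0$. The continuity of $H$, entering through $\eta(h) \to 0$, is indispensable: without it the quadratic correction $\tfrac12 v^\top[H(x+h) - H(x)] v$ cannot be absorbed, and indeed the explicit counterexample in the preceding remark shows that pointwise existence of second-order expansions alone need not even force twice differentiability.
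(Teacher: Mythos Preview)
Your approach has a genuine gap at the crucial step. The limiting procedure you propose does not work: for fixed $h$, the threshold $\sigma(h,\mu)$ on which $|r(h,v)| \le \mu s^2$ holds is determined by the pointwise expansion at $x+h$ and may be arbitrarily small compared with $\|h\|$. Your balancing choice $s \sim \|h\|\sqrt{\epsilon/(\epsilon+\eta(h)+\mu)}$ is of order $\|h\|$, so there is no reason it lies below $\sigma(h,\mu)$. Worse, sending $\mu \to 0$ for fixed $h$ generally forces $\sigma(h,\mu) \to 0$ and hence $s \to 0$, which makes the term $3\epsilon\|h\|^2/s$ blow up rather than vanish. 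No diagonal procedure repairs this, because the inequality you derived only controls $\|\phi(h)\|$ if $s$ can be taken of order $\|h\|$ \emph{while still} satisfying the remainder constraint---and the pointwise hypothesis gives no relation whatsoever between $\sigma(h,\mu)$ and $\|h\|$.

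The paper resolves exactly this non-uniformity by a two-step argument you are missing. First it treats $d=1$ separately: subtracting $c(x-t)^2/2$ with $c$ the local infimum (respectively supremum) of $h$ renders $f$ convex (respectively concave) on a neighborhood, by the convexity criterion of Lemma~\ref{lem:convexity}; monotonicity of the derivative of a convex function then squeezes $(g(x)-g(t))/(x-t)$ between those bounds, forcing $g'(t) = h(t)$. This one-dimensional $C^2$ conclusion is the key, because it upgrades the merely pointwise remainder to an \emph{exact} integral remainder along every segment:
\[
  f(y+w) \ = \ f(y) + g(y)^\top w + \tfrac{1}{2} w^\top H(x) w
    + \int_0^1 w^\top\bigl(H(y+sw) - H(x)\bigr) w \,(1-s)\, ds ,
\]
which is bounded by $\|w\|^2$ times the modulus of continuity of $H$ at $x$---a bound \emph{uniform} in $y$ near $x$. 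With this uniformity in hand, a symmetric second difference $f(x+v+ru) - f(x+v) - f(x+ru) + f(x)$ directly yields $u^\top g(x+v) - u^\top g(x) = u^\top H(x) v + o(\|v\|)$. You correctly diagnosed the obstacle; the missing idea is that one must first earn $C^2$ regularity along lines so that the pointwise $o(\|v\|^2)$ can be replaced by an integral formula with uniform control.
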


\begin{proof}[\bf Proof of Lemma~\ref{lem:2nd.order.Taylor}]
We start with dimension $d = 1$. For $t \in \Omega$ and $\delta > 0$ let $c_0$ be the infimum and $c_1$ the supremum of $h$ on $\Omega(t,\delta) := [t \pm \delta] \cap \Omega$. Now we apply Lemma~\ref{lem:convexity} to $\tilde{f}_j(x) := f(x) - c_j (x-t)^2/2$ and $\Omega(t,\delta)$ in place of $f$ and $\TT$, respectively. Note that
\[
	\tilde{f}_j(x+s) \ = \ \tilde{f}_j(x) + \tilde{g}_j(x)s + \tilde{h}_j(x)s^2/2
		+ o(s^2) \quad\text{as} \ s \to 0 ,
\]
where $\tilde{g}_j(x) = g(x) + c_j(x - t)$ and $\tilde{h}_j(x) = h(x) - c_j$. This shows that $\tilde{f}_0$ is convex and $\tilde{f}_1$ is concave on $\Omega(t,\delta)$. In particular, $\tilde{g}_0(x)$ is non-decreasing and $\tilde{g}_1(x)$ is non-increasing in $x \in \Omega(t,\delta)$. Thus we may conclude that
\[
	\frac{g(x) - g(t)}{x - t}
	\ = \ \frac{\tilde{g}_j(x) - \tilde{g}_j(t) + c_j(x - t)}{x - t}
	\ = \ \frac{\tilde{g}_j(x) - \tilde{g}_j(t)}{x - t} + c_j
	\ \in \ [c_0,c_1]
\]
for $x \in \Omega(t,\delta) \setminus \{t\}$. Letting $\delta \downarrow 0$ shows that
\[
	g'(t) = f''(t) = h(t) .
\]

Now we consider dimension $d \ge 2$. We have to show that for any point $x \in \Omega$ and any fixed unit vector $u \in \R^d$,
\[
	u^\top g(x + v) - u^\top g(x) \ = \ u^\top H(x) v + o(\|v\|)
	\quad\text{as} \ v \to 0 .
\]
Our assumption on $f$ and the result for the one-dimentional case imply that for arbitrary $y \in \Omega$ and $w \in \R^d$, the function $t \mapsto f(y + tw)$ is twice continuously differentiable on the set $\{t \in \R : y + tw \in \Omega\}$. Now for our given $x \in \Omega$ and $y,w \in \R^d$ with sufficiently small norms $\|y - x\|$ and $\|w\|$ we may write
\[
	f(y + w) \ = \ f(y) + g(y)^\top w + 2^{-1} w^\top H(x) w + \rho(x,y,w)
\]
with
\[
	\rho(x,y,w) \ := \ \int_0^1 w^\top \bigl( H(y + sw) - H(x) \bigr) w \, ds .
\]
Note that
\[
	|\rho(x,y,w)| \ \le \ \|w\|^2 R(x,\|y - x\| + \|w\|)
\]
where
\[
	R(x,\delta) \ := \ 2^{-1} \sup_{z \in \Omega \,:\, \|z - x\| \le \delta}
		\bigl\| H(z) - H(x) \bigr\| .
\]
Consequently, for any unit vector $u \in \R^d$ and any vector $v \in \R^d$ with sufficiently small norm $r := \|v\| > 0$,
\begin{align*}
	u^\top g(x+v)
		& - u^\top g(x) \\
	= \ r^{-1} \bigl(
		& (ru)^\top g(x + v) - (ru)^\top g(x) \bigr) \\
	= \ r^{-1} \Bigl(
		& f(x + v + ru) - f(x + v) - 2^{-1} (ru)^\top H(x) (ru) - \rho(x,x+v,ru) \\
	 	& - \ f(x + ru) + f(x) + 2^{-1} (ru)^\top H(x) (ru) + \rho(x,x,ru) \Bigr) \\
	= \ r^{-1} \Bigl(
		& f(x + v + ru) - f(x + v) - f(x + ru) + f(x) \\
		& \quad - \ \rho(x,x+v,ru) + \rho(x,x,ru) \Bigr) \\
	= \ r^{-1} \bigl(
		& f(x + v + ru) - f(x + v) - f(x + ru) + f(x) \bigr)
			+ o(r) \quad\text{as} \ r = \|v\| \to 0 ,
\end{align*}
because
\[
	|\rho(x,x+v,ru)| + |\rho(x,x,ru)|
	\ \le \ 2 r^2 R(x,2r)
	\ = \ o(r^2) .
\]
If we write each term $f(x + w)$ as $f(x) + g(x)^\top w + 2^{-1} w^\top H(x) w + r(x,x,w)$, then elementary algebra shows that
\begin{align*}
	r^{1} \bigl(
		& f(x + v + ru) - f(x + v) - f(x + ru) + f(x) \bigr) \\
		&= \ u^\top H(x) v
			+ r^{-1} \bigl( \rho(x,x,v + ru) - \rho(x,x,v) - \rho(x,x,ru) \bigr) \\
		&= \ u^\top H(x) v
			+ o(r) \quad\text{as} \ r \to 0 ,
\end{align*}
because
\[
	|\rho(x,x,v + ru)| + |\rho(x,x,v)| + |\rho(x,x,ru)|
	\ \le \ 4 r^2 R(x,2r) + 2 r^2 R(x,r)
	\ = \ o(r) .
\]\\[-5ex]
\end{proof}

\subsection{Further Proofs for Section~\ref{sec:G-Convexity}}

\begin{proof}[\bf Proof of Lemma~\ref{lem:smoothness1}]
If $f$ is differentiable, then for arbitrary $\Sigma \in \Rqqsympd$ and $\Delta \in \Rqqsym$ with $\Sigma + \Delta \in \Rqqsympd$,
\[
	f(\Sigma + \Delta) \ = \ f(\Sigma) + \langle \Delta,\nabla f(\Sigma)\rangle
		+ o(\|\Delta\|)
	\quad\text{as} \ \Delta \to 0.
\]
This implies that for $B \in \Rqqns$ and $A \in \Rqqsym$,
\begin{align*}
	f(B \exp(A) B^\top) \
	&= \ f(BB^\top + BAB^\top + O(\|A\|^2)) \\
	&= \ f(BB^\top) + \langle BAB^\top, \nabla f(BB^\top)\rangle + o(\|A\|) \\
	&= \ f(BB^\top) + \langle A, B^\top \nabla f(BB^\top) B \rangle + o(\|A\|)
\end{align*}
as $A \to 0$. Hence Condition~(S1.ii) is satisfied with $G(B) = B^\top \nabla f(BB^\top) B$.

If $f$ satisfies Condition~(S1.ii), then for arbitrary $\Sigma \in \Rqqsympd$ and $\Delta \in \Rqqsym$ with $\Sigma + \Delta \in \Rqqsympd$,
\[
	f(\Sigma + \Delta)
	\ = \ f \bigl( \Sigma^{1/2} \exp(A) \Sigma^{1/2} \bigr)
\]
with
\[
	A \ := \ \log(I_q + \Sigma^{-1/2} \Delta \Sigma^{-1/2})
	\ = \ \Sigma^{-1/2}\Delta\Sigma^{-1/2} + O(\|\Delta\|^2)
\]
as $\Delta \to 0$, whence
\begin{align*}
	f(\Sigma + \Delta) \
	&= \ f(\Sigma) + \langle A, G(\Sigma^{1/2})\rangle + o(\|A\|) \\
	&= \ f(\Sigma) + \langle \Delta, \Sigma^{-1/2} G(\Sigma^{1/2}) \Sigma^{-1/2} \rangle
		+ o(\|\Delta\|)
\end{align*}
as $\Delta \to 0$. Thus $f$ is differentiable with gradient $\nabla f(\Sigma) = \Sigma^{-1/2} G(\Sigma^{1/2}) \Sigma^{-1/2}$ at $\Sigma$.
\end{proof}

\begin{proof}[\bf Proof of Lemma~\ref{lem:smoothness2}]
Suppose first that $f$ is twice continuously differentiable. This implies that for $\Sigma \in \Rqqsympd$ and $\Delta \in \Rqqsym$ with sufficiently small norm $\|\Delta\|$,
\begin{equation}
\label{eq:Taylor2.trad}
	f(\Sigma + \Delta)
	\ = \ f(\Sigma) + \langle \Delta, \nabla f(\Sigma)\rangle
		+ 2^{-1} Q(\Delta, \Sigma) + o(\|\Delta\|^2)
\end{equation}
with the quadratic form $Q(\Delta,\Sigma) := \langle \Delta, D^2 f(\Sigma) \Delta\rangle$.
This implies that for $B \in \Rqqns$ and $A \in \Rqqsym$,
\begin{align*}
	f(B \exp(A) B^\top) \
	=& \ f(BB^\top + BAB^\top + 2^{-1} BA^2B^\top + O(\|A\|^3)) \\
	=& \ f(BB^\top) + \langle BAB^\top, \nabla f(BB^\top)\rangle
		+ 2^{-1} \langle BA^2B^\top, \nabla f(BB^\top)\rangle \\
	 & + \ 2^{-1} Q(BAB^\top, BB^\top) + o(\|A\|^2)
\end{align*}
as $A \to 0$. Hence Condition~(S2.ii) is satisfied with
\[
	H(A,B) \ := \ \langle A^2, G(B)\rangle + Q(BAB^\top, BB^\top) .
\]

Now suppose that $f$ satisfies Condition~(S2.ii). Then for arbitrary $\Sigma \in \Rqqsympd$ and $\Delta \in \Rqqsym$ with $\Sigma + \Delta \in \Rqqsympd$,
\[
	f(\Sigma + \Delta)
	\ = \ f \bigl( \Sigma^{1/2} \exp(A) \Sigma^{1/2} \bigr)
\]
with
\[
	A \ := \ \log(I_q + \Sigma^{-1/2} \Delta \Sigma^{-1/2})
	\ = \ \Sigma^{-1/2}\Delta\Sigma^{-1/2}
		- 2^{-1} \Sigma^{-1/2} \Delta \Sigma^{-1} \Delta \Sigma^{-1/2}
		+ O(\|\Delta\|^3)
\]
as $\Delta \to 0$, whence
\begin{align*}
	f(\Sigma + \Delta) \
	=& \ f(\Sigma) + \langle A, G(\Sigma^{1/2})\rangle
		+ 2^{-1} H(A, \Sigma^{1/2}) + o(\|A\|^2) \\
	=& \ f(\Sigma)
		+ \langle \Delta, \Sigma^{-1/2} G(\Sigma^{1/2}) \Sigma^{-1/2} \rangle
		- 2^{-1} \langle \Delta \Sigma^{-1}\Delta,
			\Sigma^{-1/2} G(\Sigma^{1/2}) \Sigma^{-1/2} \rangle \\
	 & + \ 2^{-1} H(\Sigma^{-1/2}\Delta\Sigma^{-1/2}, \Sigma^{1/2} \rangle
		+ o(\|\Delta\|^2)
\end{align*}
as $\Delta \to 0$. Hence $f$ admits a Taylor expansion \eqref{eq:Taylor2.trad} with $\nabla f(\Sigma) = \Sigma^{-1/2} G(\Sigma^{1/2}) \Sigma^{-1/2}$ and
\[
	Q(\Delta,\Sigma) \
	:= \ H(\Sigma^{-1/2}\Delta\Sigma^{-1/2}, \Sigma^{1/2})
		- \langle \Delta\Sigma^{-1}\Delta, \nabla f(\Sigma)\rangle .
\]
Moreover, this is continuous in $\Sigma \in \Rqqsympd$ for any fixed $\Delta$. Now we may conclude from Lemma~\ref{lem:2nd.order.Taylor} with $d = q(q+1)/2$ and $\Omega = \Rqqsympd$ that $f$ is indeed twice continuously diffrentiable.
\end{proof}

\subsection{Further Proofs for Section~\ref{sec:Regularized.scatter}}

\begin{proof}[\bf Proof of Proposition~\ref{prop:existence}]
We use essentially the same arguments as \cite{Duembgen_etal_2015}. With $h(t) := \rho(e^t)$ we may write $h'(t\,+) = \psi(e^t)$ and
\[
	\rho(x^\top \Sigma^{-1} x) - \rho(\|x\|^2)
	\ = \ \int_{\log(\|x\|^2)}^{\log(x^\top \Sigma^{-1}x)} h'(t\,+) \, dt
	\ = \ \int_{\log(\|x\|^2)}^{\log(x^\top \Sigma^{-1}x)} \psi(e^t) \, dt
\]
for $x \in \R^q \setminus \{0\}$. Since $\lambda_{\rm max}(\Sigma)^{-1} \le x^\top \Sigma^{-1} x / \|x\|^2 \le \lambda_{\rm min}(\Sigma)^{-1}$, we may conclude that
\[
	\bigl| \rho(x^\top \Sigma^{-1} x) - \rho(\|x\|^2) \bigr|
	\ \le \ \log(\lambda) \psi(\lambda \|x\|^2)
\]
with $\lambda := \max\{\lambda_{\rm min}(\Sigma)^{-1}, \lambda_{\rm max}(\Sigma)\} \ge 1$. This shows that Condition~\eqref{eq:existence} is sufficient for integrability of $\rho(x^\top \Sigma^{-1} x) - \rho(\|x\|^2)$ with respect to $Q$.

On the other hand, suppose that $\int \bigl| \rho(x^\top \Sigma^{-1} x) - \rho(\|x\|^2) \bigr| \, Q(dx) < \infty$ for any $\Sigma \in \Rqqsympd$. This implies that $\rho(\lambda_2 \|x\|^2) - \rho(\lambda_1 \|x\|^2)$ is integrable with repect to $Q$ for arbitrary $\lambda_2 > \lambda_1 > 0$. But
\[
	\rho(\lambda_2 \|x\|^2) - \rho(\lambda_1 \|x\|^2)
	\ = \ \int_{\log(\lambda_1 \|x\|^2)}^{\log(\lambda_2 \|x\|^2)} h'(t\,+) \, dt
	\ \ge \ \log(\lambda_2/\lambda_1) \psi(\lambda_1 \|x\|^2) ,
\]
so \eqref{eq:existence} has to be satisfied.

Finally, if $\rho'(\cdot\,+)$ is non-increasing on $\R_+$, then $\psi(\lambda s) = \lambda s \rho'(\lambda s\,+) \le \lambda s \rho'(s\,+) = \lambda \psi(s)$ for arbitrary $\lambda \ge 1$ and $s > 0$. Thus Condition~\eqref{eq:existence} is equivalent to $\int \psi(\|x\|^2) \, Q(dx)$ being finite.
\end{proof}

\begin{proof}[\bf Proof of Theorem~\ref{thm:Mfunc}]
It follows from Proposition~\ref{prop:existence} that $L_\rho(\Sigma,Q)$ is well-defined in $\R$ for arbitrary $\Sigma \in \Rqqsympd$.  For any fixed $R > 1$, the inequalities $\lambda_{\rm min}(\Sigma) \ge R^{-1}$ and $\lambda_{\rm max}(\Sigma) \le R$ imply that
\[
	\bigl| \rho(x^\top \Sigma^{-1} x) - \rho(\|x\|^2) \bigr|
	\ \le \ \log(R) \psi(R \|x\|^2) .
\]
Hence, by dominated convergence, $L_\rho(\Sigma,Q)$ is continuous in $\Sigma \in \Rqqsympd$. Geodesic convexity of $L_\rho(\cdot,Q)$ follows from examples~\ref{ex0} and \ref{exlog}.

Now the question is under which conditions on $\rho$, $Q$, $B \in \Rqqns$ and $\gamma \in \R^q \setminus \{0\}$, the function $t \mapsto L_\rho(B D(e^{-t\gamma}) B^\top,Q)$ is strictly convex on $\R$. With $Q_B = \LL(B^{-1} X)$, $X \sim Q$, one may write
\begin{align*}
	L_\rho(& B D(e^{-t\gamma}) B^\top,Q) - L_\rho(BB^\top,Q) \\
	&= \ \int_{\R^q \setminus \{0\}}
		\bigl[ \rho(x^\top D(e^{t\gamma})x) - \rho(\|x\|^2) \bigr] \, Q_B(dx)
		- t \sum_{i=1}^q \gamma_i .
\end{align*}
Moreover, with $h(t) := \rho(e^t)$ and $g_x(t) := \log(x^\top D(e^{t\gamma})x)$ for fixed $x \ne 0$,
\[
	\rho(x^\top D(e^{t\gamma}) x) \ = \ h(g_x(t)) .
\]
As mentioned in example~\ref{exlog}, $g_x$ is convex with
\[
	g_x'(t) \ = \ \sum_{i=1}^q p_i \gamma_i
	\quad\text{and}\quad
	g_x''(t) \ = \ \sum_{i=1}^q p_i \gamma_i^2
		- \Bigl( \sum_{i=1}^q p_i \gamma_i \Bigr)^2 ,
\]
where $p_i = x_i^2 e^{t\gamma_i} \big/ \sum_{j=1}^q x_j^2 e^{t\gamma_j}$. Hence $g_x$ is strictly convex unless $x$ belongs to
\[
	\V(\gamma_o) \ := \ \{x \in \R^q : x_i = 0 \ \text{if} \ \gamma_i \ne \gamma_o\}
\]
for some value $\gamma_o \in \{\gamma_1,\ldots,\gamma_q\}$. In the latter case, $g$ is linear with slope $\gamma_o$.

As to part~(a), suppose that $\rho(s)$ is strictly g-convex in $s > 0$, which is equivalent to $h$ being strictly convex and strictly increasing. Then $h \circ g_x$ is strictly convex unless $g_x$ is constant, i.e.\ $x \in \V(0)$. Consequently, $L_\rho(B D(e^{-t\gamma}),Q)$ is strictly convex in $t \in \R$, unless $Q(B\V(0)) = 1$. But $\gamma \ne 0$ implies that $\dim(B\V(0)) < q$. On the other hand, suppose that $Q(\V) = 1$ for some linear subspace $\V \subset \R^q$ with dimension $d < q$. If we choose $B = [b_1,\ldots,b_q]$ such that $b_1, \ldots, b_d$ form a basis of $\V$ and $\gamma := (1_{[i > d]})_{i=1}^q$, then $L(B D(e^{-t\gamma}) B^\top, Q)$ is linear in $t \in \R$.

As to part~(b), it suffices to consider matrices $B$ with $\det(B) = \pm 1$ and vectors $\gamma \in \R^q \setminus \{0\}$ with $\sum_{j=1}^q \gamma_j = 0$. Here $h' \equiv q$, so the function $h \circ g_x$ is strictly convex if, and only if, $g_x$ is strictly convex. The latter condition is true, unless $x$ lies in the union of the linear subspaces $\V(\gamma_o)$, $\gamma_o \in \{\gamma_1,\ldots,\gamma_q\}$. Hence $L_\rho(B D(e^{-t\gamma}),Q)$ is strictly convex in $t \in \R$, unless $Q \bigl( \bigcup_{\gamma_o} B \V(\gamma_o) \bigr) = 1$. The latter condition implies that $Q(\V \cup \W) = 1$ with $\V := B \V(\gamma_o)$ and $\W := B(\V(\gamma_o)^\perp)$ and $\gamma_o$ an arbitrary number in $\{\gamma_1,\ldots,\gamma_q\}$. On the other hand, suppose that $Q(\V \cup \W) = 1$ for linear subspaces $\V, \W \subset \R^q$ with respective dimensions $d, e \in [1,q)$ such that $\V \cap \W = \{0\}$. Now we take $B = [b_1, \ldots, b_q]$ such that $\V = \mathrm{span}(b_i : 1 \le i \le d)$, $\W = \mathrm{span}(b_i : d < i \le d+e)$ and $\det(B) = 1$. Further let $\gamma_i := 1_{[i \le d]}/d - 1_{[d < i \le d+e]} / e$. Then $L(B D(e^{-t\gamma}) B^\top, Q)$ is linear in $t \in \R$ while $\det(B D(e^{-t\gamma}) B^\top) \equiv 1$.
\end{proof}

\begin{proof}[\bf Proof of Proposition~\ref{prop:g-coercivity}]
We argue similarly as in the proof of Proposition~5.5 in \cite{Duembgen_etal_2015}. Note first that $L_\rho(\exp(tA), Q) = L_\rho(D(e^{-t\gamma},Q_U)$ with the transformed distribution $Q_U = \LL(U^\top X)$, $X \sim Q$. Thus it suffices to consider the case $A = D(-\gamma)$ and $U = I_q$, so $\V_j = \{x \in \R^q : x_i \ \text{for} \ i > j\}$. For real numbers $t < u$,
\begin{align*}
	& \frac{L_\rho(D(e^{-u\gamma}),Q) - L_\rho(D(e^{-t\gamma}),Q)}{u - t} \\
	& \ = \ \int_{\R^q \setminus \{0\}}
		\frac{\rho(x^\top D(e^{u\gamma}) x) - \rho(x^\top D(e^{t\gamma}) x)}{u - t}
			\, Q(dx) - \sum_{j=1}^q \gamma_j .
\end{align*}
For any fixed $x \in \R^q \setminus \{0\}$ we may write $\rho(x^\top D(e^{t\gamma})x) = h(g_x(t))$, where $h(t) := \rho(e^t)$ and $g_x(t) := \log(x^\top D(e^{t\gamma})x)$. As mentioned in the proof of Theorem~\ref{thm:Mfunc}, the function $h \circ g_x$ is convex. Thus
\[
	\frac{h(g_x(u)) - h(g_x(t))}{u - t}
	\ \in \ \bigl[ h(g_x(t)) - h(g_x(t-1)), h(g_x(t+1)) - h(g_x(t)) \bigr]
\]
for $u \in (t,t+1]$, and
\[
	\eta_x(t) \ := \ \lim_{u \to t\,+} \frac{h(g_x(u)) - h(g_x(t))}{u - t}
\]
is well-defined and non-decreasing in $t \in \R$. Hence by dominated convergence and monotone convergence,
\[
	\lim_{t \to \infty} \ \lim_{u \to t\,+} \,
		\frac{L_\rho(D(e^{-u\gamma}),Q) - L_\rho(D(e^{-t\gamma}),Q)}{u - t}
	\ = \ \int_{\R^q \setminus \{0\}}
		\lim_{t \to \infty} \eta_x(t) \, Q(dx) - \sum_{j=1}^q \gamma_j .
\]
Now we partition $\R^q \setminus \{0\}$ as $\bigcup_{j=1}^q \V_j \setminus \V_{j-1}$. For $x \in \V_j \setminus \V_{j-1}$,
\[
	x^\top D(e^{t\gamma}) x \ = \ \sum_{i=1}^j x_i^2 e_{}^{t\gamma_i}
	\ \to \ \begin{cases}
		\infty & \text{if} \ \gamma_j > 0 \\
		\sum_{i=1}^j x_i^2 1_{[\gamma_i = 0]} & \text{if} \ \gamma_j = 0 \\
		0 & \text{if} \ \gamma_j < 0
	\end{cases}
\]
and
\[
	g_x'(t)
	\ = \ \sum_{i=1}^j x_i^2 e_{}^{t\gamma_i} \gamma_i
		\big/ \sum_{i=1}^j x_i^2 e_{}^{t\gamma_i}
	\ \to \ \gamma_j
\]
as $t \to \infty$. Hence
\begin{align*}
	\lim_{t \to \infty} \eta_x(t) \
	&= \ \lim_{t \to \infty} \bigl\{ h'(g_x(t) \, +) g_x'(t)^+
		- h'(g_x(t) \, -) g_x(t)^- \bigr\} \\
	&= \ \lim_{t \to \infty} \bigl\{ \psi(x^\top D(e^{t\gamma}) x \, +) g_x'(t)^+
		- \psi(x^\top D(e^{t\gamma}) x \, -) g_x(t)^- \bigr\} \\
	&= \ \psi(\infty) \gamma_j^+ - \psi(0 \, +) \gamma_j^- .
\end{align*}
All in all we obtain the asserted limit \eqref{eq:g-coercivity}.

With $\gamma_0 := 0$ we may write $\gamma_j^+ = \sum_{k=0}^{j-1} (\gamma_{k+1}^+ - \gamma_k^+)$, and all summands $\gamma_{k+1}^+ - \gamma_k^+$ are non-negative. Hence in the special case that $\psi(0\,+) = 0$ the limit \eqref{eq:g-coercivity} equals
\begin{align*}
	\sum_{j=1}^q & Q(\V_j\setminus\V_{j-1}) \psi(\infty) \gamma_j^+
		- \sum_{j=1}^q \gamma_j \\
	&= \ \sum_{j=1}^q Q(\V_j\setminus\V_{j-1}) \psi(\infty) \gamma_j^+
		- \sum_{j=1}^q \gamma_j^+
		+ \sum_{j=1}^q \gamma_j^- \\
	&= \ \sum_{j=1}^q Q(\V_j\setminus\V_{j-1}) \psi(\infty)
		\sum_{k=0}^{j-1} (\gamma_{k+1}^+ - \gamma_k^+)
		- \sum_{j=1}^q \sum_{k=0}^{j-1} (\gamma_{k+1}^+ - \gamma_k^+)
		+ \sum_{j=1}^q \gamma_j^- \\
	&= \ \sum_{k=0}^{q-1} (1 - Q(\V_k)) \psi(\infty) (\gamma_{k+1}^+ - \gamma_k^+)
		- \sum_{k=0}^{q-1} (q - k) (\gamma_{k+1}^+ - \gamma_k^+)
		+ \sum_{j=1}^q \gamma_j^- \\
	&= \ \sum_{k=0}^{q-1} \bigl( (1 - Q(\V_k)) \psi(\infty) - q + k \bigr)
			(\gamma_{k+1}^+ - \gamma_k^+)
		+ \sum_{j=1}^q \gamma_j^- .
\end{align*}
In the special case of $\rho(s) = q \log s$ for $s > 0$, $\psi \equiv q$ on $\R_+$, so the limit \eqref{eq:g-coercivity} equals
\begin{align*}
	\sum_{j=1}^q & Q(\V_j\setminus\V_{j-1}) q \gamma_j
		- \sum_{j=1}^q \gamma_j \\
	&= \ q \sum_{k=1}^{q-1} Q(\V_k) \gamma_k
		- q \sum_{k=1}^{q-1} Q(\V_k) \gamma_{k+1}
		+ q \gamma_q
		- \sum_{j=1}^q \gamma_j
		 - q Q(\{0\}) \gamma_1 \\
	&= \ - q \sum_{k=1}^{q-1} Q(\V_k) (\gamma_{k+1} - \gamma_k)
		+ \sum_{j=1}^{q-1} (\gamma_q - \gamma_j)
		- q Q(\{0\}) \gamma_1 \\
	&= \ - q \sum_{k=1}^{q-1} Q(\V_k) (\gamma_{k+1} - \gamma_k)
		+ \sum_{j=1}^{q-1} \sum_{k=j}^{q-1} (\gamma_{k+1} - \gamma_k)
		- q Q(\{0\}) \gamma_1 \\
	&= \ - q \sum_{k=1}^{q-1} Q(\V_k) (\gamma_{k+1} - \gamma_k)
		+ \sum_{k=1}^{q-1} k (\gamma_{k+1} - \gamma_k)
		- q Q(\{0\}) \gamma_1 \\
	&= \ q \sum_{k=1}^{q-1} (k/q - Q(\V_k)) (\gamma_{k+1} - \gamma_k)
		- q Q(\{0\}) \gamma_1 .
\end{align*}\\[-5ex]
\end{proof}

\begin{proof}[\bf Proof of Theorem~\ref{thm:g-coercivity}]
We start with part~(a). According to Lemma~\ref{lem:g-coercivity} and Proposition~\ref{prop:g-coercivity}~(a), $L_\rho(\cdot,Q)$ is g-coercive on $\Rqqsympd$ if, and only if, it satisfies the following inequalities: For any $U = [u_1,\ldots,u_q] \in \Rqqorth$ and $\gamma \in \R^q \setminus \{0\}$ with $\gamma_1 \le \cdots \le \gamma_q$,
\begin{equation}
\label{ineq:g-coercivity1}
	\sum_{k=0}^{q-1} \bigl( (1 - Q(\V_k)) \psi(\infty) - q + k \bigr)
		(\gamma_{k+1}^+ - \gamma_k^+)
	+ \sum_{j=1}^q \gamma_j^-
	\ > \ 0 ,
\end{equation}
where $\V_0 := \{0\}$ and $\V_j := \mathrm{span}(u_1,\ldots,u_j)$, $1 \le j \le q$, and $\gamma_0 := 0$. If we choose $\gamma = (1_{[i > k]})_{i=1}^q$ for a fixed index $k \in \{0,\ldots,q-1\}$, then the left hand side of \eqref{ineq:g-coercivity1} equals $(1 - Q(\V_k))\psi(\infty) - q + k$ which is positive if, and only if, $Q(\V_k) < 1 - \{q - k\}/\psi(\infty)$. Note also that all differences $\gamma_{k+1}^+ - \gamma_k^+$ are non-negative. This shows that \eqref{ineq:g-coercivity1} is satisfied for arbitrary nonzero vectors $\gamma$ with non-decreasing components if, and only if,
\[
	Q(\V_k) \ < \ 1 - \frac{q - k}{\psi(\infty)}
	\quad\text{for} \ 0 \le k < q .
\]
But since $u_1, u_2, \ldots, u_q$ is an arbitrary orthonormal basis of $\R^q$, these considerations show that g-coercivity of $L_\rho(\cdot,Q)$ is equivalent to \eqref{eq:g-coercivity1} for arbitrary linear subspaces $\V \subset \R^q$ with $0 \le \dim(\V) < q$.

By virtue of Lemma~\ref{lem:existence.minimizers}, g-coercivity of $L_\rho(\cdot,Q)$ guarantees the existence of a minimizer $\Sigma \in \Rqqsympd$ of $L_\rho(\cdot,Q)$. It remains to be shown that this minimizer is unique in case of $\psi$ being strictly increasing on the interval $\{s \ge 0 : \psi(s) < \psi(\infty)\}$.

If the latter interval equals $[0,\infty)$, then the function $\rho(s)$ is strictly g-convex in $s > 0$, so it follows from Theorem~\ref{thm:Mfunc} and Condition~\eqref{eq:g-coercivity1} for arbitrary linear subspaces $\V$ of $\R^q$ with $\dim(\V) < q$ that $L_\rho(\cdot,Q)$ is strictly g-convex. Hence the minimizer $\Sigma$ is unique, see Corollary~\ref{cor:uniqueness.minimizer}.

Now suppose that $\psi(s_o) = \psi(\infty)$ for some $s_o \in \R_+$. Writing $\Sigma = BB^\top$ with $B \in \Rqqns$, it suffices to show that for any fixed $\gamma \in \R^q \setminus \{0\}$, the function $f : \R \to \R$ with
\[
	f(t)
	\ := \ L_\rho(B D(e^{-t\gamma}) B^\top, Q) - L_\rho(BB^\top, Q)
	\ = \ L_\rho(D(e^{-t\gamma}), Q_B)
\]
has a unique minimum at $t = 0$, where $Q_B = \LL(B^{-1} X)$, $X \sim Q$. As shown in the proof of Theorem~\ref{thm:Mfunc}, $f$ is convex, and optimality of $\Sigma = BB^\top$ implies that $f \ge f(0) = 0$. It remains to be shown that
\begin{equation}
\label{eq:optimality.f0}
	f(t) > 0 \ \ \text{whenever} \ t \ne 0 .
\end{equation}
Recall that
\[
	f(t)
	\ = \ \int_{\R^q \setminus \{0\}} \bigl[ h(g_x(t)) - h(g_x(0)) \bigr] \, Q_B(dx)
		- t \sum_{i=1}^q \gamma_i
\]
with $h(u) := \rho(e^u)$ and $g_x(t) := \log(x^\top D(e^{t\gamma})x)$. Since $\psi(s) > 0$ for all $s > 0$, the function $h$ is convex and strictly increasing. Moreover, $g_x$ is strictly convex unless $x$ is an eigenvector of $D(\gamma)$. Thus $f$ is strictly convex, unless
\begin{equation}
\label{eq:optimality.f1}
	Q_B \Bigl( \bigcup_{\gamma_o \in \{\gamma_1,\ldots,\gamma_q\}} \V(\gamma_o) \Bigr)
	\ = \ 1 ,
\end{equation}
where $\V(\gamma_o) := \{x \in \R^q : x_i = 0 \ \text{if} \ \gamma_i \ne \gamma_o\}$. Since $f \ge f(0) = 0$, strict convexity of $f$ implies \eqref{eq:optimality.f0}.

Suppose that \eqref{eq:optimality.f1} is true. Then we may write $f(t) = \sum_{\gamma_o \in \{\gamma_1,\ldots,\gamma_q\}} f_{\gamma_o}(\gamma_o t)$ with
\[
	f_{\gamma_o}(u) \ := \ \int_{\V(\gamma_o)\setminus\{0\}}
		\bigl[ \rho(e^u \|x\|^2) - \rho(\|x\|^2) \bigr] \, Q_B(dx)
		- \dim(\V(\gamma_o)) u .
\]
Note that
\[
	f_{\gamma_o}(u) \ = \ L_\rho(D(e^{-u \tilde{\gamma}}), Q_B)
	\quad\text{with}\quad
	\tilde{\gamma} := (1_{[\gamma_i = \gamma_o]})_{i=1}^q ,
\]
so each function $f_{\gamma_o}$ is convex with $f_{\gamma_o} \ge f_{\gamma_o}(0) = 0$. Consequently it suffices to show that for any $\gamma_o \in \{\gamma_1,\ldots,\gamma_q\}$,
\begin{equation}
\label{eq:optimality.f2}
	f_{\gamma_o}(u) \ > \ 0 \quad\text{for any} \ u \ne 0 .
\end{equation}
Note that $f_{\gamma_o}(u) = 0$ for some $u \ne 0$ would imply that $f_{\gamma_o}'(v\,+) = f_{\gamma_o}'(w\,+) = 0$ for real numbers $v < w$. But
\[
	f_{\gamma_o}(t\,+)
	\ = \ \int_{\V(\gamma_o)\setminus\{0\}} \psi(e^t \|x\|^2) \, Q_B(dx)
		- \dim(\V(\gamma_o)) ,
\]
so
\[
	0 \ = \ \int_{\V(\gamma_o)\setminus\{0\}}
		\bigl[ \psi(e^w \|x\|^2) - \psi(e^v \|x\|^2) \bigr] \, Q_B(dx) .
\]
The strict monotonicity property of $\psi$ would imply that $\psi(e^v \|x\|^2) = \psi(\infty)$ for $Q_B$-almost all $x \in \V(\gamma_o) \setminus \{0\}$. Hence
\begin{align*}
	f_{\gamma_o}(v\,+) \
	&= \ \psi(\infty) Q_B(\V(\gamma_o) \setminus \{0\}) - \dim(\V(\gamma_o)) \\
	&= \ \psi(\infty) (1 - Q_B(\V(\gamma_o)^\perp) - \dim(\V(\gamma_o)) \\
	&> \ \psi(\infty) \frac{q - \dim(\V(\gamma_o)^\perp)}{\psi(\infty)}
		- \dim(\V(\gamma_o)) \\
	&= \ 0 ,
\end{align*}
a contradiction to $f_{\gamma_o}'(v\,+) = 0$. In the latter display we used \eqref{eq:optimality.f1} in the second and \eqref{eq:g-coercivity1} in the third step.

Concerning part~(b), Lemma~\ref{lem:g-coercivity} with the modifications mentioned in Section~\ref{subsec:Scale-invariance} and Proposition~\ref{prop:g-coercivity}~(b) imply that $L_\rho(\cdot,Q)$ is g-coercive on $\M^{(q)}$ if, and only if, it satisfies the following inequalities: For any $U = [u_1,\ldots,u_q] \in \Rqqorth$ and $\gamma \in \R^q \setminus \{0\}$ with $\gamma_1 \le \cdots \le \gamma_q$ and $\sum_{j=1}^q \gamma_j = 0$,
\begin{equation}
\label{ineq:g-coercivity0}
	\sum_{k=1}^{q-1} (k/q - Q(\V_k)) (\gamma_{k+1} - \gamma_k)
	\ > \ 0
\end{equation}
with $\V_k := \mathrm{span}(u_1,\ldots,u_k)$. If we choose $\gamma = (k/q - 1_{[i \le k]})_{i=1}^q$, then the left hand side of \eqref{ineq:g-coercivity0} equals $k/q - Q(\V_k)$. Note also that all differences $\gamma_{k+1} - \gamma_k$ are non-negative. Thus \eqref{ineq:g-coercivity0} is true for arbitrary vectors $\gamma \in \R^q \setminus \{0\}$ with non-decreasing components summing to zero if, and only if, $Q(\V_k) < k/q$ for $1 \le k < q$. Hence g-coercivity of $L_\rho(\cdot,Q)$ on $\M^{(q)}$ is equivalent to \eqref{eq:g-coercivity0} for arbitrary linear subspaces $\V \subset \R^q$ with $1 \le \dim(\V) < q$.

The latter condition implies the assumption in part~(b) of Theorem~\ref{thm:Mfunc}. Thus $L_\rho(\cdot,Q)$ has a unique minimizer on $\M^{(q)}$.
\end{proof}

\begin{proof}[\bf Graphical LASSO and g-convexity]
Note that g-convexity of $\pi(\Sigma) := \sum_{i < j} |(\Sigma^{-1})_{ij}|$ would be equivalent to g-gonvexity of $f(\Sigma) := \pi(\Sigma^{-1}) = \sum_{i<j} |\Sigma_{ij}|$. Now consider
\[
	B \ = \ \begin{bmatrix}
		B_o & 0 \\
		0 & I_{q-2}
	\end{bmatrix}
	\quad\text{with}\quad
	B_o \ = \ \begin{bmatrix}
		1 & -1 \\
		1 & 1
	\end{bmatrix}
\]
and $x = (a,-1,0,\ldots,0)^\top$ with $a > 1$. Then
\[
	f(B D(e^{tx}) B^\top)
	\ = \ |e^{at} - e^{-t}| .
\]
But for $t > 0$, the right hand side equals $h(t) = e^{at} - e^{-t}$ with $h''(t) = a^2 e^{at} - e^{-t} < 0$ for $t < 2 \log(a)/(a-1)$.
\end{proof}

\subsection{Further proofs for Section~\ref{sec:Algorithm}}

\begin{proof}[\bf Proof of Lemma~\ref{lem:algorithm}]
By definition, the sequence $(f(\Sigma_k))_k$ is non-increasing, and $(\Sigma_k)_k$ stays in the compact set $\{\Sigma \in \Rqqsympd : f(\Sigma) \le f(\Sigma_1)\}$. Suppose $(\Sigma_k)_k$ does not converge to $\Sigma_*$. Then there exists a subsequence $(\Sigma_{k(\ell)})_\ell$ with limit $\Sigma_o \ne \Sigma_*$. It follows from continuity of $f$ and monotonicity of $(f(\Sigma_k))_k$ that
\[
	f(\Sigma_o)
	\ = \ \lim_{\ell \to \infty} f(\Sigma_{k(\ell)})
	\ = \ \lim_{\ell \to \infty} f(\Sigma_{k(\ell) + 1})
	\ = \ \lim_{\ell \to \infty} f(\phi(\Sigma_{k(\ell)})) .
\]
But this contradicts our assumption of $\phi$, because
\[
	f(\Sigma_o)
	\ > \ \limsup_{\Sigma \to \Sigma_o} f(\phi(\Sigma))
	\ \ge \ \limsup_{\ell \to \infty} f(\phi(\Sigma_{k(\ell)})) .
\]\\[-5ex]
\end{proof}

\begin{proof}[\bf Proof of Lemma~\ref{lem:remainders}]
Recall that for any function $g \in \mathcal{C}^2([0,1])$,
\[
	g(1) - g(0) - g'(0) \ = \ \int_0^1 \bigl( g'(t) - g'(0) \bigr) \, dt
	\ = \ \int (1 - t) g''(t) \, dt ,
\]
whence
\[
	g(1) - g(0) - g'(0) - g''(0)/2
	\ = \ \int (1 - t) \bigl( g''(t) - g''(0) \bigr) \, dt .
\]
Note that $g(t) := f(B \exp(tA) B^\top)$ defines a function $g \in \mathcal{C}^2([0,1])$ with
\[
	g'(t) \ = \ \langle A, G(B \exp(tA/2)) \rangle
	\quad\text{and}\quad
	g''(t) \ = \ H(A, B \exp(tA/2)) .
\]
Moreover, $f(B \exp(A) B^\top) = g(1)$, $f(\Sigma) = g(0)$, $g'(0) = \langle A, G(B)\rangle$ and $g''(0) = H(A,B)$. But $B = \Sigma^{1/2} V$ for some orthogonal matrix $V \in \Rqq$, and
\begin{align*}
	H(A, B \exp(tA/2)) \
	&= \ H(A, \Sigma^{1/2} \exp(t VAV^\top/2) V) \\
	&= \ H(VAV^\top, \Sigma^{1/2} \exp(t VAV^\top/2)) \\
	&= \ \|A\|^2 H(\tilde{A}, \Sigma^{1/2} \exp(C/2))
\end{align*}
with $\tilde{A} := \|A\|^{-1} VAV^\top$ and $C := t \|A\| \tilde{A}$, so $\|\tilde{A}\| = 1$ and $\|C\| \le \|A\|$; see also Remark~\ref{rem:Orthogonal transformations}. Thus
\[
	g''(t) \ \le \ \|A\|^2 \Lambda_{\rm max}(\Sigma,\|A\|)
	\quad\text{and}\quad
	\bigl| g''(t) - g''(0) \bigr| \ \le \ \|A\|^2 N(\Sigma,\|A\|) .
\]\\[-5ex]
\end{proof}

\end{document}